\newcommand{\ignore}[1]{}
\newcommand{\Pptf}{\calC_{\mathrm{PTF}}}
\newcommand{\dtv}{\mathrm{d}_{\mathrm{TV}}}
\newcommand{\tr}{\mathrm{tr}}
\newcommand{\PTF}{\textsc{PTF-Distinguisher}}
\newcommand{\tx}[1]{\wt{\bx}^{(#1)}}
\newcommand{\ty}[1]{\wt{\by}^{(#1)}}
\newcommand{\mun}{\mu^{\otimes n}}
\newcommand{\vcdim}{\mathrm{VC\text{-}Dim}}
\title{
Detecting Low-Degree Truncation
\vspace{0.5em}
}
\author{
Anindya De\thanks{University of Pennsylvania. Email: \href{mailto:anindyad@cis.upenn.edu}{\texttt{anindyad@cis.upenn.edu}}.}
\and 
Huan Li\thanks{University of Pennsylvania. Email: \href{mailto:huanli@cis.upenn.edu}{\texttt{huanli@cis.upenn.edu}}.}
\and 
Shivam Nadimpalli\thanks{Columbia University. Email: \href{mailto:sn2855@columbia.edu}{\texttt{sn2855@columbia.edu}}.}
\and 
Rocco A. Servedio\thanks{Columbia University. Email: \href{mailto:ras2105@columbia.edu}{\texttt{ras2105@columbia.edu}}.}
\vspace{0.5em}
}
\date{
\today
}
\begin{document}

\pagenumbering{gobble}
\maketitle

\begin{abstract}
We consider the following basic, and very broad, statistical problem:  Given a known high-dimensional distribution ${\cal D}$ over $\R^n$ and a collection of data points in $\R^n$, distinguish between the two possibilities that (i) the data was drawn from ${\cal D}$, versus (ii) the data was drawn from ${\cal D}|_S$, i.e. from ${\cal D}$ subject to truncation by an unknown truncation set $S \subseteq \R^n$.

We study this problem in the setting where ${\cal D}$ is a high-dimensional i.i.d.~product distribution and  $S$ is an unknown degree-$d$ \emph{polynomial threshold function} (one of the most well-studied types of Boolean-valued function over $\R^n$).  Our main results are an efficient algorithm when ${\cal D}$ is a \emph{hypercontractive} distribution, and a matching lower bound:

\begin{itemize}

\item For any constant $d$, we give a polynomial-time algorithm which successfully distinguishes ${\cal D}$ from ${\cal D}|_S$ using $O(n^{d/2})$ samples (subject to mild technical conditions on ${\cal D}$ and $S$);

\item Even for the simplest case of ${\cal D}$ being the uniform distribution over $\bits^n$, we show that for any constant $d$, any distinguishing algorithm for degree-$d$ polynomial threshold functions must use $\Omega(n^{d/2})$ samples.

\end{itemize}
	
\end{abstract}

\newpage
\pagenumbering{arabic}
\setcounter{page}{1}


\section{Introduction}
\label{sec:intro}

One of the most basic and natural ways that a probability distribution ${\cal D}$ can be altered is by \emph{truncating} it, i.e.~conditioning on some subset of possible outcomes. 
Indeed, the study of truncated distributions is one of the oldest topics in probability and statistics: already in the 19th century, Galton \cite{Galton97} attempted to estimate the mean and standard deviation of the running times of horses on the basis of sample data that did not include data for horses that were slower than a particular cutoff value. 
Since the running times were assumed to be normally distributed, this was an early attempt to infer the parameters of an unknown normal distribution given samples from a truncated version of the distribution. 
Subsequent early work by other statistical pioneers applied the method of moments \cite{Pearson02,Lee14} and maximum likehood techniques \cite{Fisher31} to the same problem of estimating the parameters of an unknown univariate normal distribution from truncated samples.
The study of truncation continues to be an active area in contemporary statistics (see  \cite{Schneider86,BC14,Cohen16} for recent books on this topic).

Quite recently, a number of research works in theoretical computer science have tackled various algorithmic problems that deal with \emph{high-dimensional} truncated data.  
Much of this work attempts to \emph{learn} a parametric description of an unknown distribution that has been subject to truncation. 
For example, in \cite{DGTZcolt19} Daskalakis, Gouleakis, Tzamos and Zampetakis gave an efficient algorithm for high-dimensional truncated linear regression, and in \cite{DGTZ} Daskalakis, Gouleakis, Tzamos and Zampetakis gave an efficient algorithm for estimating the mean and covariance of an unknown multivariate normal distribution from truncated samples given access to a membership oracle for the truncation set.
In \cite{FKTcolt20} Fotakis, Kalavasis and Tzamos gave a similar result for the setting in which the unknown background distribution is a product distribution over $\zo^d$ instead of a multivariate normal distribution, and in \cite{Kontonis2019} Kontonis, Tzamos and Zampetakis extended the results of \cite{DGTZ} to the case of an unknown truncation set satisfying certain restrictions.
In summary, the recent work described above has focused on learning (parameter estimation) of truncated normal distributions and product distributions over $\zo^d.$

\medskip

\noindent {\bf This Work:  \emph{Detecting} Truncation.}  
In the current paper, rather than the learning problem we study what is arguably the most basic problem that can be considered in the context of truncated data --- namely,  detecting whether or not truncation has taken place at all.  
A moment's thought shows that some assumptions are required in order for this problem to make sense: for example, if the truncation set is allowed to be arbitrarily tiny (so that only an arbitrarily small fraction of the distribution is discarded by truncation), then it can be arbitrarily difficult to detect whether truncation has taken place.  
It is also easy to see that truncation cannot be detected if the unknown truncation set is allowed to be arbitrarily complex.  
Thus, it is natural to consider a problem formulation in which there is a fixed  class of possibilities for the unknown truncation set; this is the setting we consider.

We note that the truncation detection problem we consider has a high-level resemblance to the standard \emph{hypothesis testing} paradigm in statistics, in which the goal is to distinguish between a ``null hypothesis'' and an ``alternate hypothesis.''  In our setting the null hypothesis corresponds to no truncation of the known distribution having taken place, and the alternate hypothesis is that the known distribution has been truncated by some unknown truncation set belonging to the fixed class of possibilities.  However, there does not appear to be work in the statistics literature which deals with the particular kinds of truncation problems considered in this work, let alone computationally efficient algorithms for those problems.

\medskip

\noindent {\bf Prior Work: Convex Truncation of Normal Distributions.}
Recent work \cite{DNS23-convex} considered the truncation detection problem in a setting where the background distribution ${\cal D}$ is the standard multidimensional normal distribution $N(0,1)^n$ and the truncation set is assumed to be an unknown \emph{convex set} in $\R^n$.
This specific problem formulation enabled the use of a variety of sophisticated tools and results from high-dimensional convex geometry, such as Gaussian isoperimetry and the Brascamp-Lieb inequality \cite{BrascampLieb:76} and extensions thereof due to Vempala \cite{Vempala2010}. Using these tools, \cite{DNS23-convex} gave several different algorithmic results and lower bounds. 
Chief among these were (i) a polynomial-time algorithm that uses $O(n/\eps^2)$ samples and distinguishes the non-truncated standard normal distribution $N(0,1)^n$ from $N(0,1)^n$ conditioned on a convex set of Gaussian volume at most $1-\eps$; and (ii) a $\tilde{\Omega}(\sqrt{n})$-sample lower bound for detecting truncation by a convex set of constant volume.

The results of \cite{DNS23-convex} provide a ``proof of concept'' that in sufficiently well-structured settings it can sometimes be possible to detect truncation in a computationally and statistically efficient way.  
This serves as an invitation for a more general study of truncation detection; in particular, it is natural to ask whether strong structural or geometric assumptions like those made in  \cite{DNS23-convex} (normal distribution, a convex truncation set) are required in order to achieve nontrivial algorithmic results. 
Can efficient algorithms detect truncation for broader classes of ``background'' distributions beyond the standard normal distribution, or for  other natural families of truncations besides convex sets?  This question is the motivation for the current work.





\medskip

\noindent
{\bf This Work:  ``Low-Degree'' Truncation of Hypercontractive Product Distributions.}
In this paper we consider 
\begin{itemize}

\item A broader range of possibilities for the background distribution ${\cal D}$ over $\R^n$, encompassing many distributions which may be either continuous or discrete; and

\item A family of non-convex truncation sets corresponding to \emph{low-degree polynomial threshold functions}.

\end{itemize}

Recall that a Boolean-valued function $f: \R^n \to \zo$ is a \emph{degree-$d$ polynomial threshold function} (PTF) if there is a real multivariate polynomial $p(x)$ with $\deg(p) \leq d$ such that $f(x)=1$ if and only if $p(x) \geq 0.$
Low-degree polynomial threshold functions are a well-studied class of Boolean-valued functions which arise naturally in diverse fields such as computational complexity, computational learning theory, and unconditional derandomization, see e.g.~\cite{BruckSmolensky:92,GL:94,HKM2014,diakonikolas2014average,Kane:Gotsman14,DOSW:11,CDS20soda,DKPZ21,BEHYY22,DKNfocs10,Kane11focs,Kane11ccc,Kane12,KaneMeka13,MZ13,Kane14-subpoly,DDS14,DS14,Kane15,KKL17,KL18,KR18,ST18random,BV19,OST20} among many other references.

Our main results, described in the next subsection, are efficient algorithms and matching information-theoretic lower bounds for detecting truncation by low-degree polynomial threshold functions for a wide range of background distributions and parameter settings. 

\subsection{Our Results}

To set the stage for our algorithmic results, we begin with the following observation: 

\begin{observation} \label{obs:generic-upper-bound}
For any fixed (``known'') background distribution ${\cal P}$ over $\R^n$, $O_d(n^{d}/\eps^2)$ samples from an unknown distribution ${\cal D}$ are sufficient to distinguish (with high probability) between the two cases that (i) ${\cal D}$ is the original distribution ${\cal P}$, versus (ii) ${\cal D}$ is ${\cal P}|_{f}$, i.e. ${\cal P}$ conditioned on  $f^{-1}(1)$, where $f$ is an unknown degree-$d$ PTF satisfying $\Pr_{\bx \sim {\cal P}}[f(\bx)=1] \leq 1-\eps$. 
\end{observation}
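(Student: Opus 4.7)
The plan is to apply the fundamental uniform-convergence theorem of VC theory to the class $\mathcal{C}_{\mathrm{PTF},d}$ of degree-$d$ polynomial threshold functions over $\R^n$. The first ingredient is the VC-dimension bound $\mathrm{VC}(\mathcal{C}_{\mathrm{PTF},d}) \leq D := \binom{n+d}{d} = O_d(n^d)$, which follows from the standard linearization trick: the map $\phi : \R^n \to \R^D$ that sends $x$ to its vector of monomials of degree $\leq d$ carries each degree-$d$ PTF $x \mapsto \mathbf{1}[p(x)\geq 0]$ into a homogeneous halfspace on $\R^D$, and homogeneous halfspaces over $\R^D$ are well known to have VC dimension $D$. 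Plugging this into the standard VC uniform-convergence bound produces an $m = O(n^d/\eps^2)$ such that, for $m$ i.i.d.\ samples $S = (x^{(1)},\dots,x^{(m)})$ drawn from any distribution $\mathcal{D}$, with probability at least $9/10$ \emph{every} $f \in \mathcal{C}_{\mathrm{PTF},d}$ simultaneously satisfies $|\hat{\mu}_S(f) - \mu_{\mathcal{D}}(f)| \leq \eps/3$, where $\hat{\mu}_S(f) := \tfrac{1}{m}\sum_i f(x^{(i)})$ and $\mu_{\mathcal{D}}(f) := \Pr_{\bx \sim \mathcal{D}}[f(\bx)=1]$.

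Given $S$, the distinguisher is the following (information-theoretic) rule: output ``truncated'' iff there exists $f \in \mathcal{C}_{\mathrm{PTF},d}$ satisfying both $\mu_{\mathcal{P}}(f) \leq 1 - \eps$ and $\hat{\mu}_S(f) \geq 1 - \eps/3$. Under the null hypothesis $\mathcal{D} = \mathcal{P}$, uniform convergence forces any $f$ with $\mu_{\mathcal{P}}(f) \leq 1-\eps$ to have $\hat{\mu}_S(f) \leq 1 - 2\eps/3$, so no $f$ can trigger the rule and the algorithm outputs ``not truncated.'' Under the alternative $\mathcal{D} = \mathcal{P}|_{f^*}$ with $\mu_{\mathcal{P}}(f^*) \leq 1 - \eps$, that very $f^*$ has $\hat{\mu}_S(f^*) = 1$ (since every sample lies in $(f^*)^{-1}(1)$), so the rule fires on $f^*$ and outputs ``truncated.''

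I do not anticipate any substantive obstacle: the argument is just VC uniform convergence plus the classical VC-dimension bound for PTFs. The only caveat worth flagging is that the proposed decision rule quantifies over an uncountable class and is therefore not obviously polynomial-time, which is consistent with the observation asserting only a sample-complexity bound; improving the exponent from $n^d$ down to $n^{d/2}$ while keeping the procedure polynomial-time is precisely the content of the paper's main algorithmic results.
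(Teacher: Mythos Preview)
Your proposal is correct and follows essentially the same approach as the paper: both invoke the $O_d(n^d)$ VC-dimension bound for degree-$d$ PTFs together with standard VC uniform convergence, and both use the decision rule ``output truncated iff some low-volume PTF is (essentially) consistent with the sample.'' The only cosmetic differences are that the paper bakes the volume constraint $\vol(f)\leq 1-\eps$ into the definition of the class $\calC$ (rather than into the decision rule) and uses the threshold $\hat{\mu}_S(f)=1$ rather than $\hat{\mu}_S(f)\geq 1-\eps/3$; your remark about computational inefficiency is also explicitly noted in the paper.
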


This is an easy consequence of a standard uniform convergence argument using the well-known fact that the Vapnik-Chervonenkis dimension of the class of all degree-$d$ polynomial threshold functions over $\R^n$ is $O(n^d)$. For the sake of completeness, we give a proof in \Cref{appendix:baseline-distinguisher}.

While the above observation works for any fixed background distribution ${\cal P}$, several drawbacks are immediately apparent. One is that a sample complexity of $O(n^d)$ is quite high, in fact high enough to information-theoretically learn an unknown degree-$d$ PTF; are this many samples actually required for the much more modest goal of merely \emph{detecting} whether truncation has taken place?  A second and potentially more significant issue is that the above VC-based algorithm is computationally highly inefficient, involving a brute-force enumeration over ``all'' degree-$d$ PTFs; for a sense of how costly this may be, recall that even in the simple discrete setting of the uniform distribution over the Boolean cube $\{-1,1\}^n$ there are $2^{\Omega(n^{d+1})}$ distinct degree-$d$ PTFs over  $\{-1,1\}^n$ for constant $d$ \cite[Theorem~2.34]{Saks:93}.  So it is natural to ask whether there exist more efficient (either in terms of running time or sample complexity) algorithms for interesting cases of the truncation detection problem, and to ask about lower bounds for this problem.

On the lower bounds side, it is natural to first consider arguably the simplest case, in which ${\cal P}$ is the uniform distribution ${\cal U}$ over the Boolean hypercube $\bn$.  In this setting we have the following observation:

\begin{observation} \label{obs:simple-lower-bound}
If the truncating PTF $f$ is permitted to have as few as $n^{d/2}$ satisfying assignments, then any algorithm that correctly decides whether its samples come from ${\cal D}={\cal U}$ versus from ${\cal D} = {\cal U}|_f$ must use $\Omega(n^{d/4})$ samples.
\end{observation}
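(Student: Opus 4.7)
The plan is to apply Le Cam's two-point method. Setting $N := n^{d/2}$, the goal is to construct a prior $\pi$ over degree-$d$ PTFs $f$ with $|f^{-1}(1)| \geq N$ so that
\[
  \dtv\!\left(\mathcal{U}^{\otimes m},\ \mathbb{E}_{f \sim \pi}\bigl[(\mathcal{U}|_f)^{\otimes m}\bigr]\right) = o(1) \quad \text{for every } m = o(n^{d/4}).
\]
Such a bound immediately yields the claimed $\Omega(n^{d/4})$ sample-complexity lower bound, since no algorithm can simultaneously succeed on $\mathcal{U}^{\otimes m}$ and on the mixture under the alternative.

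The driving intuition is the birthday paradox. Under the alternative, the $m$ samples are drawn from a set of size $N$, so the probability of a collision is $O(m^2/N) = o(1)$ for $m = o(\sqrt{N}) = o(n^{d/4})$; under $\mathcal{U}$ this is trivially true for $m \ll 2^{n/2}$. Hence in the regime of interest the algorithm sees $m$ distinct points under either hypothesis, and must distinguish based only on the joint pattern of those distinct points. Provided $\pi$ is sufficiently ``spread out'' that the algorithm cannot effectively pin down $f^{-1}(1)$, the two joint distributions become indistinguishable in total variation.

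Concretely, I would start from a base degree-$d$ PTF $f_0$ with $|f_0^{-1}(1)| = \Theta(N)$ — for instance the halfspace $\mathbf{1}[x_1 + \cdots + x_n \geq n - 2\lceil d/2\rceil]$, which is a degree-1 (and hence degree-$d$) PTF whose satisfying set has size $\sum_{j \leq \lceil d/2\rceil}\binom{n}{j} = \Theta(n^{d/2})$ — and randomize via a transitive group action on $\{-1,1\}^n$ to form $\pi$. Transitivity ensures uniform one-sample marginals, and the $\chi^2$-divergence of the mixture from $\mathcal{U}^{\otimes m}$ reduces (after a short calculation) to an expectation of the form $\mathbb{E}_g\!\left[\bigl(2^n \cdot |f_0^{-1}(1) \cap g \cdot f_0^{-1}(1)|/N^2\bigr)^m\right]$, which one aims to bound by $1 + o(1)$ for $m = o(n^{d/4})$.

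The main obstacle is the choice of base PTF and randomization. The naive XOR-translate construction (with Hamming-ball supports) already fails: under the alternative any two samples lie at pairwise Hamming distance at most $d$, a signature easily distinguishable from $\mathcal{U}^{\otimes 2}$ using only $O(1)$ samples. To obtain the tight $\Omega(n^{d/4})$ bound one must use a more genuinely spread-out base PTF — for example a super-level set of a random degree-$d$ polynomial, with threshold calibrated so that $|f_0^{-1}(1)| = N$ — and show that the pairwise intersections $|f_0^{-1}(1) \cap g \cdot f_0^{-1}(1)|$ concentrate near the mean-field value $N^2/2^n$ for typical $g$. Controlling the relevant higher moments of these intersections to push the $\chi^2$ estimate through is the technical crux of the argument.
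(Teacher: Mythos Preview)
Your high-level strategy (Le Cam's method plus a birthday-paradox argument) is exactly right, and you correctly diagnose why the naive XOR-translate-of-a-Hamming-ball prior fails. But you are missing the one ingredient that makes this observation \emph{elementary}, and as a result you are steering toward machinery that is far heavier than needed.

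The key fact you are missing is a linear-algebra lemma of Aspnes, Beigel, Furst and Rudich: \emph{any} set $S \subseteq \{-1,1\}^n$ with $|S| < \sum_{i \leq d/2}\binom{n}{i}$ is \emph{exactly} the satisfying set of some degree-$d$ PTF. (Proof sketch: there is a nonzero degree-$(d/2)$ multilinear polynomial $q$ vanishing on $S$ by dimension counting; then $f(x) = \mathbf{1}\{-q(x)^2 \geq 0\}$ works.) Once you have this, the prior $\pi$ is simply ``draw a uniformly random subset $\bS$ of $\{-1,1\}^n$ of size $N = \Theta(n^{d/2})$ and let $\boldf$ be a degree-$d$ PTF with $\boldf^{-1}(1) = \bS$.'' The indistinguishability argument is then a two-liner: for $m = o(\sqrt{N})$ samples from $\mathcal{U}|_{\boldf}$, all samples are distinct with probability $1 - o(1)$ by the birthday bound, and \emph{conditioned on distinctness} they are exactly uniform over $\{-1,1\}^n$ because $\bS$ itself was uniformly random. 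No $\chi^2$ computation, no intersection moments, no random polynomials.

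Your proposed fallback --- random degree-$d$ polynomial super-level sets with calibrated thresholds, controlling moments of pairwise intersections --- is essentially the engine behind the paper's \emph{main} lower bound (for PTFs of volume $\approx 1/2$), and it is genuinely technical there. For the present observation, where the volume is allowed to be as tiny as $n^{d/2}/2^n$, the Aspnes et al.\ lemma lets you bypass all of that structure entirely.
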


This lower bound can be established using only basic linear algebra and simple probabilistic arguments; it is inspired by the ``voting polynomials'' lower bound of Aspnes et al.~\cite{ABF+:94} against  $\mathsf{MAJ}$-of-$\mathsf{AC}^0$ circuits. We give the argument in \Cref{appendix:weak-lower-bound}.

Taken together, there is a quartic gap between the (computationally inefficient) upper bound given by \Cref{obs:generic-upper-bound} and the information-theoretic lower bound of \Cref{obs:simple-lower-bound} for PTFs with extremely few satisfying assignments.  Our main result is a proof that the true complexity of the truncation distinguishing problem lies exactly in the middle of these two extremes.  We 

\begin{itemize}

\item [(i)] Give a \emph{computationally efficient} distinguishing algorithm which has sample complexity $O(n^{d/2})$ for a wide range of product distributions and values of $\vol(f)$, and 

\item [(ii)] Show that even for the uniform background distribution ${\cal U}$ over $\bn$, distinguishing whether or not ${\cal U}$ has been truncated by a degree-$d$ PTF of volume $\approx 1/2$ requires $\Omega(n^{d/2})$ samples.

\end{itemize}

We now describe our results in more detail.

\medskip

\noindent {\bf An Efficient Algorithm.}  We give a truncation distinguishing algorithm which succeeds if ${\cal P}$ is any multivariate i.i.d.~product distribution ${\cal P} = \mu^{\otimes n}$ over $\R^n$ satisfying a natural \emph{hypercontractivity} property and if $\vol(f)$ is ``not too small.''  We defer the precise technical definition of the (fairly standard) hypercontractivity property that we require to \Cref{subsec:hypercontractive}, and here merely remark that a wide range of i.i.d.~product distributions satisfy the required condition, including the cases where $\mu$ is

\begin{itemize}

\item any fixed distribution over $\R$ that is supported on a finite (independent of $n$) number of points;

\item any normal distribution $N(c,\sigma^2)$ where $c,\sigma$ are independent of $n$;

\item any uniform distribution over a continuous interval $[a,b]$;

\item any distribution which is supported on an interval $[a,b]$ for which there are two constants $0<c<C$ such that everywhere on $[a,b]$ the pdf is between $c/(b-a)$ and $C/(b-a)$.

\end{itemize}

\noindent
An informal statement of our main positive result is below:

\begin{theorem} [Efficiently detecting PTF truncation, informal theorem statement] 
\label{thm:main-positive}
Let $0 < \eps < 1$. Fix any constant $d$ and any hypercontractive i.i.d.~product distribution $\mu^{\otimes n}$ over $\R^n$.
Let $f: \R^n \to \{0,1\}$ be an unknown degree-$d$ PTF such that 
\[1-\epsilon \geq \Prx_{\bx \sim \mu^{\otimes n}}[f(\bx)=1] \geq 2^{-n^{1/\Theta(d)}}.\]
There is an efficient algorithm that uses $\Theta(n^{d/2}/\eps^2)$ samples from ${\cal D}$ and successfully (w.h.p.) distinguishes between the following two  cases: 
\begin{enumerate}
	\item[(i)] ${\cal D}$ is $\mu^{\otimes n}$, i.e. the ``un-truncated'' distribution;  versus
	\item[(ii)] ${\cal D}$ is $\mu^{\otimes n}|_{f}$, i.e.~$\mu^{\otimes n}$ truncated by $f$.
\end{enumerate}
\end{theorem}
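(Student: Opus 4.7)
The plan is to construct a U-statistic test based on the low-degree orthogonal polynomial expansion with respect to $\mu^{\otimes n}$. Since $\mu$ is hypercontractive (and in particular has finite moments of all orders), we can form an orthonormal basis $\{H_S\}$ of degree-at-most-$d$ polynomials under $\mu^{\otimes n}$, indexed by multi-indices $S$ with $|S| \leq d$; there are $N = O_d(n^d)$ such non-constant basis polynomials. Let
\[
K_d(x,y) \;\coloneqq\; \sum_{1 \leq |S| \leq d} H_S(x)\,H_S(y)
\]
be the corresponding ``non-constant'' reproducing kernel. Given samples $\bx^{(1)}, \ldots, \bx^{(m)}$ from the unknown ${\cal D}$, the test computes the U-statistic
\[
T \;\coloneqq\; \frac{1}{m(m-1)} \sum_{i \neq j} K_d\bigl(\bx^{(i)}, \bx^{(j)}\bigr),
\]
and rejects the null hypothesis (``no truncation'') whenever $|T|$ exceeds a suitable threshold.

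Under the null ${\cal D} = \mu^{\otimes n}$, orthogonality to the constant function immediately yields $\mathbb{E}[T] = 0$. Expanding $\mathbb{E}[T^2]$ over quadruples $(i,j,k,\ell)$ and using independence and orthogonality, the only non-vanishing contributions come from pairs with $\{i,j\} = \{k,\ell\}$; each contributes $\mathbb{E}[K_d(\bx,\by)^2] = N$, so $\mathrm{Var}[T] = O(N/m^2) = O(n^d/m^2)$, and hence the null standard deviation of $T$ is $O(n^{d/2}/m)$. Under the truncated alternative ${\cal D} = \mu^{\otimes n}|_f$, using $\mathbb{E}_{\mu^{\otimes n}|_f}[H_S] = \hat f(S)/\mathrm{vol}(f)$ (where $\hat f(S) \coloneqq \mathbb{E}_{\mu^{\otimes n}}[f \cdot H_S]$) gives
\[
\mathbb{E}[T] \;=\; \sum_{1 \leq |S| \leq d} \frac{\hat f(S)^2}{\mathrm{vol}(f)^2}.
\]

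The heart of the argument is a structural \emph{level-$d$ inequality} for PTFs: every degree-$d$ PTF $f$ over a hypercontractive product distribution satisfies
\[
\sum_{1 \leq |S| \leq d} \hat f(S)^2 \;\geq\; c(d,\mu)\cdot \mathrm{vol}(f)\bigl(1-\mathrm{vol}(f)\bigr),
\]
for some constant depending only on $d$ and $\mu$. To prove this I would let $p$ be a polynomial of degree at most $d$ with $f(\bx)=1$ iff $p(\bx) \geq 0$, pass to a centered form of $p$, and invoke the Paley--Zygmund-style anti-concentration bound $\mathbb{E}[\,|p|\,] \geq c_d\,\|p\|_2$, which follows from the hypercontractive bound $\|p\|_4 \lesssim_d \|p\|_2$ applied via H\"older's inequality. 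Writing $\langle f,\,p\rangle = \tfrac12\,\mathbb{E}[\,|p|\,]$ after appropriate centering, and then applying Cauchy--Schwarz to the degree-$\leq d$ projection of $f - \mathrm{vol}(f)$, delivers the claimed lower bound on the non-constant low-degree Fourier mass of $f$.

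Combining the pieces, the assumption $\mathrm{vol}(f) \leq 1-\epsilon$ forces the alternative mean $\mathbb{E}[T]$ to be at least a constant times $\epsilon$, whereas the null standard deviation is $O(n^{d/2}/m)$; taking $m = \Theta(n^{d/2}/\epsilon^2)$ then separates the two hypotheses with room to spare by a Chebyshev argument. The principal obstacle I foresee is proving the level-$d$ inequality in the required generality: the anti-concentration step must proceed without reference to the unknown constant shift and scaling of $p$, and the hypercontractive constants must remain controllable uniformly across all distributions satisfying the theorem's assumptions. A secondary, technical, obstacle is bounding the variance of $T$ under the truncated alternative (where $\mathbb{E}_{\mu^{\otimes n}|_f}[H_S^2]$ can grow like $1/\mathrm{vol}(f)$); this is precisely where the assumption $\mathrm{vol}(f) \geq 2^{-O(\sqrt n)}$ enters, ensuring that the alternative-hypothesis variance does not overwhelm the signal.
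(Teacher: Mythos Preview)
Your approach is essentially the paper's: the same kernel-based estimator (the paper uses a bipartite variant but explicitly remarks that your U-statistic works equally well), the same mean computation under both hypotheses, and the same anti-concentration-via-hypercontractivity route (Paley--Zygmund plus Cauchy--Schwarz against $p$) to the structural lower bound on $\sum_{1\le |S|\le d}\hat f(S)^2$.

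Two points need tightening. First, your identity $\langle f,p\rangle=\tfrac12\,\mathbb E[|p|]$ holds only when the threshold is exactly zero after centering $p$; for $f=\mathbf 1\{p\ge\theta\}$ with $\mathbb E[p]=0$ and general $\theta$ you cannot simultaneously center $p$ and zero the threshold. The paper handles this by a case analysis on the sign and magnitude of $\theta$ (large $\theta\ge\tfrac12$ gives $\langle f,p\rangle\ge\theta\cdot\mathrm{vol}(f)$ directly; small $\theta\in[0,\tfrac12)$ falls back on anti-concentration; $\theta<0$ passes to $1-f$), obtaining $\sum_{1\le|S|\le d}\hat f(S)^2\ge\Omega(\min\{\mathrm{vol}(f),\,1-\mathrm{vol}(f),\,c^{\Theta(d)}\})^2$ rather than your stated $c\cdot\mathrm{vol}(f)(1-\mathrm{vol}(f))$.

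Second---and this is the real gap---you label the alternative-hypothesis variance a ``secondary, technical obstacle,'' but it is where most of the work lies, and it is \emph{not} merely that $\mathbb E_{\mu^{\otimes n}|_f}[H_S^2]$ can grow. Expanding $\mathrm{Var}[T]$ under truncation leads to $\sum_{1\le|\beta|,|\gamma|\le d}\bigl(\mathbb E_{\mu^{\otimes n}|_f}[H_\beta H_\gamma]\bigr)^2$, and controlling this requires two further ingredients you have not identified: (i) a \emph{linearization} analysis bounding both the coefficients and the support size of the expansion of $H_\beta H_\gamma$ in the orthonormal basis (over a general hypercontractive product space the basis has no group structure, so such products are nontrivial linear combinations rather than single basis elements); and (ii) the \emph{level-$k$ inequalities} $\sum_{|S|\le k}\hat f(S)^2\le O_k\bigl(\mathrm{vol}(f)^2\log^k(1/\mathrm{vol}(f))\bigr)$, themselves a consequence of hypercontractivity with the specific growth rate $C_q\le K\sqrt q$. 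It is precisely the combination of the linearization count (giving a factor $n^{d-\lceil k/2\rceil}$ per level) with the level-$k$ bound that converts the assumption $\mathrm{vol}(f)\ge 2^{-O(\sqrt n)}$ into $\mathrm{Var}[T]\le O_d(n^d/m^2)$ under truncation; without these two tools the variance estimate does not close.
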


Note that $\eps$ is a lower bound on the probability mass of the distribution $\mu^{\otimes n}$ which has been ``truncated;'' as remarked earlier, without a lower bound on $\eps$, it can be arbitrarily difficult to distinguish the truncated distribution. 
Thus, as long as the background distribution is a ``nice'' i.i.d.~product distribution and the truncating PTF's volume is ``not too tiny'', in polynomial time we can achieve a square-root improvement in sample complexity over the naive brute-force computationally inefficient algorithm. 

\medskip

\noindent {\bf A Matching Lower Bound.}
It is natural to wonder whether \Cref{thm:main-positive} is optimal:  can we establish lower bounds on the sample complexity of determining whether a ``nice'' distribution has been truncated by a PTF? And can we do this when the truncating PTF (unlike in \Cref{obs:simple-lower-bound}) has volume which is not extremely small?  

Our main lower bound achieves these goals; it shows that even for the uniform distribution ${\cal U}$ over $\{-1,1\}^n$ and for PTFs of volume $\approx 1/2$, the sample complexity achieved by our algorithm in \Cref{thm:main-positive} is best possible up to constant factors.

\begin{theorem} [Lower bound for detecting PTF truncation, informal theorem statement] 
\label{thm:main-negative}
Fix any constant $d$.
Let $f: \{-1,1\}^n \to \R$ be an unknown degree-$d$ PTF such that $\Pr_{\bx \sim {\cal U}}[f(\bx)=1] \in [0.49,0.51].$
Any algorithm that uses samples from ${\cal D}$ and successfully (w.h.p.) distinguishes between the cases that (i)  ${\cal D}$ is ${\cal U}$,  versus (ii) ${\cal D}$ is ${\cal U}|_{f}$, must use $\Omega(n^{d/2})$ samples.
\end{theorem}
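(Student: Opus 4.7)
The plan is to apply the chi-squared (Ingster) second-moment method with a carefully chosen prior $\mathcal{P}$ on degree-$d$ PTFs of volume near $1/2$. Concretely, I would take $f_{\boldsymbol{g}}(x) := \mathbf{1}[p_{\boldsymbol{g}}(x) \geq 0]$ where $p_{\boldsymbol{g}}(x) := \sum_{|S|=d} g_S \chi_S(x)$ and $\boldsymbol{g} = (g_S)_{|S|=d}$ is a vector of i.i.d.\ $\mathcal{N}(0,1)$ coefficients. Each $f_{\boldsymbol{g}}$ is a degree-$d$ PTF over $\{-1,1\}^n$, and the symmetry $\boldsymbol{g}\to-\boldsymbol{g}$ (which sends $f_{\boldsymbol{g}}$ to $1-f_{\boldsymbol{g}}$ a.s.) immediately gives $\mathbb{E}[\mu_{\boldsymbol{g}}] = 1/2$, where $\mu_{\boldsymbol{g}} := \Pr_{\bx \sim \mathcal{U}}[f_{\boldsymbol{g}}(\bx) = 1]$. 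Combining Grothendieck's arcsine identity with a Parseval computation on the covariance kernel $\sum_{|S|=d}\chi_S(\bx)\chi_S(\by)$ yields $\operatorname{Var}(\mu_{\boldsymbol{g}}) = O(n^{-d/2})$, so by Chebyshev the event $\mu_{\boldsymbol{g}} \in [0.49, 0.51]$ holds with probability $1-o(1)$. Let $\mathcal{P}$ be the law of $f_{\boldsymbol{g}}$ conditioned on this event; reducing to the worst-case $f$ is then standard by Yao's minimax.

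Writing $\mathcal{D}_{\mathrm{yes}}^{(m)} := \mathbb{E}_{f \sim \mathcal{P}}\bigl[(\mathcal{U}|_f)^{\otimes m}\bigr]$ and $\rho(\boldsymbol{g}, \boldsymbol{g}') := \langle f_{\boldsymbol{g}}, f_{\boldsymbol{g}'}\rangle / (\mu_{\boldsymbol{g}}\mu_{\boldsymbol{g}'}) - 1$, the tensorization identity for chi-squared gives
\[
1 + \chi^2\bigl(\mathcal{D}_{\mathrm{yes}}^{(m)}, \mathcal{U}^{\otimes m}\bigr) \;=\; \mathbb{E}_{\boldsymbol{g}, \boldsymbol{g}'}\bigl[(1+\rho(\boldsymbol{g}, \boldsymbol{g}'))^m\bigr].
\]
The $\boldsymbol{g}'\to -\boldsymbol{g}'$ symmetry forces $\mathbb{E}[\rho]=0$, so the leading term in the expansion is the second moment. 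On the high-probability event where $\mu_{\boldsymbol{g}}, \mu_{\boldsymbol{g}'} \approx 1/2$, we have $\rho \approx \mathbb{E}_{\bx}\bigl[\mathrm{sign}(p_{\boldsymbol{g}}(\bx))\mathrm{sign}(p_{\boldsymbol{g}'}(\bx))\bigr]$, so
\[
\mathbb{E}[\rho^2] \;\approx\; \mathbb{E}_{\bx,\by}\!\left[\bigl(\tfrac{2}{\pi}\arcsin\kappa(\bx,\by)\bigr)^{\!2}\right] \;\leq\; \mathbb{E}_{\bx,\by}[\kappa(\bx,\by)^2] \;=\; \frac{1}{\binom{n}{d}},
\]
where $\kappa(x,y) := \binom{n}{d}^{-1}\sum_{|S|=d}\chi_S(x)\chi_S(y)$ is the normalized correlation of the Gaussian field $p_{\boldsymbol{g}}$ and the final equality is Parseval. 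Thus the $k=2$ contribution to $\mathbb{E}[(1+\rho)^m]-1$ is $\binom{m}{2}\mathbb{E}[\rho^2] = O(m^2/n^d)$, which is $o(1)$ precisely when $m = o(n^{d/2})$, matching the target cutoff.

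The main technical obstacle is controlling the higher-order terms $\binom{m}{k}\mathbb{E}[\rho^k]$ for $k \geq 3$ tightly enough that the full series remains $o(1)$ below the cutoff. The natural target is a moment bound of the form $\mathbb{E}[|\rho|^k] \leq (Ck/n^{d/2})^k$, making the tail geometric in $m/n^{d/2}$. I would attack this either (i) by expanding $\langle f_{\boldsymbol{g}}, f_{\boldsymbol{g}'}\rangle$ into Hermite polynomials in $(\boldsymbol{g}, \boldsymbol{g}')$, applying Gaussian hypercontractivity level-by-level, and summing against the decay of $\kappa$; or (ii) by establishing a subgaussian tail $\Pr[|\rho| > t] \leq \exp(-c(t n^{d/2})^{2/d})$ via Boolean hypercontractivity of the relevant degree-$2d$ polynomial of $\bx \sim \mathcal{U}$. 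Either route gives $\mathbb{E}[(1+\rho)^m] = 1 + o(1)$ for $m = o(n^{d/2})$, hence $\chi^2 = o(1)$; combining with Cauchy--Schwarz ($\dtv \leq \tfrac{1}{2}\sqrt{\chi^2}$) and Yao's minimax yields the claimed $\Omega(n^{d/2})$ sample lower bound.
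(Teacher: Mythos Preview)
Your prior---homogeneous degree-$d$ PTFs with i.i.d.\ Gaussian coefficients---is exactly the paper's $\calF_d$, and your balance argument via Sheppard's formula mirrors \Cref{lemma:balanced}. From that point on, however, the two arguments diverge.

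You run the Ingster $\chi^2$ method, reducing to control of $\mathbb{E}_{\boldsymbol g,\boldsymbol g'}[(1+\rho)^m]$. The second-moment step is correct and does pinpoint the $n^{d/2}$ threshold, but the higher-order terms are, as you acknowledge, the crux---and neither of your two routes closes the gap as written. In route~(ii) your target tail $\Pr[|\rho|>t]\le\exp\bigl(-c(tn^{d/2})^{2/d}\bigr)$ is \emph{not} subgaussian for $d\ge 2$; integrating it yields only $\mathbb{E}[|\rho|^k]\lesssim(C k^{d/2}/n^{d/2})^k$, strictly weaker than your stated goal $(Ck/n^{d/2})^k$, and for $d\ge 3$ the resulting series $\sum_k\binom{m}{k}\mathbb{E}[|\rho|^k]\lesssim\sum_k(Cm\,k^{d/2-1}/n^{d/2})^k$ is not $1+o(1)$ in the regime $m\asymp n^{d/2}$ (even using the boundedness of $\rho$). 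More basically, $\rho$ is a random variable in $(\boldsymbol g,\boldsymbol g')$, not in $\bx$, so it is unclear which ``degree-$2d$ polynomial of $\bx$'' you intend to bound hypercontractively. Route~(i) must contend with the slow decay of the Hermite coefficients of $\mathrm{sign}(\cdot)$ and with $k$-point Gaussian orthant probabilities for all $k\le m$, neither of which you engage with. (A minor aside: the symmetry $\boldsymbol g'\mapsto-\boldsymbol g'$ gives $\mathbb{E}[\rho]=0$ only when both volumes equal $1/2$ exactly, not after conditioning on $[0.49,0.51]$; this is inessential.)

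The paper bypasses the higher-moment problem with a different reduction. A ``shadow sample'' coupling (\Cref{claim:lb-reduction}) replaces the mixture $\mathbb{E}_{\boldf}[(\calD_u|_{\boldf})^{\otimes m}]$ by the task of showing that, for \emph{fixed} random points $\bu^{(1)},\dots,\bu^{(3m)}$, the bit vector $(\boldf(\bu^{(i)}))_i$ over the randomness of $\boldf\sim\calF_d$ alone is TV-close to uniform bits. Since $(\bp(\bu^{(i)}))_i$ is then a single $3m$-dimensional Gaussian with covariance $\Sigma_2$ whose off-diagonal entries are your $\kappa(\bu^{(i)},\bu^{(j)})$, the data-processing inequality together with the Devroye--Mehrabian--Reddad bound $\dtv(N(0,I_{3m}),N(0,\Sigma_2))\le\tfrac{3}{2}\sqrt{\tr((\Sigma_2-I_{3m})^2)}$ reduces everything to proving $\sum_{i\ne j}\kappa(\bu^{(i)},\bu^{(j)})^2$ is small with high probability over the $\bu$'s. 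That is a single Chebyshev bound on a degree-$4d$ Boolean polynomial---essentially your $\mathbb{E}[\kappa^2]=\binom{n}{d}^{-1}$ computation, used once rather than fed into an infinite moment expansion. In short, the coupling trades a mixture-level $\chi^2$ bound for a Gaussian-level TV bound, and that swap is precisely what lets the proof close without ever controlling the higher moments of $\rho$.
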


\subsection{Techniques}

We now give a technical overview of both the upper bound (\Cref{thm:main-positive}) and the lower bound (\Cref{thm:main-negative}), starting with the former. 

\subsubsection{Overview of \Cref{thm:main-positive}}

For simplicity, we start by considering the case when the background distribution $\calP = \mun$ is the uniform measure on the Boolean hypercube.  

\medskip

\noindent \textbf{The Boolean Hypercube $\bn$.} Let us denote the uniform measure over $\bn$ by $\calU_n$. 
Recall that our goal is to design an algorithm with the following performance guarantee: Given i.i.d. sample access to an unknown distribution $\calD$, the algorithm w.h.p. 
\begin{itemize}
	\item[(i)] Outputs ``un-truncated'' when $\calD = \calU_n$; and 
	\item[(ii)] Outputs ``truncated'' when $\calD = \calU_n\mid_{f^{-1}(1)}$ for a degree-$d$ PTF $f:\bn\to\zo$, where $1-\eps \geq \Pr[f(\bx) = 1]$ for $\bx\sim\calU_n$.
\end{itemize}
To avoid proliferation of parameters, we set $\eps = 0.1$ for the rest of the discussion. We thus have 
\[\Prx_{\bx\sim\calU_n}\sbra{f(\bx) = 1} \leq 0.9.\]
For any point $x\in\bn$, let $\wt{x} \in \bits^{{n\choose 1} + \ldots + {n\choose d}}$ be the vector given by 
\[\wt{x} := \pbra{\wt{x}_\alpha}_{\substack{\alpha\sse[n] \\ 0 < |\alpha| \leq d}} \qquad \text{where~}\wt{x}_\alpha := \prod_{i\in\alpha} x_i.\]
In other words, every coordinate of $\wt{x}$ corresponds to a non-constant monomial in $x$ of (multilinear) degree at most $d$. Note that the map $x\mapsto \wt{x}$ can be viewed as a \emph{feature map} corresponding to the ``polynomial kernel'' in learning theory. 

The main idea underlying our algorithm, which is given in \Cref{thm:main-positive}, is the following:
\begin{enumerate}
	\item When $\calD = \calU_n$, then it is easy to see that $\Ex_{}\sbra{\wt{\bx}} = \overline{0}$ (the all-$0$ vector). This is immediate from the fact that the expectation of any non-constant monomial under $\calU_n$ is $0$. 
	\item On the other hand, suppose $\calD = \calU_n\mid_{f^{-1}(1)}$ for a degree-$d$ PTF $f$ as above. In this case, it can be shown that 
	\[\vabs{\Ex_{\calU_n\mid_{f^{-1}(1)}}\big[\wt{\bx}\big]}_2 \geq 2^{-\Theta(d)} =: c_d.\]
	This is done by relating the quantity in the LHS above to the Fourier spectrum of degree-$d$ PTFs, which has been extensively studied in concrete complexity theory~(see for example \cite{GL:94,diakonikolas2014average, HKM2014, kane2013correct}). In particular, we obtain this lower bound on $\vabs{\Ex\left[\wt{\bx}\right]}_2$ from an anti-concentration property of low-degree polynomials over the Boolean hypercube. This in turn is a consequence of hypercontractivity of the uniform measure over $\bn$, a fundamental tool in discrete Fourier analysis~(see Section 9.5 of \cite{ODonnell2014}).
\end{enumerate}

Items 1 and 2 above together imply that estimating $\vabs{\Ex\sbra{\wt{\bx}}}_2^2$ up to an additive error of $\pm c_d^2/2$ suffices to distinguish between $\calD = \calU_n$ and $\calD = \calU_n\mid_{f^{-1}(1)}$. Next, note that 
\[\vabs{\Ex_{\calD}\big[\wt{\bx}\big]}_2^2 = \Ex_{\bx,\by\sim\calD}\sbra{\abra{\wt{\bx},\wt{\by}}}.\]
Using the idea of ``U-statistics''~\cite{hoeffding1994class}, this suggests a natural unbiased estimator, namely drawing $2T$ points $\tx{1}, \ldots, \tx{T}$ and $\ty{1},\ldots, \ty{T}$ for some $T$ which we will fix later, and then setting 
\[\bM := \frac{1}{T^2}\abra{\sum_{i=1}^T \tx{i}, \sum_{j=1}^T \ty{j}}.\]
In particular, we have $\Ex[\bM] = \vabs{\Ex\sbra{\wt{\bx}}}_2^2$. 

In order to be able to distinguish between the un-truncated and truncated distributions by estimating $\bM$ (and then appealing to Chebyshev's inequality), it therefore suffices to upper bound the variance of $\bM$ in both the un-truncated and the truncated settings. When $\calD = \calU_n$, then 
$\Var[\bM]$ is straightforward to calculate and it turns out that
\[\Varx_{\calU_n}[\bM] = \frac{m}{T^2} \qquad\text{where~} m:= \#\{\alpha\subseteq[n] : 0 < |\alpha| \leq d\} = O_d(n^d).\]
However, in the truncated setting, $\Var[\bM]$ is significantly trickier to analyze; at a high-level, our analysis expresses $\Var[\bM]$ in terms of the ``weights'' of various levels of the Fourier spectrum of $f$.\footnote{See \Cref{subsec:fourier-101} for a formal definition.} The key technical ingredient we use to control the variance is the so-called ``level-$k$ inequality'' for Boolean functions, which states that for any Boolean function $f:\bn\to\zo$, writing $\bW^{=k}[f]$ for the ``Fourier weight at level-$k$'', we have 
\[\bW^{=k}[f] \leq O_k\pbra{\bW^{=0}[f]\cdot \log^k\pbra{\frac{1}{\bW^{=0}[f]}}}.\]
Recall that $\bW^{=0}[f] = \Ex_{\calU_n}[f]^2$, and so the level-$k$ inequality bounds higher-level Fourier weight in terms of the mean of the function. We remark that the level-$k$ inequality is also a consequence of hypercontractivity over the Boolean hypercube (as before, see Section 9.5 of \cite{ODonnell2014}). With this in hand, we can show that 
\[\Varx_{\calU_n\mid_{f^{-1}(1)}}[\bM] = \frac{O_d(n^d)}{T^2}\]
as long as $\Ex_{\calU_n}[f] \geq 2^{-n^{1/2(2d+1)}}$. 

Finally, taking $T = \Theta_d(n^{d/2})$ implies that the standard deviation of each of our estimators is comparable to the difference in means (which was $c_d^2$), allowing us to distinguish between the un-truncated and truncated settings.

\medskip
\noindent\textbf{Hypercontractive Product Distributions $\mun$.} We use the same high-level approach (as well as the same estimator $\bM$) in order to distinguish low-degree truncation of a hypercontractive product measure $\mun$, but the analysis becomes  more technically involved. To explain the principal challenge, note that over the Boolean hypercube $\bn$, the Fourier basis functions $(\chi_\alpha)_{\alpha\sse[n]},$ 
\[\chi_\alpha(x) := \prod_{i\in \alpha}x_i,\]
form a multiplicative group. This group structure is useful because it means that the product of two basis functions is  another basis function: For $\alpha, \beta \sse [n]$, we have the product formula
\[\chi_\alpha \cdot \chi_\beta = \chi_{\alpha \,\triangle\, \beta}\]
where $\alpha\,\triangle\,\beta$ denotes the symmetric difference of $\alpha$ and $\beta$. 

Over an arbitrary hypercontractive measure $\mun$, this may no longer be the case; as a concrete example, this fails for the Gaussian measure and the Hermite basis (cf. Chapter 11 of~\cite{ODonnell2014}). Over a general product space $\mun$ the Fourier basis functions are now indexed by \emph{multi-subsets} of $[n]$ (as opposed to subsets of $[n]$ over $\bn$)---see the discussion following~\Cref{def:Fourier-basis}. More importantly, there is no simple formula for the product of two Fourier basis functions, and this makes the analysis technically more involved. We remark that this problem, which is known as the \emph{linearization problem},  has been well studied for various classes of orthogonal polynomials (see Section 6.8 of~\cite{andrews1999special}).  \Cref{lem:linearization,lem:ub-linearization-coeffs} establish a weak version of a ``product formula'' between two Fourier basis functions for $\mun$ that suffices for our purposes and lets us carry out an analysis similar to the above sketch for the Boolean hypercube $\bn$.
 

\subsubsection{Overview of \Cref{thm:main-negative}}
 
We turn to an overview of our lower bound, \Cref{thm:main-negative}. As in the previous section, we write $\calU_n$ to denote the uniform distribution over the $n$-dimensional Boolean hypercube $\bn$ and $(\chi_S)_S$ for the Fourier basis over $\bn$. 
To prove \Cref{thm:main-negative}, it suffices to construct a distribution ${\cal F}_d$ over degree-$d$ PTFs over $\bn$ with the following properties: 
 \begin{enumerate}
 \item The distribution ${\cal F}_d$ is supported on thresholds of homogenous degree-$d$ polynomials over $\bits^n$. Note that such polynomials are necessarily multilinear; in particular, each PTF $\boldf \sim {\cal F}_d$ can be expressed as 
 \[\boldf(x) := \mathbf{1}\cbra{\sum_{S:|S|=d} \widehat{\bp}(S) \chi_S(x) \geq 0}.\]
The coefficients $\wh{\bp}(S)$ will be i.i.d.~random variables drawn from the standard Gaussian distribution $N(0, 1)$.
 \item Let $m=\Omega(n^{d/2})$ and consider the distributions
	\begin{itemize}
		\item $\calD_1$, obtained by drawing $m$ independent samples from $\calU_n$; and
		\item $\calD_2$, obtained by first drawing $\boldf\sim\calF_d$, and then drawing $m$ independent samples from $\calU_n\mid_{\boldf^{-1}(1)}$. 
	\end{itemize}
	Then distributions $\calD_1$ and $\calD_2$ are $o(1)$-close to each other in variation distance. 
 \end{enumerate}
 
\noindent Polynomials of the form 
\[\sum_{S} \wh{\bp}(S)\chi_S(x) \qquad\text{for}~\wh{\bp}(S)\sim N(0,1)\] 
are known in the literature as \emph{Gaussian random polynomials}, and have been extensively studied (with an emphasis on the behavior of their roots)~\cite{ibragimov1997roots, hammersley1956zeros, bloom-complex}. We will however be interested in a certain ``pseudorandom-type'' behavior of these polynomials.
 
In particular, we first reduce the problem of proving indistinguishability of ${\cal D}_1$ and ${\cal D}_2$ to proving the following: Suppose ${\bu}_1, \ldots, {\bu}_{m}$ are $m$ randomly chosen points from $\{-1,1\}^n$ (which we fix). Then, with probability $1-o(1)$ over the choice of these $m$ points, 
the distribution of 
\[\pbra{\boldf(\bu_1),\dots,\boldf(\bu_m)}\qquad\text{is $o(1)$-close to that of}\qquad \pbra{\bb_1,\ldots,\bb_m}\]
for $\boldf\sim\calF_d$ and where each $\bb_i$ is an independent unbiased random bit. In other words, we aim to show that if the evaluation points $\bu_1, \ldots, \bu_m$ are randomly chosen (but subsequently known to the algorithm), then $\boldf(\bu_1),\dots,\boldf(\bu_m)$ is $o(1)$-indistinguishable from random. 
 
We establish this last statement by proving something even stronger. Namely, we first observe that the $\mathbb{R}^m$-valued random variable 
 $(\bp(\bu_m), \ldots, \bp(\bu_m))$ is an $m$-dimensional normal random variable for any fixed outcome of $\bu_1, \ldots, \bu_m$. Subsequently, we show that this random variable $(\bp(\bu_m), \ldots, \bp(\bu_m))$  is $o(1)$-close to the standard $m$-dimensional normal random variable $N(0,I_m)$ where $I_m$ is the identity matrix in $m$ dimensions. This exploits a recent bound on total variation distance between multivariate normal distributions~\cite{DMR20} in terms of their covariance matrices, and involves bounding the trace of the Gram matrix generated by random points on the hypercube; details are deferred to the main body of the paper. 

\subsection{Related Work}

As mentioned earlier, ``truncated statistics" has been a topic of central interest to statisticians
for more than a century and recently  in theoretical computer science as well. Starting with the work of Daskalakis {et~al.}~\cite{DGTZ}, several works have looked at the problem of learning an unknown high-dimensional distribution in settings where the algorithm only gets samples from a truncated set~\cite{FKTcolt20, Kontonis2019, bhattacharyya2021efficient}. We note here that in the recent past, there have also been several works on truncation in the area of statistics related to supervised learning scenarios~\cite{daskalakis2021efficient, DGTZcolt19, daskalakis2020truncated}, but the models and techniques in those works are somewhat distant from the topic of the current paper. 
Finally, in retrospect, some earlier works on ``learning from positive samples"~\cite{CDS20soda, DDS15, DGL05} also have a similar flavor. In particular, the main result of  \cite{CDS20soda} is a $\poly(n)$ time algorithm which, given 
 access to samples from a Gaussian truncated by an unknown degree-$2$ PTF, approximately recovers the truncation set; and one of the main results of \cite{DDS15} is an analogous $\poly(n)$-time algorithm but for degree-1 PTF (i.e.~LTF) truncations of the uniform distribution over $\{-1,1\}^n$.  Note that while the settings of \cite{CDS20soda,DDS15} are somewhat related to the current paper, the goals and results of those works are quite different; in particular, the focus is on learning (as opposed to testing / determining whether truncation has taken place), and the sample complexities of the algorithms in \cite{CDS20soda,DDS15}, albeit polynomial in $n$, are polynomials of high degree.

In terms of the specific problem we study, the work most closely related to the current paper is that of \cite{DNS23-convex}.  In particular, as noted earlier, in \cite{DNS23-convex}, the algorithm gets access to samples from either (i) $N(0,1)^n$ or (ii) $N(0,1)^n$ conditoned on a convex set. Besides the obvious difference in the truncation sets which are considered---convex sets in \cite{DNS23-convex} vis-a-vis PTFs in the current paper---the choice of the background distribution in \cite{DNS23-convex} is far more restrictive. Namely, \cite{DNS23-convex} requires the background distribution to be the normal distribution $N(0,1)^n$, whereas the results in current paper hold for the broad family of hypercontractive product distributions (which includes many other distributions as well as the normal distribution). The difference in the problem settings is also reflected in the techniques employed in these two  papers. In particular, the algorithm and analysis of \cite{DNS23-convex} heavily rely on tools from convex geometry including Gaussian isoperimetry~\cite{Borell:85}, the Brascamp-Lieb inequality~\cite{BrascampLieb:76b, BrascampLieb:76}, and recent structural results for convex bodies over Gaussian space~\cite{De2021,DNS22}. In our setting, truncation sets defined by PTFs even of degree two need not be convex, so we must take a very different approach. The algorithm in the current paper uses techniques originating from the study of PTFs in concrete complexity theory, in particular on the hypercontractivity of low-degree polynomials, anti-concentration, and the ``level-$k$'' inequalities~\cite{ODonnell2014}.  So to summarize the current work vis-a-vis \cite{DNS23-convex}, the current work studies a different class of truncations under a significantly less restrictive assumption on the background distribution, and our main algorithm, as well as its analysis, are completely different from those of \cite{DNS23-convex}.


Our lower bound argument extends and strengthens a $\tilde{\Omega}(n^{1/2})$ lower bound, given in \cite{DNS23-convex}, for distinguishing the standard normal distribution $N(0,1)^n$ from $N(0,1)^n|_{f^{-1}(1)}$ where $f$ is an unknown origin-centered LTF (i.e.~a degree-1 PTF);  both arguments use a variation distance lower bound between a standard multivariate normal distribution and a multivariate normal distribution with a suitable slightly perturbed covariance matrix.  Our lower bound argument in the current paper combines tools from the LTF lower bound mentioned above with ingredients (in particular, the use of a ``shadow sample''; see \Cref{claim:lb-reduction}) from a different lower bound from \cite{DNS23-convex} for symmetric slabs; extends the \cite{DNS23-convex} analysis from degree-1 to degree-$d$ for any constant $d$; and gives a tighter analysis than \cite{DNS23-convex} which does not lose any log factors.

We end this discussion of related work with the following overarching high-level question, which we hope will be investigated in future work: Suppose ${\cal P}$ is a background distribution and ${\cal F}$ is a class of Boolean functions. {\em Under what conditions can we distinguish between ${\cal D} = {\cal P}$ versus ${\cal D} = {\cal P}|_f$ (for some $f \in \mathcal{F}$) with sample complexity asymptotically smaller than the sample complexity of learning $\mathcal{F}$?} We view our results on distinguishing truncation by PTFs as a step towards answering this question.

\section{Preliminaries}
\label{sec:prelims}

We write $\N := \{0,1,\ldots\}$ and $\mathbf{1}\{\cdot\}$ for the $0/1$ indicator function. We will write 
\[{[n]\choose d} := \{S \sse[n] : |S| = d\}.\]

Let $(\R, \mu)$ be a probability space. For $n\in\N$, we write $L^2(\R^n, \mun)$ for the (real) inner-product space of functions $f:\R^n\to\R$ with the inner product 
\[\abra{f,g} := \Ex_{\bx\sim{\mun}}\sbra{f(\bx)\cdot g(\bx)}.\]
Here $\mun$ denotes the product probability distribution on $\R^n$. 
{For $q > 0$}
we write 
\[\|f\|_q := \Ex_{\bx\sim{\mun}}\sbra{|f(\bx)|^q}^{1/q}.\]
In particular, for $f:\R^n\to\{0,1\}$, we write $\vol(f) := \|f\|_1 = \Ex[f(\bx)]$ where $\bx\sim{\mun}$. 

We say that a function $f:\R^n\to\zo$ is a \emph{degree-$d$ polynomial threshold function (PTF)} if there exists a polynomial $p:\R^n\to\R$ of degree at most $d$ such that 
\[f(x) = \Indicator\{p(x) \geq 0\}.\]
The primary class of distributions we will consider throughout is that of truncations of an i.i.d. product distribution $\mun$ by a degree-$d$ PTF {of at least some minimum volume}; more precisely, we will consider the following class of truncations:
\begin{equation} \label{eq:class-def}
	\Pptf\pbra{d, \alpha} := \cbra{\mun\mid_{f^{-1}(1)} \ : f \text{ is a degree-$d$ PTF with } \vol(f) \geq \alpha }
\end{equation}
where $\alpha=\alpha(n)$ may depend on $n$ (in fact we will be able to take $\alpha$ as small as $2^{-\Theta(\sqrt{n})}$).
Throughout the paper we will assume that $d$ (the degree of the PTFs we consider) is a fixed constant independent of the ambient dimension $n$.

\subsection{Harmonic Analysis over Product Spaces}
\label{subsec:fourier-101}

Our notation and terminology in this section closely follow those of O'Donnell~\cite{ODonnell2014}; in particular, we refer the reader to Chapter~8 of~\cite{ODonnell2014} for further background. 

\begin{definition}~\label{def:Fourier-basis}
	A \emph{Fourier basis} for $L^2(\R, \mu)$ is an orthonormal basis $\calB = \{\chi_0,\chi_1, \ldots\}$ with $\chi_0 \equiv 1$. 
\end{definition}

It is well known that if $L^2(\R, \mu)$ is separable,\footnote{Recall that a metric space is \emph{separable} if it contains a countable dense subset.} then it has a Fourier basis (see for e.g. Section I.4 of \cite{conway2019course}). Note that we can obtain a Fourier basis for $L^2(\R^n, \mun)$ by taking all possible $n$-fold products of elements of $\calB$; more formally, for a multi-index $\alpha\in\N^n$, we define 
\[\chi_{\alpha}(x) :=\prod_{i=1}^n \chi_{\alpha_i}(x_i).\]
Then the collection 
$\calB_n := \cbra{ \chi_\alpha : \alpha_i \in \N^n}$
forms a Fourier basis for $L^2(\R^n, \mun)$; this lets us write $f\in L^2(\R^n, \mun)$ as $f = \sum_{\alpha\in\N^n} \wh{f}(\alpha)\chi_{\alpha}$ where  
\[\wh{f}(\alpha) := \abra{f, \chi_\alpha}\] 
is the \emph{Fourier coefficient of $f$ on $\alpha$}.

We can assume without loss of generality that the basis elements of $L^2(\R, \mu)$, namely $\cbra{\chi_0, \chi_1, \ldots}$, are polynomials with $\deg(\chi_i) = i$. This is because a polynomial basis can be obtained for $L^2(\R, \mu)$ by running the Gram-Schmidt process. By extending this basis to $L^2(\R^n,\mun)$ by taking products, it follows that we may assume without loss of generality that for a multi-index $\alpha \in \N^n$, we have $\deg(\chi_\alpha) = |\alpha|$ where 
\[|\alpha| := \sumi \alpha_i.\]
We will also write $\#\alpha := \abs{\supp(\alpha)}$ where $\supp(\alpha):=\cbra{i : \alpha_i \neq 0 }$. 

\begin{remark}
While the Fourier coefficients $\{\wh{f}(\alpha)\}$ depend on the choice of basis $\cbra{\chi_\alpha}$, we will always work with some fixed (albeit arbitrary) polynomial basis, and hence there should be no ambiguity in referring to the coefficients as though they were unique. We assume  that the orthogonal basis $\{\chi_\alpha\}$ is ``known'' to the algorithm; this is certainly a reasonable assumption for natural examples of hypercontractive distributions (e.g. distributions with finite support, the uniform distribution on intervals, the Gaussian distribution, etc.), and is in line with our overall problem formulation of detecting whether a known background distribution has been subjected to truncation.
\end{remark}

As a consequence of orthonormality, we get that for any $f\in L^2(\R^n,\mun)$, we have 
\[\Ex_{\bx\sim\mun}\sbra{f(\bx)} = \wh{f}(0^n) \qquad\text{and}\qquad \|f\|_2^2 = \sum_{\alpha\in\N^n} \wh{f}(\alpha)^2,\]
with the latter called \emph{Parseval's formula}.
We also have \emph{Plancharel's formula}, which says that 
\[\abra{f, g} = \sum_{\alpha\in\N^n} \wh{f}(\alpha)\wh{g}(\alpha).\]
Finally, we write 
\[\bW^{=k}[f] := \sum_{|\alpha| = k}\wh{f}(\alpha)^2 \qquad \text{and} \qquad \bW^{\leq k}[f] := \sum_{|\alpha| \leq k}\wh{f}(\alpha)^2\]
for the \emph{Fourier weight of $f$ at level $k$} and the \emph{Fourier weight of $f$ up to level $k$} respectively.

\subsection{Hypercontractive Distributions}
\label{subsec:hypercontractive}

The primary analytic tools we will require in both our upper and lower bounds are consequences of \emph{hypercontractive} estimates for functions in $L^2(\R^n,\mun)$; we refer the reader to Chapters~9 and 10 of~\cite{ODonnell2014} for further background on hypercontractivity and its applications.

\begin{definition} \label{def:hypercontractive-domain}
	We say that $(\R, \mu)$ is \emph{hypercontractive} if for every $q\geq 2$, there is a fixed constant $C_q(\mu)$ such that for every $n \geq 1$ and every multivariate degree-$d$ polynomial $p : \R^n\to\R$ we have 
	\begin{equation} \label{eq:general-bonami}
		\|p\|_q \leq C_q(\mu)^d\cdot \|p\|_2
	\end{equation}
	where $C_q(\mu)$ is independent of $n$ and satisfies
	\begin{equation} \label{eq:hyp-constant-growth}
		C_q(\mu) \leq K\sqrt{q}
	\end{equation}	
	for an absolute constant $K$.
	When the product distribution $\mun$ is clear from context, we will sometimes simply write $C_q := C_q(\mu)$ instead. 
\end{definition}

It is clear from the monotonicity of norms that $C_q \geq 1$; see \Cref{table:hypercontractive} for examples of hypercontractive distributions with accompanying hypercontractivity constants $C_q(\mu)$. 

\begin{remark} \label{remark:annoying-stuff}
We  note that \Cref{def:hypercontractive-domain} is not the standard definition of a hypercontractive product distribution (cf. Chapters~9 and 10 of \cite{ODonnell2014}), but is in fact an easy consequence of hypercontractivity that is sometimes referred to as the ``Bonami lemma.'' The guarantees of \Cref{eq:general-bonami,eq:hyp-constant-growth} are all we require for our purposes, and so we choose to work with this definition instead.
\end{remark}

\begin{remark} \label{rem:hyp-constant-growth}
While \Cref{eq:hyp-constant-growth} may seem extraneous, we note that the ``level-$k$ inequalities'' (\Cref{prop:level-k-inequalities}) crucially rely on this bound on the hypercontractivity constant $C_q$.
\end{remark}

{\renewcommand{\arraystretch}{1.5}
\begin{table}[]
\begin{center}
\begin{tabular}{@{}lll@{}}
\toprule
Product Distribution $\calD$                                         & $C_q$ &   Reference       \\ \midrule
Standard Gaussian Distribution $N(0,1)^n$                    &      $\sqrt{q-1}$      &      Proposition~5.48 of~\cite{Aubrun2017}     \\ 
Uniform Measure on $\bn$                       &    $\sqrt{q-1}$        &       Theorem~9.21 of~\cite{ODonnell2014}    \\
Finite Product Domains $(\Omega^n, \mun)$ &  $\sqrt{\frac{q}{2\min (\mu)}}$          &      \cite{Wolff2007} and Fact~2.8 of~\cite{aushas09}     \\
\bottomrule
\end{tabular}
\caption{Examples of hypercontractive distributions, along with their accompanying hypercontractivity constants. Here $\min(\mu)$ denotes the minimal non-zero probability of any element in the support of the (finitely supported) distribution $\mu$. } 
\label{table:hypercontractive}
\end{center}
\end{table} 
}

We turn to record several useful consequences of hypercontractivity which will be crucial to the analysis of our estimator in \Cref{sec:ub} and our lower bound in \Cref{sec:lb}. 

The following \emph{anti-concentration} inequality is a straightforward consequence of hypercontractivity. A similar result for arbitrary product distributions with finite support was obtained by Austrin--H\aa stad~\cite{aushas09}, and a similar result for functions over $\bn$ with the uniform measure was obtained by Dinur et al.~\cite{DFKO06}. The proof of the following proposition closely follows that of Proposition~9.7 in~\cite{ODonnell2014}:

\begin{proposition}[Anti-concentration of low-degree polynomials] \label{prop:anti-conc}
  Suppose $(\R^n, \mun)$ is a hypercontractive probability space. Then for any degree-$d$ polynomial $p: \R^n\to \R$ with $\Ex[p]=0$ and $\Varx[p] = 1$, we have
  \begin{align*}
    \Prx_{\bx\sim\mun}\sbra{|p(\bx)| \geq \frac{1}{2}} \geq 0.5625 \cdot c^d
  \end{align*} 
  for a constant $c := c(\mu)$ independent of $n$. 
\end{proposition}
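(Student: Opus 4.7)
The plan is to combine the Paley--Zygmund second-moment inequality with the hypercontractive upper bound on $\|p\|_4$ guaranteed by \Cref{def:hypercontractive-domain}. This is the standard route used for the Boolean-cube version (\cite{ODonnell2014}, Proposition~9.7), and the only adaptation needed is that we can no longer plug in the specific Bonami constant $\sqrt{3}$ but must keep the general hypercontractivity constant $C_4 = C_4(\mu)$ symbolically.

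First, I would set $\bY := p(\bx)^2$, so that $\bY \geq 0$ and, using $\E[p]=0$ and $\Var[p]=1$,
\[
\E[\bY] \;=\; \E[p^2] \;=\; \Var[p] + \E[p]^2 \;=\; 1.
\]
The Paley--Zygmund inequality, applied with threshold $\theta = 1/4$, gives
\[
\Prx_{\bx\sim\mun}\!\sbra{\,p(\bx)^2 \geq \tfrac{1}{4}\,} \;=\; \Pr[\bY \geq \theta \E[\bY]] \;\geq\; (1-\theta)^2 \cdot \frac{\E[\bY]^2}{\E[\bY^2]} \;=\; \frac{9}{16} \cdot \frac{1}{\|p\|_4^{4}}.
\]

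Second, I would invoke hypercontractivity: since $p$ is a degree-$d$ polynomial on the hypercontractive space $(\R^n,\mun)$, \Cref{eq:general-bonami} with $q=4$ yields
\[
\|p\|_4 \;\leq\; C_4(\mu)^{d}\cdot \|p\|_2 \;=\; C_4(\mu)^{d},
\]
where the last equality uses $\|p\|_2^2 = \E[p^2] = 1$. Raising to the fourth power gives $\|p\|_4^{4} \leq C_4(\mu)^{4d}$.

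Combining the two bounds, and noting that $\{p(\bx)^2 \geq 1/4\}$ is exactly $\{|p(\bx)| \geq 1/2\}$, we conclude
\[
\Prx_{\bx\sim\mun}\!\sbra{\,|p(\bx)| \geq \tfrac{1}{2}\,} \;\geq\; \frac{9}{16} \cdot \frac{1}{C_4(\mu)^{4d}} \;=\; 0.5625 \cdot c^{d},
\]
where $c := c(\mu) := C_4(\mu)^{-4}$ is a positive constant depending only on $\mu$ (and independent of $n$, by \Cref{def:hypercontractive-domain}). There is no real obstacle here beyond recognizing that the classical Paley--Zygmund + Bonami proof is fully abstract: it only uses the $(4,2)$-hypercontractivity bound, which is precisely what the paper's definition provides. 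The only cosmetic subtlety is that the constant $c$ is now expressed in terms of $C_4(\mu)$ rather than the universal $1/9$ one gets on $\bn$; by \Cref{eq:hyp-constant-growth} we have $C_4(\mu) \leq 2K$, so $c(\mu) \geq (2K)^{-4}$ is bounded below by an absolute positive constant in cases where we want one.
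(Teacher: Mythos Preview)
Your proof is correct and is essentially identical to the paper's own proof: both apply Paley--Zygmund to $p(\bx)^2$ with threshold $\theta=1/4$ and bound the fourth moment via the hypercontractive $(4,2)$ inequality, obtaining the same constant $c = C_4(\mu)^{-4}$.
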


\begin{proof}
	As $\Ex\sbra{p(\bx)^2}=1$, it follows by the definition of hypercontractivity that 
	\[\Ex\sbra{p(\bx)^4} \leq C_4(\mu)^{4d}.\]
	Applying the Paley--Zygmund inequality to the non-negative random variable $p(\bx)^2$, we get that 
	\begin{align*}
		\Prx_{\bx\sim\mun}\sbra{|p(\bx)| \geq \frac{1}{2}} & = \Prx_{\bx\sim\mun}\sbra{p(\bx)^2 \geq \frac{1}{4}}\\
		&\geq \pbra{\frac{3}{4}}^2\cdot\pbra{\frac{\Ex\sbra{p(\bx)^2}^2}{\Ex\sbra{p(\bx)^4}}}\\
		&\geq 0.5625\cdot\pbra{\frac{1}{C_4(\mu)^4}}^d,
	\end{align*}
	completing the proof. 
\end{proof}


\ignore{
}

The following proposition bounds Fourier weight up to level $k$ (i.e. $\bW^{\leq k}[f]$) in terms of the bias (i.e. the degree-$0$ Fourier coefficient) of the function. We note that an analogous result for functions over $\bn$ with the uniform measure is sometimes known as ``Chang's Lemma'' or ``Talagrand's Lemma'' \cite{changs-lemma,Talagrand:96}; see also Section 9.5 of \cite{ODonnell2014}.

\begin{proposition}[Level-$k$ inequalities] \label{prop:level-k-inequalities}
	Suppose $(\R,\mu)$ is hypercontractive and $f : \R^n \to \zo$ is a Boolean function. Then for all $1 \leq k \leq 2\log\pbra{\frac{1}{\vol(f)}}$ we have
	\[\bW^{\leq k}[f] \leq K^{k}\cdot \vol(f)^2\cdot \pbra{\log\pbra{\frac{1}{\vol(f)}}}^k\]
	where $K$ is a constant independent of $n$.
\end{proposition}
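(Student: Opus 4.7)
\medskip
\noindent\textbf{Proof proposal.} The plan is to follow the standard low-degree projection + Hölder + hypercontractivity argument (as in Chapter~9.5 of~\cite{ODonnell2014}, or the original Kahn--Kalai--Linial / Chang / Talagrand style), but with the hypercontractivity step invoked through the generalized Bonami inequality of \Cref{def:hypercontractive-domain}. Throughout, let $\alpha := \vol(f)$ and define the low-degree truncation
\[
g \;:=\; \sum_{\alpha' \in \N^n \,:\, |\alpha'| \leq k} \wh{f}(\alpha')\,\chi_{\alpha'}.
\]
Then $g$ is a degree-$\leq k$ polynomial, and by Parseval together with Plancherel,
\[
\bW^{\leq k}[f] \;=\; \|g\|_2^2 \;=\; \langle f, g\rangle.
\]

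First I would apply Hölder's inequality to the inner product: for any pair of conjugate exponents $p, q \geq 1$ with $1/p + 1/q = 1$,
\[
\|g\|_2^2 \;=\; \langle f, g\rangle \;\leq\; \|f\|_p \cdot \|g\|_q.
\]
Because $f$ takes values in $\{0,1\}$, we have $\|f\|_p = \alpha^{1/p}$ for every $p \geq 1$. Next, since $g$ has degree at most $k$, the hypercontractivity bound \eqref{eq:general-bonami} together with \eqref{eq:hyp-constant-growth} gives
\[
\|g\|_q \;\leq\; C_q(\mu)^k \,\|g\|_2 \;\leq\; (K\sqrt{q}\,)^k\,\|g\|_2,
\]
for all $q \geq 2$. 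Combining, $\|g\|_2^2 \leq \alpha^{1/p} (K\sqrt{q})^k \|g\|_2$, so dividing by $\|g\|_2$ (harmlessly, after handling the trivial case $g \equiv 0$) and squaring,
\[
\bW^{\leq k}[f] \;=\; \|g\|_2^2 \;\leq\; \alpha^{2/p}\,(K\sqrt{q}\,)^{2k} \;=\; \alpha^{2}\cdot \alpha^{-2/q}\cdot K^{2k}\,q^{k},
\]
where in the last step I used $2/p = 2 - 2/q$.

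The final step is to choose $q$ to balance the two $q$-dependent factors $\alpha^{-2/q}$ and $q^k$. Setting $q := 2\ln(1/\alpha)/k$ gives $\alpha^{-2/q} = e^{k}$, and the assumption $k \leq 2\log(1/\alpha)$ (reading $\log$ as the natural logarithm, as is standard in this context) ensures $q \geq 2$ so that the hypercontractivity inequality applies. Plugging this choice of $q$ back in yields
\[
\bW^{\leq k}[f] \;\leq\; \alpha^{2}\cdot e^{k}\cdot K^{2k}\cdot\left(\frac{2\ln(1/\alpha)}{k}\right)^{\!k} \;\leq\; \alpha^{2}\cdot\bigl(2eK^{2}\bigr)^{k}\cdot\bigl(\ln(1/\alpha)\bigr)^{k},
\]
which is the claimed bound with a constant $K' := 2eK^{2}$ (absorbing the extra $1/k^k$ into the $(2e)^k$ factor, or equivalently using $k^k \geq 1$). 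The only mild subtlety is the boundary case $k > \ln(1/\alpha)$, where $q < 2$ so Hölder cannot be used as above; this is exactly why the hypothesis $k \leq 2\log(1/\alpha)$ is imposed, and modulo the convention on the base of the logarithm this is tight.

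I expect no serious obstacle: the argument is a standard projection-plus-Hölder calculation, and the only point that requires care is the choice of $q$ and verification that the hypercontractivity hypothesis $q \geq 2$ holds throughout the allowed range of $k$. The constant $K$ in the final bound depends only on the absolute constant from \eqref{eq:hyp-constant-growth} (and not on $n$, $k$, or $\alpha$), as required.
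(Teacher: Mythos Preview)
Your proposal is correct and is essentially the same argument as the paper's: project to degree $\leq k$, apply H\"older, bound $\|f\|_p$ using $f\in\{0,1\}$, bound $\|g\|_q$ via hypercontractivity, divide through by $\|g\|_2$ and optimize the H\"older exponent. The only cosmetic difference is that the paper fixes a single norm $q=2\log(1/\vol(f))$ (independent of $k$), uses monotonicity of norms to pass from $\|g\|_t$ (with $t=q/k$) up to $\|g\|_q$, and invokes hypercontractivity at that fixed $q$; you instead apply hypercontractivity directly at the H\"older conjugate (your $q$ is the paper's $t$). Your route is marginally tighter before you discard the $1/k^k$ factor, and the paper's route sidesteps the ``$q\geq 2$'' boundary concern you flag (since the paper's hypercontractivity exponent does not shrink with $k$), but the substance is identical.
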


\begin{proof}
	Let $f^{\leq k} := \sum_{|\alpha| \leq k} \wh{f}(\alpha)\chi_\alpha$ and note that $\bW^{\leq k}[f] = \langle f, f^{\leq k} \rangle$. By H\"older's inequality, we have
	\begin{equation} \label{eq:level-k-holder}
		\langle f, f^{\leq k} \rangle \leq \|f\cdot f^{\leq k}\|_1 \leq \|f\|_{s}\cdot\|f^{\leq k}\|_{t}
	\end{equation}
	for all $1\leq s, t \leq \infty$ with $1/s + 1/t = 1$. We will take 
	\[
		t :=  \frac{q}{k} \quad \text{where}~q := 2\log\pbra{\frac{1}{\vol(f)}},
	\]
	noting that $t\geq 1$ as $k\leq q$. We next bound each term on the right hand side of \Cref{eq:level-k-holder} separately. First, note that 
	\begin{equation} \label{eq:holder-rhs-first-term}
		\|f\|_{s} \leq \vol(f)^{1/s} = \vol(f)^{1- 1/t} 
	\end{equation}
	as $f(x) \in \zo$ and $s \geq 1$. 
	Furthermore, as $t\leq q$, we have by hypercontractivity that  
	\begin{equation} \label{eq:holdr-rhs-second-term}
		\|f^{\leq k}\|_{t} \leq \|f^{\leq k}\|_q \leq C_q^k\cdot \|f^{\leq k}\|_2 = C_q^k\cdot\sqrt{\bW^{\leq k}[f]}
	\end{equation}
	where the first inequality is due to monotonicity of norms. 
	Combining \Cref{eq:level-k-holder,eq:holder-rhs-first-term,eq:holdr-rhs-second-term} and rearranging, we get that 
	\begin{align*}
		\bW^{\leq k}[f] &\leq C_q^{2k}\cdot \vol(f)^2 \cdot \pbra{\frac{1}{\vol(f)}}^{2/t}\\
		&\ = C_q^{2k}\cdot \vol(f)^2\cdot\pbra{\frac{1}{\vol(f)}}^{k/\log\pbra{\frac{1}{\vol(f)}}}\\
		&= C_q^{2k}\cdot \vol(f)^2 \cdot 2^k.
\end{align*}
Recall from \Cref{eq:hyp-constant-growth} (see also \Cref{rem:hyp-constant-growth}) that $C_q^2 \leq {K_1}^2\cdot q = 2{K_1}^2\log\pbra{\frac{1}{\vol(f)}}$ for an absolute constant $K$, and so we have 
\[\bW^{\leq k}[f] \leq {K}^k\cdot \vol(f)^2 \cdot \pbra{\log\pbra{\frac{1}{\vol(f)}}}^k \qquad\text{where } {K} = 4{K_1}^2,\]
completing the proof. 
\end{proof}

\begin{remark} \label{rem:level-k-general}
	We note that the \Cref{prop:level-k-inequalities} also holds for bounded functions $f:\R^n\to[-1,1]$ with $\vol(f) := \Ex[|f|]$, although we will not require this. 
\end{remark}



\section{An $O(n^{d/2})$-Sample Algorithm for Degree-$d$ PTFs}
\label{sec:ub}

In this section, we present a $O(n^{d/2})$-sample algorithm for distinguishing a hypercontractive product distribution $\mun$ from $\mun$ truncated by the satisfying assignments of a degree-$d$ PTF. More precisely, we prove the following in \Cref{subsec:proof-ub}: 

\begin{theorem} \label{thm:ptf-ub}
Let $\eps > 0$ and let {$(\R,\mu)$} be hypercontractive. 
There is an algorithm, \PTF~(\Cref{alg:ptf}), with the following performance guarantee: Given access to independent samples from any unknown distribution ${\cal D} \in \{\mun\} \cup \Pptf(d, 2^{-n^{1/2(2d+1)}})$, the algorithm uses $T$ samples where 
\[T := \Theta_d\pbra{\frac{n^{d/2}}{\min\cbra{1, \eps/(1-\eps), c^{\Theta(d)}/(1-\eps)}^2}}\]
with $c := c(\mu)$ as in \Cref{prop:anti-conc}, runs in $O_d(T\cdot n^d)$ time, and 
\begin{enumerate}

\item If ${\cal D}=\mun$, then with probability at least $9/10$ the algorithm outputs ``un-truncated;''

\item If $\dtv\pbra{\mun, \calD} \geq \eps$ (equivalently, ${\cal D}=\mun|_f$ for some degree-$d$ PTF $f$ with $2^{-n^{1/2(2d+1)}} \leq \vol(f) \leq 1-\eps$), then with probability at least $9/10$ the algorithm outputs ``truncated.''

\end{enumerate}
\end{theorem}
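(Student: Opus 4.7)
The plan is to turn the feature map $x \mapsto \wt{x} := (\chi_\alpha(x))_{0<|\alpha|\leq d}$, of dimension $m = O_d(n^d)$, into a U-statistic-style distinguisher for the squared mean $\vabs{\Ex_{\calD}[\wt{\bx}]}_2^2$. Concretely, \PTF~partitions its samples into two independent halves $\tx{1},\tx{2},\ldots$ and $\ty{1},\ty{2},\ldots$, forms the two empirical means $\bar X, \bar Y$ of $\wt{\cdot}$ across the halves, and thresholds $\bM := \abra{\bar X,\bar Y}$ at a suitable value. Because $\bar X$ and $\bar Y$ are independent, $\Ex[\bM] = \vabs{\Ex_{\calD}[\wt{\bx}]}_2^2$ exactly; and since $\Ex_{\mun}[\chi_\alpha] = 0$ for every $\alpha \neq 0^n$ while $\Ex_{\mun|_{f^{-1}(1)}}[\chi_\alpha] = \wh{f}(\alpha)/\vol(f)$, the task reduces to separating $\Ex[\bM] = 0$ from $\Ex[\bM] = (\bW^{\leq d}[f] - \vol(f)^2)/\vol(f)^2$ with high probability via Chebyshev.

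The first technical step is to lower bound this gap by the quantity appearing in $T$. The key idea is that the defining polynomial $p$ of $f = \Indicator\{p \geq 0\}$ is itself of degree $\leq d$, so Cauchy--Schwarz in the low-degree subspace yields
\[
\bW^{\leq d}[f] - \vol(f)^2 \;=\; \bW^{\leq d}[f-\vol(f)] \;\geq\; \frac{\abra{f-\vol(f),\,\tilde p}^2}{\|\tilde p\|_2^2}
\]
for the normalized $\tilde p := (p-\Ex[p])/\sqrt{\Var[p]}$. The inner product on the right equals $\Ex[\Indicator\{\tilde p \geq \theta\}\,\tilde p]$ with $\theta := -\Ex[p]/\sqrt{\Var[p]}$, and I would lower bound it by a short case split on $\theta$: for bounded $|\theta|$, the anti-concentration inequality \Cref{prop:anti-conc} produces an $\Omega(c(\mu)^d)$-mass event on which $|\tilde p| \geq 1/2$, controlling the inner product; for large $|\theta|$ one instead gets $\geq |\theta|\cdot\vol(f)$ directly, and the volume constraint $\vol(f)\leq 1-\eps$ forces $\vol(f)$ to be correspondingly small. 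Combining cases yields $\Ex[\bM] \geq \tau := \Omega\!\left(\min\{1,\,\eps/(1-\eps),\,c^{\Theta(d)}/(1-\eps)\}^2\right)$ whenever $\vol(f)\in[2^{-\sqrt n}, 1-\eps]$.

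The second and more delicate step is to bound $\Var[\bM] \leq O_d(n^d)/T^2$ under \emph{both} hypotheses. Writing $\Sigma_0 := \Ex_\calD[\wt{\bx}\wt{\bx}^\top]$ and $\mu := \Ex_\calD[\wt{\bx}]$, a direct expansion using independence of the halves gives $\Var[\bM] \leq \tr(\Sigma_0^2)/T^2 + 2\mu^\top \Sigma_0 \mu / T$, so everything reduces to controlling $\Ex_\calD[\chi_\alpha\chi_\beta]$ for $|\alpha|,|\beta|\leq d$. Under $\mun$ this moment is $\mathbf{1}\{\alpha = \beta\}$ and $\tr(\Sigma_0^2) = m = O_d(n^d)$ trivially. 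Under the truncated measure it equals $\Ex_{\mun}[\chi_\alpha\chi_\beta f]/\vol(f)$, which I would expand by writing $\chi_\alpha\chi_\beta = \sum_\gamma c_{\alpha\beta\gamma}\chi_\gamma$ in the basis, using \Cref{lem:linearization,lem:ub-linearization-coeffs} to bound the support pattern ($|\gamma|\leq 2d$, $\mathrm{supp}(\gamma)\subseteq \mathrm{supp}(\alpha)\cup\mathrm{supp}(\beta)$) and the size of the coefficients. Grouping the resulting double sum by $|\gamma| = k$: the number of pairs $(\alpha,\beta)$ that can feed a given such $\gamma$ is $O_d(n^{d-k/2})$, while the level-$k$ inequality \Cref{prop:level-k-inequalities} together with $\vol(f)\geq 2^{-\sqrt n}$ gives $\bW^{=k}[f]/\vol(f)^2 \leq O_k(n^{k/2})$; the two factors cancel to contribute $O_d(n^d)$ per level and hence $O_d(n^d)$ in total. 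The cross term $\mu^\top\Sigma_0\mu/T$ is handled by the same bookkeeping and is dominated by $\tr(\Sigma_0^2)/T^2$ for the chosen $T$.

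The principal obstacle relative to the Boolean hypercube sketch is the absence of a clean multiplicative formula for $\chi_\alpha\chi_\beta$ over a general hypercontractive product measure: instead of $\chi_\alpha\chi_\beta = \chi_{\alpha\triangle\beta}$ one only has the orthogonal expansion above with nontrivial coefficients $c_{\alpha\beta\gamma}$. I would therefore prove \Cref{lem:linearization,lem:ub-linearization-coeffs} as stand-alone ingredients, bounding $\sum_\gamma c_{\alpha\beta\gamma}^2 = \|\chi_\alpha\chi_\beta\|_2^2 \leq \|\chi_\alpha\|_4^2\|\chi_\beta\|_4^2 \leq O_d(1)$ by Cauchy--Schwarz and hypercontractivity, and controlling which multi-indices $\gamma$ carry non-zero coefficient. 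With these lemmas in hand, the level-by-level accounting above goes through essentially as in the Boolean case. Once the gap $\tau$ and the variance bound $O_d(n^d/T^2)$ are in place, Chebyshev's inequality applied with threshold $\tau/2$ and the choice $T = \Theta_d(n^{d/2}/\tau)$ simultaneously yields the claimed sample complexity and the $9/10$ success probability, while the runtime $O_d(T\cdot n^d)$ is immediate because $\bar X$ and $\bar Y$ are computed coordinate-wise over the $m$-dimensional feature map.
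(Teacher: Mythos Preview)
Your proposal is correct and follows essentially the same route as the paper: the same bipartite U-statistic $\bM$, the same gap lower bound via Cauchy--Schwarz against the normalized defining polynomial plus the anti-concentration case split on the threshold (this is exactly \Cref{lem:low-degree-weight}), and the same variance bound via the linearization lemmas and the level-$k$ inequality with the $\vol(f)\ge 2^{-\sqrt n}$ assumption. Your variance decomposition $\Var[\bM]\le \tr(\Sigma_0^2)/T^2 + 2\mu^\top\Sigma_0\mu/T$ is in fact slightly more careful than the paper's---the paper's displayed equality $\Var_{\calD_t}[\bM]=\tfrac{1}{T^2}\Var[\langle\wt\bx,\wt\by\rangle]$ tacitly drops the cross-covariances, which are nonzero in the truncated case---but as you note, the extra term is controlled by the same $T$ and does not change the argument.
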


\begin{algorithm}[t]
\caption{Distinguisher for degree-$d$ PTFs. Throughout the algorithm the constant $c := c(\mu)$ is as in the proof of \Cref{lem:low-degree-weight}.}
\label{alg:ptf}
\vspace{0.5em}
\textbf{Input:} $\calD\in\{\mun\}\cup\Pptf({d, 2^{-\sqrt{n}}})$, $\eps > 0$\\[0.5em]
\textbf{Output:} ``Un-truncated'' or  ``truncated''

\ 

\PTF$(\calD)$:
\begin{enumerate}
    \item Draw $2T$ independent sample points $\x{1}, \ldots, \x{T}, \y{1},\ldots, \y{T} \sim \calD$, where \[T := \Theta_d\pbra{\frac{n^{d/2}}{\min\cbra{1, \eps/(1-\eps), c^{\Theta(d)}/(1-\eps)}^2}}.\] 
    \item Compute the statistic $\bM$ where 
    \[\bM := \frac{1}{T^2} \abra{\sum_{i=1}^{T} \tx{i}, \sum_{j=1}^{T} \ty{j}} \qquad\text{with}\qquad \tx{i} := \pbra{\chi_{\alpha}(\x{i})}_{1\leq|\alpha|\leq d}\]
    and $\ty{j}$ defined similarly. 
    \item Output ``truncated'' if $\bM \geq \min\cbra{1, \pbra{\frac{\epsilon}{1-\epsilon}}, \frac{c^{\Theta(d)}}{(1-\eps)}}$
    	and ``un-truncated'' otherwise.
\end{enumerate}

\end{algorithm}

Before proceeding to the proof of \Cref{thm:ptf-ub}, we give a brief high-level description of \Cref{alg:ptf}. The algorithm draws $2T$ independent samples $\{\x{i}, \y{i}\}_{i\in T}$ where $T$ is as above, and then performs a \emph{feature expansion}
to obtain the $2T$ vectors $\{\tx{i}, \ty{i}\}_{i\in T}$ where 
\[\tx{i} := \pbra{\chi_\alpha(\x{i})}_{1\leq|\alpha|\leq d}\]
and $\ty{i}$ is defined similarly. The statistic $\bM$ employed by the algorithm to distinguish between the un-truncated and truncated is then given by 
\begin{enumerate}
	\item First computing the average of the kernelized $\tx{i}$ vectors and the kernelized $\ty{i}$ vectors; and then
	\item Taking the inner product between the two averaged kernel vectors.
\end{enumerate}
An easy calculation, given below, relates the statistic $\bM$ to the low-degree (but not degree-0) Fourier weight {of the truncation function (note that if no truncation is applied then the truncation function is identically 1)}. The analysis then proceeds by using anti-concentration of low-degree polynomials to show that the means of the estimators differ by $\Omega_\eps(1)$ between the two settings. We bound the variance of the estimator in both the un-truncated and truncated setting (using the level-$k$ inequalities at a crucial point in the analysis of the truncated setting), and given a separation between the means and a bound on the variances, it is straightforward to distinguish between the two settings using Chebyshev's inequality. 

\begin{remark}
We note that the trick of drawing a ``bipartite'' set of samples, i.e. drawing $2T$ samples $\{\x{i}, \y{i}\}_{i\in T}$, was recently employed in the algorithm of Diakonikolas,~Kane,~and~Pensia~\cite{DKP-SOSA} for the problem of Gaussian mean testing. For our problem we could have alternately used the closely related estimator $\bM'$ given by
\[\bM' := {T \choose 2}^{-1} \sum_{i\neq j} \abra{\tx{i},\tx{j}}\]
to distinguish between the un-truncated and truncated distributions via a similar but slightly more cumbersome analysis. {We note that the main technical tool used in the analysis of  Diakonikolas,~Kane,~and~Pensia~\cite{DKP-SOSA} is the  Carbery-Wright anti-concentration inequality for degree-2 polynomials in Gaussian random variables, whereas our argument uses the above-mentioned kernelization approach and other consequences of hypercontractivity, namely the level-$k$ inequalities, beyond just anti-concentration.}
\end{remark}

\subsection{Useful Preliminaries}
\label{subsec:ub-prelims}

The following lemma will be crucial in obtaining a lower-bound for the expectation of our test statistic $\Ex[\bM]$ in the truncated setting; we note that an analogous statement in the setting of the Boolean hypercube was obtained by Gotsman and Linial~\cite{GL:94}.

\begin{lemma} \label{lem:low-degree-weight}
Suppose $(\R, \mu)$ is hypercontractive. If $f:\R^n\to\zo$ is a degree-$d$ PTF, then 
\[\sum_{1\leq|\alpha|\leq d} \wh{f}(\alpha)^2 \geq \Omega\pbra{\min\cbra{\vol(f), 1-\vol(f), c^{\Theta(d)}}}^2\]
for an absolute constant $c := c(\mu) \in (0, 1]$. 
\end{lemma}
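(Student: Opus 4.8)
The plan is to adapt the Gotsman--Linial argument~\cite{GL:94} for the low-degree Fourier weight of PTFs, substituting our anti-concentration inequality (\Cref{prop:anti-conc}) for the hypercube-specific inputs. I would first dispose of trivial cases and normalize. Fix a degree-$\le d$ polynomial $p$ with $f=\Indicator\{p\ge 0\}$; if $p$ is constant then $f$ is constant, $\min\{\vol(f),1-\vol(f)\}=0$, and there is nothing to prove. It suffices to prove the claim for each perturbed PTF $f_\delta:=\Indicator\{p+\delta\ge 0\}$ with $\delta>0$ chosen (from the co-countable set) so that $\Pr_{\bx\sim\mun}[p(\bx)=-\delta]=0$, and then let $\delta\to 0^+$, since $\vol(f_\delta)\to\vol(f)$ and $\wh{f_\delta}(\alpha)\to\wh f(\alpha)$ for every $\alpha$ by dominated convergence. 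So I may assume $\Pr_{\bx\sim\mun}[p(\bx)=0]=0$, whence $g:=2f-1$ equals $\operatorname{sgn}(p)$ $\mun$-almost surely and is $\{-1,1\}$-valued. After rescaling $p$ assume $\Var_{\mun}[p]=1$ and write $p=\theta+r$ with $\theta:=\Ex[p]$, $\Ex[r]=0$, $\Var[r]=1$, $\deg(r)\le d$; replacing $(p,f)$ by $(-p,1-f)$ if necessary (this leaves both $\sum_{1\le|\alpha|\le d}\wh f(\alpha)^2$ and $\min\{\vol(f),1-\vol(f)\}$ unchanged and flips the sign of $\theta$) I may further assume $\theta\ge 0$.

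The key step exploits $\deg(p)\le d$. Let $u:=g^{\le d}-\wh g(0^n)=\sum_{1\le|\alpha|\le d}\wh g(\alpha)\chi_\alpha$; since $\wh g(\alpha)=2\wh f(\alpha)$ for $\alpha\ne 0^n$, we have $\|u\|_2^2=4\sum_{1\le|\alpha|\le d}\wh f(\alpha)^2$. As $r$ is a degree-$\le d$ polynomial with zero constant term, Plancherel's formula gives $\langle u,r\rangle=\langle g,r\rangle=\Ex[g\, r]=\mathrm{Cov}(g,r)=\mathrm{Cov}(\operatorname{sgn}(p),p)$, which is $\ge 0$; since $\|r\|_2=1$, Cauchy--Schwarz gives $\|u\|_2\ge\mathrm{Cov}(\operatorname{sgn}(p),p)$. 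Therefore $\sum_{1\le|\alpha|\le d}\wh f(\alpha)^2\ge\tfrac14\,\mathrm{Cov}(\operatorname{sgn}(p),p)^2$, and it remains only to show
\[\mathrm{Cov}(\operatorname{sgn}(p),p)\ \ge\ \Omega\!\big(c^{d}\big)\cdot\min\{\vol(f),\,1-\vol(f)\},\]
where $c=c(\mu)$ is the constant of \Cref{prop:anti-conc}; the lemma then follows by squaring, absorbing the fixed $(\mu,d)$-dependent constant $c^{2d}$ into $\Omega(\cdot)$ and using $\min\{\vol(f),1-\vol(f)\}\ge\min\{\vol(f),1-\vol(f),c^{\Theta(d)}\}$.

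For the covariance bound I would use the elementary identity (valid since $\theta\ge 0$ and $\Ex[r]=0$)
\[\mathrm{Cov}(\operatorname{sgn}(p),p)=2\,\Ex\!\big[r\cdot\mathbf{1}\{r\ge-\theta\}\big]=2\,\Ex\!\big[|r|\cdot\mathbf{1}\{r<-\theta\}\big],\]
together with $\Ex[|r|]\ge\tfrac12\Pr[|r|\ge\tfrac12]\ge\Omega(c^d)$, which follows from \Cref{prop:anti-conc} applied to $r$. Then I split into two cases. If $\theta\le\tfrac14\Ex[|r|]$: from the first expression, $\mathrm{Cov}\ge\Ex[|r|]-2\theta\ge\tfrac12\Ex[|r|]=\Omega(c^d)$. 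If $\theta>\tfrac14\Ex[|r|]$: since $|r|>\theta$ on $\{r<-\theta\}$, the second expression gives $\mathrm{Cov}\ge 2\theta\Pr[r<-\theta]=2\theta\,(1-\vol(f))\ge\tfrac12\Ex[|r|]\,(1-\vol(f))=\Omega(c^d)(1-\vol(f))$ (and in the orientation obtained from $(p,f)\mapsto(-p,1-f)$ the right-hand factor becomes $\vol(f)$). In every case $\mathrm{Cov}(\operatorname{sgn}(p),p)\ge\Omega(c^d)\min\{\vol(f),1-\vol(f)\}$, as required. I expect the main points requiring care to be the bookkeeping around the normalization and the two orientations, and keeping straight which constants are allowed to depend on $\mu$ and $d$; all the substantive analysis is packaged inside \Cref{prop:anti-conc}.
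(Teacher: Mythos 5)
Your overall route is essentially the paper's: both arguments reduce, via Plancherel and Cauchy--Schwarz, to lower bounding the correlation between $f$ (or $2f-1$) and the normalized degree-$d$ polynomial defining it, and then invoke \Cref{prop:anti-conc}; your covariance identity, the perturbation handling of $\Prx[p(\bx)=0]>0$, and the two-case estimate of $\mathrm{Cov}(\operatorname{sgn}(p),p)$ all check out. The problem is the final step. What your case analysis actually proves is the \emph{product-form} bound $\sum_{1\leq|\alpha|\leq d}\wh{f}(\alpha)^2 \geq \Omega\pbra{c^{2d}}\cdot\min\cbra{\vol(f),1-\vol(f)}^2$, because in both cases the covariance is funneled through $\Ex[|r|]\geq\Omega(c^d)$. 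This does not imply the stated bound $\Omega\pbra{\min\cbra{\vol(f),1-\vol(f),c^{\Theta(d)}}}^2$: in the regime $\min\cbra{\vol(f),1-\vol(f)}\ll c^{\Theta(d)}$ --- precisely the small-volume regime the paper needs, down to $\vol(f)=2^{-\sqrt{n}}$ --- the lemma promises $\Omega\pbra{\min\cbra{\vol(f),1-\vol(f)}^2}$ with a constant independent of $d$, whereas your bound loses the extra multiplicative factor $c^{2d}$. Your move of ``absorbing the $(\mu,d)$-dependent constant into $\Omega(\cdot)$'' is not available here: the paper writes $O_d,\Omega_d,\Theta_d$ whenever $d$-dependent constants are hidden, and the entire point of placing $c^{\Theta(d)}$ \emph{inside} the minimum is that the $d$-dependence enters only through that term (it is then carried explicitly into the threshold and sample complexity of \Cref{alg:ptf} and \Cref{thm:ptf-ub}).

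The repair is to avoid paying $\Ex[|r|]\geq\Omega(c^d)$ when the threshold is far from the mean, which is exactly what the paper's trichotomy does. In your second case, split further on $\theta$: if $\theta\geq 1/2$, then $\mathrm{Cov}\geq 2\theta\pbra{1-\vol(f)}\geq 1-\vol(f)$ (respectively $\vol(f)$ in the flipped orientation), with no $c^d$ loss --- this is the paper's Case 1; if $\theta<1/2$, one wants an outright $\Omega\pbra{c^{\Theta(d)}}$ bound, but for that you need the mass-$\Omega(c^d)$ tail from \Cref{prop:anti-conc} to lie on the correct side of the threshold, so the orientation flip $(p,f)\mapsto(-p,1-f)$ must be chosen according to which of the two anti-concentration branches holds (as the paper does), rather than according to the sign of $\Ex[p]$. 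With those modifications your covariance argument does recover the stated min-form bound. (As an aside, your weaker product-form inequality would in fact suffice for the downstream use in \Cref{thm:ptf-ub} when $d$ is a constant, since the constants there are $\Theta_d(\cdot)$ anyway; but it falls short of the lemma as stated.)
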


\begin{proof}
We may assume that 
\[f(x) = \Indicator\{p(x) \geq \theta \}\]
where $p:\R^n\to\R$ is a degree-$d$ polynomial with $\E[p(\bx)]=\widehat{p}(0^n)=0$ and $\|p\|_2^2 = \Var[p] = \sum_\alpha \wh{p}(\alpha)^2 = 1$. By Cauchy--Schwarz and Plancherel, we get
\begin{align}
	{\sum_{1\leq |\alpha|\leq d} \wh{f}(\alpha)^2} &= {\pbra{\sum_{1\leq |\alpha|\leq d} \wh{f}(\alpha)^2}\pbra{\sum_{1\leq |\alpha|\leq d} \wh{p}(\alpha)^2}} \nonumber \\
	&\geq \pbra{\sum_{1 \leq |\alpha|\leq d} \wh{f}(\alpha)\cdot\wh{p}(\alpha)}^2\nonumber \\
	&= { \pbra{\Ex\sbra{f(\bx)\cdot p(\bx)}}^2.} \label{eq:potato}
\end{align}
where we made use of the fact that $p$ is a degree-$d$ polynomial with $\widehat{p}(0^n)=0$ and $\Var\sbra{p(\bx)} = 1$. Note that by \Cref{prop:anti-conc}, we have that either 
\begin{equation} \label{eq:anti-conc-casework}
	\Prx_{\bx\sim\mun}\sbra{p(\bx) \geq \frac{1}{2}} \geq \Omega\pbra{c^d} \qquad \text{or} \qquad \Prx_{\bx\sim\mun}\sbra{p(\bx) \leq -\frac{1}{2}} \geq \Omega\pbra{c^d}. 
\end{equation}
Suppose that it is the former. (As we will briefly explain later, the argument is symmetric in the latter case.) We further break the analysis into cases depending on the magnitude of $\theta$. 

\paragraph{Case 1: $\theta \geq \frac{1}{2}$.}  In this case, we have by \Cref{eq:potato} that 
\begin{align*}
	{\sum_{1\leq |\alpha|\leq d} \wh{f}(\alpha)^2} &\geq \pbra{\Ex\sbra{f(\bx)\cdot p(\bx)}}^2 \\
	&= \pbra{\Ex\sbra{ \Indicator_{p(\bx)\geq\theta } \cdot p(\bx)}}^2 \nonumber \\
	&\geq (\vol(f)\cdot\theta)^2\\
	&\geq \Omega\pbra{\vol(f)^2},
\end{align*}
and so the result follows.

\paragraph{Case 2: $0 \leq \theta < \frac{1}{2}$.} In this case, we have by \Cref{prop:anti-conc} that
\[\vol(f) = \Prx_{\bx\sim\mun}\sbra{p(\bx) \geq \theta} \geq \Prx_{\bx\sim\mun}\sbra{p(\bx) \geq \frac{1}{2}} \geq \Omega\pbra{c^d}.\]
Once again by \Cref{eq:potato}, we have 
\[{\sum_{1\leq |\alpha|\leq d} \wh{f}(\alpha)^2} \geq \pbra{\Ex\sbra{f(\bx)\cdot p(\bx)}}^2 
\geq \pbra{{\frac 1 2} \Pr\left[p(\bx) \geq {\frac 1 2}\right]}^2\geq \Omega\pbra{c^{\Theta(d)}},
\]
where the second inequality follows from {$f \cdot p$ being always non-negative and at least ${\frac 1 2}$ with probability $\Pr[p(\bx) \geq {\frac 1 2}].$}

\paragraph{Case 3: $\theta < 0$.}

Consider the degree-$d$ PTF $f^{\dagger} := 1-f$ given by 
\[f^\dagger(x) = \mathbf{1}\cbra{p(x) < \theta}.\]
It is easy to check that $|\wh{f^\dagger}(\alpha)| = |\wh{f}(\alpha)|$ for all $S \neq \emptyset$ and that $\vol(f^\dagger) = 1-\vol(f)$. Repeating the above analysis then gives that 
\[\sum_{1\leq|\alpha|\leq d} \wh{f}(\alpha)^2 = \sum_{1\leq|\alpha|\leq d} \wh{f^\dagger}(\alpha)^2 \geq  \Omega\pbra{\vol(f^\dagger)^2\cdot c^{\Theta(d)}} = \Omega\pbra{\pbra{1-\vol(f)}^2\cdot c^{\Theta(d)}}. \]

\

Putting Cases 1 through 3 together, we get that 
\begin{equation}
	\sum_{1\leq|\alpha|\leq d} \wh{f}(\alpha)^2 \geq \Omega\pbra{\min\cbra{\vol(f), 1-\vol(f), c^{\Theta(d)}}}^2,
\end{equation}
completing the proof. Recall, however, that we assumed that 
\[\Prx_{\bx\sim\mun}\sbra{p(\bx) \geq \frac{1}{2}} \geq \Omega\pbra{c^d}\] 
in \Cref{eq:anti-conc-casework}. Suppose that we instead have 
\[\Prx_{\bx\sim\mun}\sbra{p(\bx) \leq -\frac{1}{2}} \geq \Omega\pbra{c^d}.\]
Then note that the same trick used in Case 3 by considering $f^\dagger$ instead of $f$ and repeating the three cases completes the proof. 
\end{proof}

We will also require the following two lemmas which are closely linked to the \emph{linearization problem for orthogonal polynomials} (see Section 6.8 of~\cite{andrews1999special}). The first lemma bounds the magnitude of the Fourier coefficients of the \emph{product} of basis functions; while the estimate below relies on hypercontractivity, we note that exact expressions for the Fourier coefficients are known for various classes of orthogonal polynomials including the Chebyshev, Hermite, and Laguerre polynomials.

\begin{lemma} \label{lem:ub-linearization-coeffs}
	Suppose $(\R, \mu)$ is hypercontractive. Let $\alpha, \beta, \gamma \in \N^n$. Then 
	\[\abra{\chi_\alpha, \chi_\beta\cdot\chi_\gamma} \leq C_4({\mu}\ignore{\mun})^{|\beta| + |\gamma|}.\]
\end{lemma}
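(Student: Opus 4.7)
The plan is to reduce the claim to two applications of Cauchy–Schwarz followed by one invocation of the hypercontractive estimate from \Cref{def:hypercontractive-domain}. Concretely, first I would rewrite the inner product as an expectation, namely
\[
\langle \chi_\alpha, \chi_\beta\cdot\chi_\gamma\rangle
= \Ex_{\bx\sim\mu^{\otimes n}}\sbra{\chi_\alpha(\bx)\,\chi_\beta(\bx)\,\chi_\gamma(\bx)},
\]
and then apply Cauchy–Schwarz against the orthonormal basis element $\chi_\alpha$ (which has $\|\chi_\alpha\|_2=1$) to obtain
\[
\abs{\langle \chi_\alpha, \chi_\beta\cdot\chi_\gamma\rangle} \leq \|\chi_\alpha\|_2\cdot\|\chi_\beta\cdot\chi_\gamma\|_2 = \|\chi_\beta\cdot\chi_\gamma\|_2.
\]

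Next I would bound $\|\chi_\beta\cdot\chi_\gamma\|_2$ by decoupling the two factors. Writing the $L^2$-norm squared as $\Ex[\chi_\beta(\bx)^2\cdot\chi_\gamma(\bx)^2]$ and applying Cauchy–Schwarz once more yields
\[
\|\chi_\beta\cdot\chi_\gamma\|_2^2 \leq \Ex\sbra{\chi_\beta(\bx)^4}^{1/2}\cdot\Ex\sbra{\chi_\gamma(\bx)^4}^{1/2} = \|\chi_\beta\|_4^2\cdot\|\chi_\gamma\|_4^2.
\]
Now the key point: since we chose the Fourier basis to consist of polynomials with $\deg(\chi_\delta) = |\delta|$ (see the discussion following \Cref{def:Fourier-basis}), the functions $\chi_\beta$ and $\chi_\gamma$ are themselves multivariate polynomials on $\R^n$ of degrees $|\beta|$ and $|\gamma|$ respectively. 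Applying hypercontractivity (\Cref{eq:general-bonami}) with $q=4$ to each gives
\[
\|\chi_\beta\|_4 \leq C_4(\mu)^{|\beta|}\cdot\|\chi_\beta\|_2 = C_4(\mu)^{|\beta|},
\qquad
\|\chi_\gamma\|_4 \leq C_4(\mu)^{|\gamma|},
\]
where we used that $\|\chi_\beta\|_2 = \|\chi_\gamma\|_2 = 1$ by orthonormality of the basis. Combining these bounds yields $\|\chi_\beta\cdot\chi_\gamma\|_2 \leq C_4(\mu)^{|\beta|+|\gamma|}$, which substituted into the first Cauchy–Schwarz gives the desired inequality.

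There is essentially no obstacle here — the proof is a short chain of standard estimates — but the one small conceptual point worth flagging is the need to invoke the polynomial-basis convention so that $\chi_\beta$ and $\chi_\gamma$ qualify as polynomials of the claimed degrees for the application of the Bonami-type bound in \Cref{eq:general-bonami}. Everything else is routine.
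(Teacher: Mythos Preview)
Your proof is correct and follows essentially the same approach as the paper: two applications of Cauchy--Schwarz (first to peel off $\chi_\alpha$ using $\|\chi_\alpha\|_2=1$, then to separate $\chi_\beta^2$ from $\chi_\gamma^2$), followed by hypercontractivity with $q=4$ applied to each factor using the polynomial-basis convention. Your explicit mention of the absolute value and the degree convention for $\chi_\beta,\chi_\gamma$ are nice touches that the paper leaves more implicit.
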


\begin{proof}
	Using Cauchy--Schwarz, we have that 
	\begin{align*}
		\abra{\chi_\alpha, \chi_\beta\cdot\chi_\gamma} &= \Ex_{\bx\sim\mun}\sbra{\chi_\alpha(\bx)\cdot\pbra{\chi_\beta(\bx)\cdot\chi_\gamma(\bx)}}\\
		&\leq \sqrt{\Ex_{\bx\sim\mun}\sbra{\chi_{\alpha}(\bx)^2}\Ex_{\bx\sim\mun}\sbra{\chi_{\beta}(\bx)^2\cdot \chi_{\gamma}(\bx)^2}}\\
		& = \sqrt{\Ex_{\bx\sim\mun}\sbra{\chi_{\beta}(\bx)^2\cdot \chi_{\gamma}(\bx)^2}}
		\intertext{due to the orthonormality of $\chi_{\alpha}$. Using Cauchy--Schwarz once again, we have that}
		\abra{\chi_\alpha, \chi_\beta\cdot\chi_\gamma}
		&\leq \pbra{\Ex_{\bx\sim\mun}\sbra{\chi_{\beta}(\bx)^4}\cdot\Ex_{\bx\sim\mun}\sbra{\chi_{\gamma}(\bx)^4}}^{1/4}\\
		&= \|\chi_\beta\|_4\cdot\|\chi_\gamma\|_4\\
		&\leq C_4(\ignore{\mun}\mu)^{|\beta| + |\gamma|}
	\end{align*}
	where the final inequality uses hypercontractivity and 
\ignore{orthonormality of $\chi_\beta$ and $\chi_\gamma$, completing the proof.}the fact that $\chi_\beta$ ($\chi_\gamma$ respectively) is a polynomial of two-norm 1 and degree at most $|\beta|$ (at most $|\gamma|$ respectively).
\end{proof}

Finally, we also require the following combinatorial lemma: 

\begin{lemma} \label{lem:linearization}
	Given $d\in\N$ and a fixed multi-index $\alpha\in\N^n$ with $|\alpha| := k \leq 2d$, we have 
	\[\#\cbra{(\beta, \gamma) \in \N^n\times\N^n : |\beta|,|\gamma| \leq d \text{ and } \abra{\chi_{\alpha}, \chi_{\beta}\chi_{\gamma}} \neq 0 } \leq O_d\pbra{n^{d-\ceil{k/2}}}.\]
\end{lemma}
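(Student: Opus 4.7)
The plan is to exploit the product structure of the Fourier basis to reduce the claim to a combinatorial counting problem governed by a ``triangle inequality'' on exponents. Since $\chi_\alpha = \prod_i \chi_{\alpha_i}(x_i)$, and similarly for $\chi_\beta, \chi_\gamma$, the inner product factorizes as
\[
\langle \chi_\alpha, \chi_\beta \chi_\gamma\rangle = \prod_{i=1}^n \Ex_{\bx_i \sim \mu}\!\left[\chi_{\alpha_i}(\bx_i)\chi_{\beta_i}(\bx_i)\chi_{\gamma_i}(\bx_i)\right].
\]
So the inner product is non-zero iff every univariate factor is non-zero.

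The key single-coordinate observation, which comes from the assumption that $\chi_j$ is a polynomial of degree exactly $j$ and is orthogonal to every polynomial of lower degree, is the following ``triangle inequality'': if $\Ex[\chi_{\alpha_i}\chi_{\beta_i}\chi_{\gamma_i}]\neq 0$, then each of $\alpha_i,\beta_i,\gamma_i$ is at most the sum of the other two. Two consequences will drive the argument. First, for $i\notin\supp(\alpha)$ (so $\alpha_i=0$), the factor vanishes unless either $\beta_i=\gamma_i=0$ or $\beta_i=\gamma_i\ne 0$; in the latter case the coordinate contributes at least $2$ to $|\beta|+|\gamma|$. Second, for $i\in\supp(\alpha)$, the triangle inequality forces $\beta_i+\gamma_i\ge\alpha_i\ge 1$.

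Let $A := \supp(\alpha)$ and $T := \{i\notin A : \beta_i=\gamma_i\ne 0\}$. Combining the two observations, whenever $\langle\chi_\alpha,\chi_\beta\chi_\gamma\rangle\ne 0$ we have
\[
2d \;\ge\; |\beta|+|\gamma| \;=\; \sum_{i\in A}(\beta_i+\gamma_i) + \sum_{i\in T}(\beta_i+\gamma_i) \;\ge\; \sum_{i\in A}\alpha_i + 2|T| \;=\; k + 2|T|,
\]
so $|T|\le d-k/2$, and since $|T|$ is an integer, $|T|\le d-\lceil k/2\rceil$. To count the pairs $(\beta,\gamma)$: choose $|T|\in\{0,1,\dots,d-\lceil k/2\rceil\}$ in $O_d(1)$ ways, then choose $T\subseteq [n]\setminus A$ of that size in at most $\binom{n}{|T|}=O_d(n^{d-\lceil k/2\rceil})$ ways; for each $i\in T$ pick the common value $\beta_i=\gamma_i\in\{1,\dots,d\}$, giving $O_d(1)$ choices per index. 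Finally, for each of the $|A|\le k\le 2d$ coordinates in $A$, the pair $(\beta_i,\gamma_i)$ ranges over a set of size at most $(d+1)^2$, contributing another $O_d(1)$ factor overall. Multiplying yields the desired $O_d(n^{d-\lceil k/2\rceil})$ bound.

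The main obstacle is ensuring the triangle-inequality constraint is actually available, which requires the assumption (stated in \Cref{subsec:fourier-101}) that the basis $\chi_j$ consists of polynomials of degree exactly $j$, so that $\chi_j$ is orthogonal to every polynomial of degree $< j$. Given this structural fact, the rest is straightforward bookkeeping; the only mildly delicate point is converting $|T|\le d-k/2$ into $|T|\le d-\lceil k/2\rceil$ by using integrality of $|T|$.
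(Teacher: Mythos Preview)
Your proof is correct and follows essentially the same approach as the paper: both use the factorization of $\langle\chi_\alpha,\chi_\beta\chi_\gamma\rangle$ over coordinates, the orthogonality-based constraints $\beta_i=\gamma_i$ for $i\notin\supp(\alpha)$ and $\beta_i+\gamma_i\ge\alpha_i$ for $i\in\supp(\alpha)$, and then a combinatorial count. The only difference is bookkeeping: the paper argues that WLOG $\sum_{i\in\supp(\alpha)}\beta_i\ge\lceil k/2\rceil$ and hence $\sum_{i\notin\supp(\alpha)}\beta_i\le d-\lceil k/2\rceil$, whereas you directly bound $|T|\le d-\lceil k/2\rceil$ via $|\beta|+|\gamma|\le 2d$; both yield the same $O_d(n^{d-\lceil k/2\rceil})$ count.
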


\begin{proof}
We will upper bound the number of pairs $(\beta, \gamma)$ such that $|\beta|,|\gamma| \leq d$ and $\abra{\chi_{\alpha}, \chi_{\beta}\chi_{\gamma}} \neq 0$; so fix such a pair $(\beta,\gamma)$.
  
  We first note that for any $i\notin\supp(\alpha)$, we must have
  $\beta_i = \gamma_i$; for otherwise $\abra{\chi_{\alpha}, \chi_{\beta}\chi_{\gamma}} = 0$ due to orthonormality
  of $\chi_{\beta_i}$ and $\chi_{\gamma_i}$.
  Also, for each $i\in \supp(\alpha)$, we must have $\beta_i + \gamma_i \geq \alpha_i$,
  since
  \ignore{otherwise as $\chi_{\beta_i}\chi_{\gamma_i} \in \mathrm{span}\cbra{\chi_{j} : j \leq \beta_i + \gamma_i}$. In more detail,} if $\beta_i + \gamma_i < \alpha_i$, then 
  \[
  \ignore{\abra{\beta_i\cdot\gamma_i, \alpha_i}}
\abra{\chi_{\beta_i} \cdot \chi_{\gamma_i}, \chi_{\alpha_i}}
   = 0\]
by orthonormality and the fact that $\chi_{\beta_i} \cdot \chi_{\gamma_i}$ is a linear combination of basis functions $\{\chi_j\}_{j<\alpha_i}.$
\ignore{  as the Fourier transform of $\chi_{\beta_i}\chi_{\gamma_i}$ will only have non-zero Fourier coefficients on polynomials with degree strictly less than $\alpha_i$.
}

  Summing over all $i\in\supp(\alpha)$, we get that 
  \[\sum_{i\in\supp(\alpha)} \beta_i + \gamma_i \geq \sum_{i\in\supp(\alpha)} \alpha_i = |\alpha|,\]
  and so it follows that either 
  \[\sum_{i\in \supp(\alpha)} \beta_i \geq \ceil{k/2} \qquad\text{or}\qquad \sum_{i\in \supp(\alpha)} \gamma_i \geq \ceil{k/2};\]
  without loss of generality we suppose it is the former. 
It follows that the total number of ways of choosing such a pair $(\beta,\gamma)$ is bounded by
\begin{align*}
O_d(1) \cdot O_d(1) \cdot \sum_{j=0}^{d-\lceil k/2 \rceil} 
  (n-|\supp(\alpha)|)^j \leq
  O_d(1) \cdot \sum_{j=0}^{d-\lceil k/2 \rceil} 
  n^j 
  \leq O_d(n^{d-\ceil{k/2}})
\end{align*}
where
\begin{itemize}
  	\item The first two $O_d(1)$ factors bound the number of possible outcomes of $(\beta_i)_{i \in \supp(\alpha)}$ (recall that each $\beta_i \in [0,d]$ and $|\supp(\alpha)| \leq 2d$) and $(\gamma)_{i \in \supp(\alpha)}$ respectively); and
 
 	\item The third term on the LHS upper bounds the number of choices for $\pbra{\beta_i}_{i\notin\supp(\alpha)}$. Recall that for  $i\notin\supp(\alpha)$ we must have
  $\beta_i = \gamma_i$, and so 
  \[\pbra{\beta_i}_{i\notin\supp(\alpha)} = \pbra{\gamma_i}_{i\notin\supp(\alpha)}\]
  so we need only bound the number of choices of $\pbra{\beta_i}_{i\notin\supp(\alpha)}$;
moreover, as
  \[
  	\sum_{i} \beta_i \leq d \quad \text{and } \sum_{i\in\supp(\alpha)} \beta_i \geq \ceil{k/2}, \text{ we must have}  \sum_{i\notin\supp(\alpha)} \beta_i \leq d - \ceil{k/2}, 
  \]
  \end{itemize}
completing the proof. 
\end{proof}

\subsection{Proof of \Cref{thm:ptf-ub}}
\label{subsec:proof-ub}


\begin{proofof}{\Cref{thm:ptf-ub}}
Throughout, we will write 
\[m := \#\cbra{\alpha\in\N^n : 1 \leq |\alpha| \leq d}.\]
Recalling that there are at most ${{n+k-1}\choose k}$ multi-indices $\alpha\in\N^n$ with $|\alpha| = k$, we have that 
\[m\leq O_d(n^d).\]

We compute the mean and variance of the estimator $\bM$ in each case (i.e. when $\calD = \mun$ and when $\calD \in \Pptf(d, 2^{-n^{1/2(2d+1)}})$) separately.  

\paragraph{Case 1: $\calD = \mun$.} For brevity (and to distinguish the calculations in this case from the following one), we denote the un-truncated distribution by 
\[\calD_u := \mun.\]
When $\x{1},\ldots,\x{T},\y{1},\ldots,\y{T}\sim\calD_u$, we have by bi-linearity of the inner product that
\begin{align}
	\Ex_{\calD_u}\sbra{\bM} &= \frac{1}{T^2} \sum_{i,j=1}^{T} \Ex_{\calD_u}\sbra{\abra{\tx{i},\ty{j}}} \nonumber\\
	&=\sum_{1\leq|\alpha|\leq d} \Ex_{\calD_u}\sbra{\chi_\alpha(\bx)\cdot\chi_\alpha(\by)} \nonumber \\
	&= \sum_{1\leq|\alpha|\leq d} \Ex_{\calD_u}\sbra{\chi_\alpha(\bx)}\cdot\Ex_{\calD_u}\sbra{\chi_\alpha(\by)} \nonumber \\
	&= 0 \label{eq:untrunc-means}
\end{align}
as $\bx,\by\sim\calD_u$ are independent samples and also because of the orthonormality of $\{\chi_{\alpha}\}$. 

\ignore{
Turning to the variance, we have 
\begin{align}
	\Varx_{\calD_u}[\bM] = \frac{1}{T^4}\pbra{\sum_{i,j=1}^{T} \Varx_{\calD_u}\sbra{\abra{\tx{i},\ty{j}}}} = \frac{1}{T^2}\pbra{\Ex_{\calD_u}\sbra{\abra{\wt{\bx},\wt{\by}}^2} - \Ex_{\calD_u}\sbra{\abra{\wt{\bx},\wt{\by}}}^2 } \label{eq:generic-variance}
\end{align}
where $\bx,\by\sim\calD$ are independent samples. In this case, we have by our previous calculation (\Cref{eq:untrunc-means}) that this is in fact equal to
\begin{align}
	\Varx_{\calD_u}[\bM] &= \frac{1}{T^2}\pbra{\Ex_{\calD_u}\sbra{\abra{\wt{\bx}, \wt{\by}}\abra{\wt{\by}, \wt{\bx}}}} \nonumber\\
	&= \frac{1}{T^2}\pbra{\Ex_{{\bx} \sim \calD_u}\sbra{{\wt{\bx}}^{\mathsf{T}}\Ex_{{\by} \sim \calD_u}\sbra{\wt{\by}\cdot \wt{\by}^{\mathsf{T}}}\wt{\bx}}} \nonumber\\
	&= \frac{1}{T^2}\pbra{\Ex_{\calD_u}\sbra{\sum_{1\leq|\alpha|\leq d}\chi_\alpha(\bx)^2}} \nonumber\\
	&= \frac{m}{T^2} \nonumber\\
	&\leq O_d\pbra{\frac{n^d}{T^2}}\label{eq:var-untrunc}
\end{align}
where we used orthonormality as well as the fact that $\Ex_{\calD_u}\sbra{\wt{\by}\cdot \wt{\by}^{\mathsf{T}}} = \mathrm{Id}_m.$
}

{
Turning to the variance, we have 
\begin{align}
	\Varx_{\calD_u}[\bM] 
	= \frac{1}{T^4}\cdot\Varx_{\calD_u}\sbra{\sum_{i,j=1}^{T} \abra{\tx{i},\ty{j}}} 
	= \frac{1}{T^4}\cdot\sum_{i,j,k,\ell = 1}^T \Cov\pbra{\langle \tx{i},\ty{j}\rangle, \langle \tx{k},\ty{\ell} \rangle } \label{eq:generic-variance}
\end{align}
where $\Cov(\bX, \bY) = \E[\bX\cdot\bY] - \E[\bX]\cdot\E[\bY]$ is the covariance of the random variables $\bX$ and $\bY$. 
In particular, 
\begin{align} 
	\Cov\pbra{\langle \tx{i},\ty{j}\rangle, \langle \tx{k},\ty{\ell} \rangle } 
	&= \Ex_{\calD_u}\sbra{\langle \tx{i},\ty{j}\rangle\cdot\langle \tx{k} ,\ty{\ell} \rangle} - \Ex_{\calD_u}\sbra{\langle \tx{i},\ty{j}\rangle}\cdot\Ex_{\calD_u}\sbra{\langle \tx{k} ,\ty{\ell} \rangle} \nonumber \\
	&= \Ex_{\calD_u}\sbra{\langle \tx{i},\ty{j}\rangle\cdot\langle \tx{k} ,\ty{\ell} \rangle} \label{eq:broadsheet}
\end{align}
thanks to~\Cref{eq:untrunc-means}. 
We consider three cases: 
\begin{itemize}
	\item If $i \neq k$ and $j\neq \ell$, then we immediately have $\Cov\pbra{\langle \tx{i},\ty{j}\rangle, \langle \tx{k},\ty{\ell} \rangle } = 0$ thanks to independence.  
	\item If $i = k$ but $j \neq \ell$, then 
	\begin{align*}
		\Ex\sbra{\langle \tx{i},\ty{j}\rangle\cdot\langle \tx{k} ,\ty{\ell} \rangle}
		&= \Ex\sbra{\langle \ty{j}, \tx{i}\rangle\cdot\langle \tx{i} ,\ty{\ell} \rangle} \\ 
		&= \Ex\sbra{\ty{j}\cdot\underbrace{\Ex\sbra{\tx{i}\cdot(\tx{i})^\top}}_{= \mathrm{Id}_m }\cdot\, \ty{\ell}} \\
		&= \Ex\sbra{\abra{\ty{j}, \ty{\ell}}} \\
		&= 0 
	\end{align*} 
	thanks to~\Cref{eq:untrunc-means}. 
	Note that the case when $i \neq k$ but $j = \ell$ is identical by symmetry. 
	\item If $i = k$ and $j = \ell$, then proceeding as in the previous case, we have 
	\begin{align*}
		\Ex\sbra{\langle \tx{i},\ty{j}\rangle\cdot\langle \tx{k} ,\ty{\ell} \rangle}
		 &= \Ex\sbra{\ty{j} \cdot \Ex\sbra{\tx{i}\cdot(\tx{i})^\top}\cdot\ty{j} } \\
		 &= \Ex\sbra{\abra{\ty{j}, \ty{j}} } \\
		 &= \sum_{1 \leq |\alpha| \leq d} \Ex\sbra{\chi_\alpha(\by)^2} = m
	\end{align*}
	thanks to the orthonormality of $\{\chi_\alpha\}$.
\end{itemize}
Combining the above with~\Cref{eq:generic-variance,eq:broadsheet}, we get that 
\begin{equation} \label{eq:var-untrunc}
	\Varx_{\calD_u}[\bM] = \frac{\#\{(i, j, k, \ell) : i = k, j = \ell \}\cdot m}{T^4} = \frac{m}{T^2}. 
\end{equation}

}

\paragraph{Case 2: $\dtv\pbra{\mun, \calD}\geq\eps$ with $\calD\in\Pptf(d, 2^{-n^{1/2(2d+1)}})$.} For brevity, we denote the truncated distribution by 
\[\calD_t := \mun|_{f^{-1}(1)}.\] 
In this case, 
$f:\R^n\to\zo$ is a degree-$d$ PTF with 
\[\vol(f) \in \sbra{2^{-\sqrt{n}}, 1-\epsilon}.\]
We may assume that 
\[f(x) = \Indicator\{p(x) \geq \theta \}\]
where $p:\R^n\to\R$ is a degree-$d$ polynomial with $\widehat{p}(0^n)=0$ and $\|p\|_2^2 = \Var[p] = \sum_\alpha \wh{p}(\alpha)^2 = 1$. 

We have the following easy relation between the Fourier coefficients of $f$ and the means of the characters $\{\chi_\alpha\}$ under the truncated distribution $\calD_t$:
\begin{equation} \label{eq:scaled-fourier-coeffs}
	\Ex_{\bx\sim\calD_t}\sbra{\chi_\alpha(\bx)} = \frac{1}{\vol(f)}\Ex_{\bx\sim\calD_u}\sbra{f(\bx)\chi_\alpha(\bx)} = \frac{\wh{f}(\alpha)}{\vol(f)}.
\end{equation}
We thus have
\begin{align}
	\Ex_{\calD_t}\sbra{ \bM } &= \sum_{1\leq|\alpha|\leq d} \Ex_{\calD_t}\sbra{\chi_\alpha(\bx)}\cdot\Ex_{\calD_t}\sbra{\chi_\alpha(\by)} \nonumber \\
	& = \frac{1}{\vol(f)^2}\sum_{1\leq|\alpha|\leq d} \wh{f}(\alpha)^2 \nonumber\\
	& \geq \frac{1}{\vol(f)^2}\cdot \Omega\pbra{\min\cbra{\vol(f), 1-\vol(f), c^{\Theta(d)}}}^2 \nonumber\\
	& \geq \Omega\pbra{\min\cbra{1, \pbra{\frac{1-\vol(f)}{\vol(f)}}, \frac{c^{\Theta(d)}}{\vol(f)}}}^2 \label{eq:trunc-means}
\end{align}
where the second line follows from \Cref{eq:scaled-fourier-coeffs} and the final inequality is due to \Cref{lem:low-degree-weight}; here $c := c(\mu)$ is as in \Cref{prop:anti-conc}.

{
Turning to the variance of the estimator, we have as before (\Cref{eq:generic-variance}) that 
\begin{equation} \label{eq:reusing-generic-variance-ub}
	\Varx_{\calD_t}[\bM] 
	= \frac{1}{T^4}\cdot\sum_{i,j,k,\ell = 1}^T \Cov\pbra{\langle \tx{i},\ty{j}\rangle, \langle \tx{k},\ty{\ell} \rangle }.
\end{equation}
Note that 
\begin{align} 
	\Cov\pbra{\langle \tx{i},\ty{j}\rangle, \langle \tx{k},\ty{\ell} \rangle } 
	&= \Ex_{\calD_t}\sbra{\langle \tx{i},\ty{j}\rangle\cdot\langle \tx{k} ,\ty{\ell} \rangle} - \Ex_{\calD_t}\sbra{\langle \tx{i},\ty{j}\rangle}\cdot\Ex_{\calD_t}\sbra{\langle \tx{k} ,\ty{\ell} \rangle} \label{eq:muck} \\
	&= \Ex_{\calD_t}\sbra{\langle \tx{i},\ty{j}\rangle\cdot\langle \tx{k} ,\ty{\ell} \rangle} - \pbra{\frac{1}{\vol(f)^2}\sum_{1 \leq |\alpha| \leq d}\wh{f}(\alpha)^2}^2\label{eq:99-hancock-st} \\
	&\leq \Ex_{\calD_t}\sbra{\langle \tx{i},\ty{j}\rangle\cdot\langle \tx{k} ,\ty{\ell} \rangle} \label{eq:dali}
\end{align}
where~\Cref{eq:99-hancock-st} relies on~\Cref{eq:scaled-fourier-coeffs} and the independence of $\tx{i}$ and $\ty{j}$. 
As before, we will consider three cases: 
\begin{itemize}

	\item If $i \neq k$ and $j\neq \ell$, then we immediately have $\Cov\pbra{\langle \tx{i},\ty{j}\rangle, \langle \tx{k},\ty{\ell} \rangle } = 0$ thanks to independence from \Cref{eq:muck}. 

	\item If $i = k$ but $j \neq \ell$, then  
	\begin{align}
		\Ex\sbra{\langle \tx{i},\ty{j}\rangle\cdot\langle \tx{k} ,\ty{\ell} \rangle}
		&= \Ex\sbra{\langle \ty{j}, \tx{i}\rangle\cdot\langle \tx{i} ,\ty{\ell} \rangle} \nonumber \\ 
		&= \Ex\sbra{\ty{j}\cdot\underbrace{\Ex\sbra{\tx{i}\cdot(\tx{i})^\top}}_{=: A}\cdot\, \ty{\ell}} \nonumber \\
		&= \Ex\sbra{\ty{j}}\cdot A\Ex\sbra{\ty{\ell}} \nonumber \\
		&\leq \vabs{\Ex\sbra{\ty{j}}}_2^2\cdot \|A\|_{2} \label{eq:nitro-shandy} \\
		&= \pbra{\frac{1}{\vol(f)^2}\sum_{1\leq |\alpha| \leq d} \wh{f}(\alpha)^2}\cdot\|A\|_{2}\label{eq:purple-haze}
	\end{align} 
	where~\Cref{eq:nitro-shandy} relies on the fact that \smash{$\Ex\sbra{\ty{j}} = \Ex\sbra{\ty{\ell}}$} 
	and \Cref{eq:purple-haze} follows from \Cref{eq:scaled-fourier-coeffs}. 
	Also, note that $A$ is an $m\times m$ matrix whose $(\beta, \gamma)^\text{th}$ entry (for $\beta, \gamma \in \N^n$ satisfying $1 \leq |\beta|, |\gamma| \leq d$) is given by 
	\begin{equation} \label{eq:a-def}
		A_{\beta, \gamma} = \Ex_{\bx\sim\calD_t}\sbra{\chi_\beta(\bx)\chi_{\gamma}(\bx)}.
	\end{equation}
	We will control the final quantity in~\Cref{eq:purple-haze} shortly; for now, note that the case when $i \neq k$ but $j = \ell$ is identical by symmetry.

	\item If $i = k$ and $j = \ell$, then 
	\begin{align}
		\Ex_{\calD_t}\sbra{\langle \tx{i},\ty{j}\rangle\cdot\langle \tx{k} ,\ty{\ell} \rangle} &= \Ex_{\calD_t}\sbra{\langle \tx{i},\ty{j}\rangle^2} \nonumber \\
		&= \Ex_{\calD_t}\sbra{\sum_{1\leq|\beta|,|\gamma|\leq d} \chi_{\beta}(\bx)\chi_{\gamma}\pbra{\bx}\chi_{\beta}(\by)\chi_{\gamma}\pbra{\by} } \nonumber \\
		&= \sum_{1\leq |\beta|, |\gamma| \leq d} \Ex_{\calD_t}\sbra{\chi_\beta(\bx)\chi_\gamma(\bx)}^2. \label{eq:trunc-case-3-final-quantity}
	\end{align}
\end{itemize}
Note that a similar term, namely $\Ex_{\calD_t}\sbra{\chi_\beta(\bx)\chi_\gamma(\bx)}$, appears in both~\Cref{eq:a-def,eq:trunc-case-3-final-quantity}. We will rely on the level-$k$ inequality (\Cref{prop:level-k-inequalities}) to control this quantity. 

As in \Cref{eq:scaled-fourier-coeffs}, we have
\begin{align}
	\sum_{1\leq |\beta|,|\gamma|\leq d}	\Ex_{\calD_t}\sbra{\chi_{\beta}\chi_{\gamma}}^2 &= \sum_{1\leq |\beta|,|\gamma|\leq d}	\pbra{\frac{1}{\vol(f)}\Ex_{\calD_u}\sbra{f\cdot\chi_{\beta}\chi_{\gamma}}}^2 \nonumber\\
	&= \frac{1}{\vol(f)^2}\sum_{{1 \leq}|\beta|,|\gamma|\leq d} \pbra{\Ex_{\calD_u}\sbra{f\cdot\pbra{\sum_{|\alpha|\leq 2d} \abra{\chi_\beta\chi_\gamma, \chi_\alpha}\chi_\alpha }}}^2 \nonumber \\
	&= \frac{1}{\vol(f)^2}\sum_{1\leq|\beta|,|\gamma|\leq d} \pbra{\sum_{|\alpha|\leq 2d} \abra{\chi_\beta\chi_\gamma, \chi_\alpha}\wh{f}(\alpha)}^2 \nonumber\\
	&\leq \frac{O_d(1)}{\vol(f)^2}\sum_{1\leq |\beta|,|\gamma|\leq d}\pbra{\sum_{|\alpha|\leq 2d} {|}\wh{f}(\alpha){|}\cdot\mathbf{1}\pbra{\abra{\chi_\beta\chi_\gamma, \chi_\alpha}\neq 0}}^2 \label{eq:lin-coeffs-small}
\end{align}
where the $O_d(1)$ factor in \Cref{eq:lin-coeffs-small} comes from the RHS of 
 \Cref{lem:ub-linearization-coeffs}. Now, observe that for a fixed $(\beta, \gamma)$ pair with $1 \leq |\beta|,|\gamma| \leq d$ there are $O_d(1)$ many multi-indices $\alpha$ such that $\abra{\chi_\beta\chi_\gamma, \chi_\alpha} \neq 0$ as we must have $\supp(\alpha) \sse \supp(\beta) \cup \supp(\gamma)$. 
{Combining this observation with an easy application of the Cauchy-Schwarz inequality:
\[\pbra{\sum_{i=1}^{t} a_i}^2 \leq t\cdot\pbra{\sum_{i=1}^{t} a_i^2},\]
we have}
\begin{align*}
	\sum_{1\leq |\beta|,|\gamma|\leq d}	\Ex_{\calD_t}\sbra{\chi_{\beta}\chi_{\gamma}}^2
	&\leq\frac{O_d(1)}{\vol(f)^2}\sum_{1\leq |\beta|,|\gamma|\leq d}\sum_{|\alpha|\leq 2d}  \wh{f}(\alpha)^2\cdot\mathbf{1}\pbra{\abra{\chi_\beta\chi_\gamma, \chi_\alpha}\neq 0} \\
	&= \frac{O_d(1)}{\vol(f)^2}\sum_{|\alpha|\leq 2d}\sum_{1\leq |\beta|,|\gamma|\leq d}  \wh{f}(\alpha)^2\cdot\mathbf{1}\pbra{\abra{\chi_\beta\chi_\gamma, \chi_\alpha}\neq 0}\\
	&=\frac{O_d(1)}{\vol(f)^2}\sum_{k=0}^{2d} \sum_{|\alpha|= k}\sum_{1\leq |\beta|,|\gamma|\leq d}  \wh{f}(\alpha)^2\cdot\mathbf{1}\pbra{\abra{\chi_\beta\chi_\gamma, \chi_\alpha}\neq 0} \nonumber \\
	&\leq \frac{O_d(1)}{\vol(f)^2}\sum_{k=0}^{2d} \sum_{|\alpha|= k} \wh{f}(\alpha)^2\cdot O_d\pbra{n^{d-\ceil{k/2}}} \nonumber 
\end{align*}
where the final inequality is due to \Cref{lem:linearization}. We can rephrase the last inequality as
\begin{align}
	\sum_{1\leq |\beta|,|\gamma|\leq d}	\Ex_{\calD_t}\sbra{\chi_{\beta}\chi_{\gamma}}^2 &\leq\frac{O_d(1)}{\vol(f)^2}\sum_{k=0}^{2d} O_d\pbra{n^{d-\ceil{k/2}}}\cdot \bW^{=k}[f]. \label{eq:bomboloni}
\end{align}

Returning to the variance calculation, combining~\Cref{eq:reusing-generic-variance-ub,eq:dali,eq:purple-haze,eq:trunc-case-3-final-quantity} yields the following:
\begin{align} 
	\Varx_{\calD_t}[\bM] 
	&\leq \frac{1}{T^4}\cdot\pbra{T^3\pbra{\frac{\bW^{\leq d}[f]\cdot \|A\|_{2}}{\vol(f)^2}} + T^2\pbra{\sum_{1\leq |\beta|,|\gamma|\leq d}	\Ex_{\calD_t}\sbra{\chi_{\beta}\chi_{\gamma}}^2}}. 
	\label{eq:chobani-coffee} 
\end{align}

As $f:\R^n\to\zo$ is Boolean-valued, we have 
\begin{equation} \label{eq:naive-fourier-ub}
	\vol(f) = \Ex_{\bx\sim\mun}[f(\bx)] = \Ex_{\bx\sim\mun}[f(\bx)^2] = \sum_{\alpha\in\N^n} \wh{f}(\alpha)^2
	\qquad\text{and so}\qquad 
	\bW^{\leq k}[f] \leq \vol(f)
\end{equation}
for all $k \in \N$. 
If $\vol(f) \geq 2^{-d}$, then~\Cref{eq:bomboloni,eq:naive-fourier-ub} imply that 
\[
	\sum_{1\leq |\beta|,|\gamma|\leq d}	\Ex_{\calD_t}\sbra{\chi_{\beta}\chi_{\gamma}}^2
	 \leq \frac{O_d(n^d)}{\vol(f)}.
\] 
Furthermore, recall that 
\[
	\|A\|_{2} \leq \|A\|_F = \sqrt{\sum_{1\leq |\beta|,|\gamma| \leq d} \Ex_{\calD_t}\sbra{\chi_{\beta}\chi_{\gamma}}^2} = \sqrt{\frac{O_d(n^d)}{\vol(f)}}
\]
thanks to the definition of $A$ (cf. \Cref{eq:a-def}).
Combining these observations with~\Cref{eq:chobani-coffee} and applying~\Cref{eq:naive-fourier-ub} gives 
\[
	\Varx_{\calD_t}[\bM] \leq O_d(1)\cdot\pbra{\frac{n^{d/2}}{T\cdot \vol(f)^{3/2}} + \frac{n^d}{T^2\cdot\vol(f)}} = O_d\pbra{\frac{n^d}{T^2}}. 
\]
as $\vol(f) \geq 2^{-d}$. 

When $\vol(f) \leq 2^{-d}$, the bound due to~\Cref{eq:naive-fourier-ub} is too loose for our purposes and we will instead rely on the level-$k$ inequalities (\Cref{prop:level-k-inequalities}). 
We will also require a bit more care when bounding $\|A\|_{2}$ and cannot directly pass to the Frobenius norm. 
Recalling~\Cref{eq:bomboloni}, 
\begin{align}
	\sum_{1\leq |\beta|,|\gamma|\leq d}	\Ex_{\calD_t}\sbra{\chi_{\beta}\chi_{\gamma}}^2 
	&\leq\frac{O_d(1)}{\vol(f)^2}\sum_{k=0}^{2d} O_d\pbra{n^{d-\ceil{k/2}}}\cdot \bW^{=k}[f]\nonumber \\
	&\leq \frac{O_d(1)}{\vol(f)^2}\sum_{k=0}^{2d} n^{d-\ceil{k/2}}\cdot \pbra{\vol(f)^2\cdot \pbra{\log\pbra{\frac{1}{\vol(f)}}}^k} \label{eq:drive}\\
	&= O_d(1)\pbra{\sum_{k=0}^{2d} n^{d-\ceil{k/2}}\cdot \pbra{\log\pbra{\frac{1}{\vol(f)}}}^k},  \label{eq:first-level-k-app} \\
	&\leq O_d(1)\cdot n^{d}, \label{eq:plow}
\end{align}
where \Cref{eq:drive} is by the level-$k$ inequality (\Cref{prop:level-k-inequalities}),\footnote{Note that we can indeed use the level-$k$ inequalities with $k = 2d$ because $\vol(f) \leq 2^{-d}$.} and~\Cref{eq:plow} follows from our assumption that $\vol(f) \geq 2^{-n^{1/2(2d+1)}}$.

Recalling~\Cref{eq:purple-haze}, we have 
\[
	\|A\|_2 \leq \vabs{\mathrm{diag}(A)}_2 + \vabs{A - \mathrm{diag}(A)}_2 \leq \vabs{\mathrm{diag}(A)}_2 + \vabs{A - \mathrm{diag}(A)}_F
\]
where $\mathrm{diag}(A)$ is the $m\times m$ diagonal matrix given by the diagonal entries of $A$. 

We will first give an upper bound on $\|\mathrm{diag}(A)\|_2$. Note that 
\begin{align*}
	\mathrm{diag}(A)_{\beta, \beta} = \Ex_{\calD_t}\sbra{\chi_\beta^2} &= \frac{1}{\vol(f)}\Ex_{\calD_u}\sbra{f\cdot\chi_\beta^2} \\
	&= \frac{1}{\vol(f)}\sum_{|\alpha|\leq 2d}\Ex_{\calD_u}\sbra{f\cdot\chi_\alpha \abra{\chi_\beta^2, \chi_\alpha}} \\
	&= \frac{1}{\vol(f)} \sum_{|\alpha|\leq 2d} \wh{f}(\alpha)  \abra{\chi_\beta^2, \chi_\alpha} \\
	&\leq \frac{1}{\vol(f)} \sqrt{\bW^{\leq 2d}[f]}\cdot\sqrt{\sum_{|\alpha|\leq 2d} \wh{\chi^2_\beta}(\alpha)^2} \tag{Cauchy-Schwarz} \\
	&\leq O_d(1) \pbra{\log\pbra{\frac{1}{\vol(f)}}}^{d} \cdot \sqrt{\abra{\chi_\beta^2, \chi_\beta^2}} \tag{\Cref{prop:level-k-inequalities}} \\
	&\leq O_d(1)\pbra{\log\pbra{\frac{1}{\vol(f)}}}^d. \tag{\Cref{lem:ub-linearization-coeffs}}
\end{align*}
It immediately follows that 
\begin{equation}
\label{eq:diag-op-ub}
	\vabs{\mathrm{diag}(A)}_2 \leq O_d(1)\pbra{\log\pbra{\frac{1}{\vol(f)}}}^d. 
\end{equation}

Turning to $\vabs{A - \mathrm{diag}(A)}_F$ and proceeding as we did to obtain~\Cref{eq:lin-coeffs-small}, we get
\begin{align*}
	\vabs{A - \mathrm{diag}(A)}_F ^2= \sum_{\substack{1 \leq |\beta|, |\gamma| \leq d \\ \beta \neq \gamma}} \Ex_{\calD_t}\sbra{\chi_\beta\chi_\gamma}^2 
	&\leq \frac{O_d(1)}{\vol(f)^2}\sum_{\substack{1\leq |\beta|,|\gamma|\leq d \\ \beta\neq\gamma}}\pbra{\sum_{|\alpha|\leq 2d} {|}\wh{f}(\alpha){|}\cdot\mathbf{1}\pbra{\abra{\chi_\beta\chi_\gamma, \chi_\alpha}\neq 0}}^2 \\ 
	&\leq \frac{O_d(1)}{\vol(f)^2}\sum_{\substack{1\leq |\beta|,|\gamma|\leq d \\ \beta\neq\gamma}}{\sum_{|\alpha|\leq 2d} \wh{f}(\alpha)^2\cdot\mathbf{1}\pbra{\abra{\chi_\beta\chi_\gamma, \chi_\alpha}\neq 0}} \\
	&= \frac{O_d(1)}{\vol(f)^2}\sum_{k=0}^{2d}\sum_{|\alpha| = k}\sum_{\substack{1\leq |\beta|,|\gamma|\leq d \\ \beta\neq\gamma}}{\wh{f}(\alpha)^2\cdot\mathbf{1}\pbra{\abra{\chi_\beta\chi_\gamma, \chi_\alpha}\neq 0}} \\
	&= \frac{O_d(1)}{\vol(f)^2}\sum_{k=1}^{2d}\sum_{|\alpha| = k}\sum_{\substack{1\leq |\beta|,|\gamma|\leq d \\ \beta\neq\gamma}}{\wh{f}(\alpha)^2\cdot\mathbf{1}\pbra{\abra{\chi_\beta\chi_\gamma, \chi_\alpha}\neq 0}},
\end{align*}
where the second inequality uses Cauchy--Schwarz and the same reasoning that was used immediately after \Cref{eq:lin-coeffs-small}, and the final equality is because if $|\alpha| = 0$ then $\chi_\alpha \equiv 1$, and so 
\[
	\abra{\chi_\beta\chi_\gamma, \chi_\alpha} = \abra{\chi_\beta,\chi_\gamma} = \begin{cases}
 1 & \beta = \gamma \\ 
 0 & \beta\neq \gamma	
 \end{cases}
\]
thanks to orthonormality of the $\{\chi_\beta\}$. Proceeding as we did to obtain~\Cref{eq:first-level-k-app} (via~\Cref{lem:linearization} and~\Cref{prop:level-k-inequalities}) then gives us
\begin{equation} \label{eq:second-level-k-app}
	\vabs{A - \mathrm{diag}(A)}_F^2 \leq O_d(1)\pbra{\sum_{k=1}^{2d} n^{d-\ceil{k/2}}\cdot \pbra{\log\pbra{\frac{1}{\vol(f)}}}^k}.
\end{equation}
Note that this is identical to~\Cref{eq:first-level-k-app}, except that the summation starts at $k=1$; this difference is crucial. 
To summarize, we have 
\[
	\|A\|_2 \leq O_d(1)\pbra{\pbra{\log\pbra{\frac{1}{\vol(f)}}}^d + \sqrt{\sum_{k=1}^{2d} n^{d-\ceil{k/2}}\cdot \pbra{\log\pbra{\frac{1}{\vol(f)}}}^k}}.
\]

Returning to~\Cref{eq:chobani-coffee}, the level-$k$ inequality (\Cref{prop:level-k-inequalities}) and the above bound on $\|A\|_2$ together yield the following:   
\begin{align}
	\frac{\bW^{\leq d}[f]\cdot\|A\|_2}{\vol(f)^2} 
	&\leq O_d\pbra{\pbra{\log\pbra{\frac{1}{\vol(f)}}}^{2d} + \sqrt{\sum_{k=1}^{2d} n^{d -\ceil{k/2}}\cdot\pbra{\log\pbra{\frac{1}{\vol(f)}}}^{2d + k}}}. \nonumber
\end{align}
Thanks to our assumption that $\vol(f) \geq 2^{-n^{1/2(2d+1)}}$, we have 
\[
	\pbra{\log\pbra{\frac{1}{\vol(f)}}}^{2d} \leq n^{1/2}
\]
and furthermore 
\[
	\pbra{\log\pbra{\frac{1}{\vol(f)}}}^{2d+k} \leq n^{\frac{2d+k}{2(2d+1)}} \leq n^{k/2} \qquad \text{for}~k \geq 1. 
\]
In particular, it follows that
\begin{equation} \label{eq:61200-midterm-1}
	\frac{\bW^{\leq d}[f]\cdot\|A\|_2}{\vol(f)^2}  \leq O_d(n^{d/2}). 
\end{equation}

}

Putting \Cref{eq:chobani-coffee,eq:plow,eq:61200-midterm-1} together, we have 
\begin{equation}
	\label{eq:var-trunc}
	\Varx_{\calD_t}[\bM] \leq O_d\pbra{\frac{n^d}{T^2}}.
\end{equation}

\medskip

%

%
%
%
%
%

To summarize, when $\calD = \mun$, we have from \Cref{eq:untrunc-means,eq:var-untrunc} that 
\begin{equation}
\Ex_{\calD_u}[\bM] = 0 \qquad \text{and} \qquad \Varx_{\calD_u}[\bM] = O_d\pbra{\frac{n^d}{T^2}}.
\label{eq:over}
\end{equation}
On the other hand, when $\calD \in \Pptf(d, 2^{-n^{1/2(2d+1)}})$ with $\dtv(\calD, \mun) \geq \epsilon$ , we have from \Cref{eq:trunc-means,eq:var-trunc} that
\begin{equation}
\Ex_{\calD_t}[\bM]\geq \Omega\pbra{\min\cbra{1, \pbra{\frac{\eps}{1-\epsilon}}, \pbra{\frac{c^{\Theta(d)}}{1-\epsilon}}}}^2 \qquad \text{and}\qquad \Varx_{\calD_t}[\bM] \leq O_d\pbra{\frac{n^d}{T^2}}.
\label{eq:bones}
\end{equation}
For the number of samples being
\[T = \Theta_d\pbra{\frac{n^{d/2}}{\min\cbra{1, \eps/(1-\eps), c^{\Theta(d)}/(1-\eps)}^2}},\]
the correctness of the distinguishing algorithm \Cref{alg:ptf} follows directly from \Cref{eq:over} and \Cref{eq:bones} by a simple application of  Chebyshev's inequality.
\end{proofof}


\section{An $\Omega(n^{d/2})$-Sample Lower Bound for Distinguishing PTFs}
\label{sec:lb}

In this section, we show that there exists a hypercontractive product distribution for which distinguishing truncation by a degree-$d$ PTF requires $\Omega(n^{d/2})$ samples. In fact, we prove this for arguably the simplest hypercontractive product measure, namely the uniform distribution on the Boolean hypercube $\bn$. As discussed in the introduction, this strengthens the $\Omega(n^{d/4})$-sample lower bound for distinguishing low-volume PTFs (cf. \Cref{appendix:baseline-distinguisher}) and matches our upper bound from \Cref{thm:ptf-ub} up to an absolute constant factor (depending only on the degree-$d$).

\begin{notation}
Throughout this section, we will write $\calD_u$ to denote the uniform distribution over $\bn$.
\end{notation}

\begin{theorem} \label{thm:lb-ptf}
	Let $A$ be any algorithm which is given access to samples from some unknown distribution $\calD$ over $\R^n$ and has the following performance guarantee:
	\begin{enumerate}
		\item If $\calD = \calD_u$, then with probability at least $99/100$ the algorithm $A$ outputs ``un-truncated''; and 
		\item If $\calD = \calD_u|_{f^{-1}(1)}$ where $f:\bn\to\zo$ is a degree-$d$ PTF with $\vol(f) \in \sbra{1/2 \pm o_n(1)}$, then with probability at least $99/100$ the algorithm $A$ outputs ``truncated''.
	\end{enumerate}
	Then $A$ must use at least $\Omega(n^{d/2})$ samples from $\calD$.
\end{theorem}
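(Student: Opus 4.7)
The plan is to exhibit a prior distribution $\calF_d$ over degree-$d$ PTFs such that (i) a $1-o(1)$ fraction of $\boldf \sim \calF_d$ satisfies $\vol(\boldf) \in 1/2 \pm o(1)$, and (ii) for any $m = o(n^{d/2})$, the mixture of $\calD_u|_{\boldf^{-1}(1)}^{\otimes m}$ over $\boldf\sim\calF_d$ is $o(1)$-close in total variation to $\calD_u^{\otimes m}$. Given this, any successful distinguisher would contradict the data-processing inequality. Concretely I would take $\calF_d$ to be the law of $\boldf(x) = \Indicator\{\boldp(x) \geq 0\}$ where
\[
\boldp(x) = \sum_{|S|=d} \wh{\boldp}(S)\, \chi_S(x), \qquad \wh{\boldp}(S) \overset{\text{i.i.d.}}{\sim} N(0,1),
\]
and $\chi_S(x) = \prod_{i \in S} x_i$ is the usual Fourier basis on $\bn$. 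Homogeneity and the sign-symmetry $\boldp \leftrightarrow -\boldp$ immediately give $\Ex[\vol(\boldf)] = 1/2$, and a standard Gaussian concentration argument (for $\boldp$ is a centered Gaussian with variance $\binom{n}{d}$ at every point) yields $\vol(\boldf) = 1/2 \pm o(1)$ with probability $1-o(1)$.

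The next step is to reduce the mixture-indistinguishability question to a ``shadow sample'' statement. Using the same shadow-sample trick as in \Cref{claim:lb-reduction} (sample $m$ hypercube points ahead of time, then reject-sample the ones landing outside $\boldf^{-1}(1)$), it suffices to show that if $\bu_1,\ldots,\bu_m \sim \calD_u$ are drawn i.i.d.\ and revealed to the algorithm, then with probability $1-o(1)$ over the $\bu_i$'s the random bit string $(\boldf(\bu_1),\ldots,\boldf(\bu_m))$ is $o(1)$-close in total variation (over the remaining randomness $\boldf \sim \calF_d$) to $m$ uniform independent bits $(\bb_1,\ldots,\bb_m)$.

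I would then lift this to a Gaussian comparison. For any fixed $u_1,\ldots,u_m \in \bn$, the vector $(\boldp(u_1),\ldots,\boldp(u_m)) \in \R^m$ is a centered Gaussian with covariance
\[
\boldSigma_{ij} = \sum_{|S|=d} \chi_S(u_i)\chi_S(u_j),
\]
and after rescaling by $\binom{n}{d}^{-1/2}$ (which does not alter the sign pattern of $\boldp$), the diagonal entries equal $1$. Since the sign pattern of $N(0,I_m)$ is uniform on $\{0,1\}^m$, it suffices to show
\[
\dtv\!\bigl(N(0,\boldSigma),\, N(0,I_m)\bigr) = o(1)
\]
with probability $1-o(1)$ over $\bu_1,\ldots,\bu_m$. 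I would invoke the quantitative TV bound of Devroye--Mehrabian--Reddad \cite{DMR20}, which controls $\dtv(N(0,\Sigma),N(0,I_m))$ by (essentially) $\|\Sigma - I_m\|_{\mathrm{F}}$ whenever $\Sigma$ is not too singular. The zero-diagonal of $I_m - \boldSigma$ means the Frobenius norm equals $\sum_{i\neq j} \boldSigma_{ij}^2$, and a direct computation gives
\[
\Ex_{\bu_i,\bu_j}\bigl[\boldSigma_{ij}^2\bigr] = \binom{n}{d}^{-1}\Ex\!\left[\Bigl(\sum_{|S|=d}\prod_{\ell \in S}(\bu_i)_\ell(\bu_j)_\ell\Bigr)^2\right] = \binom{n}{d}^{-1},
\]
since the cross terms vanish by independence of coordinates. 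Summing over the $m(m-1)$ ordered pairs yields $\Ex[\|I_m - \boldSigma\|_{\mathrm{F}}^2] = O(m^2/\binom{n}{d}) = o(1)$ when $m = o(n^{d/2})$, and Markov's inequality gives the high-probability bound on the $\bu_i$'s.

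The main obstacle I anticipate is verifying the regularity hypothesis of the \cite{DMR20} bound, i.e.\ ruling out that $\boldSigma$ is too close to singular on the bad event. A fourth-moment computation bounding $\Ex\|I_m - \boldSigma\|_{\mathrm{F}}^4$ (again by expanding into monomials in the $(\bu_i)_\ell$ and using the fact that every monomial of odd total degree in some coordinate vanishes in expectation) should give concentration of $\|I_m - \boldSigma\|_{\mathrm{F}}$ around its mean, which in turn (combined with a crude bound $\|I_m - \boldSigma\|_{\mathrm{op}} \leq \|I_m - \boldSigma\|_{\mathrm{F}}$) yields $\boldSigma \succ \tfrac{1}{2} I_m$ on the same $1 - o(1)$ event. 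Once that is in place the \cite{DMR20} inequality produces the desired $o(1)$ total-variation bound, and tracing back through the shadow-sample reduction and the prior $\calF_d$ completes the proof.
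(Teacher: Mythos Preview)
Your proposal is correct and follows essentially the same route as the paper: the same Gaussian-coefficient prior $\calF_d$, the same shadow-sample reduction, and the same \cite{DMR20}-based Gaussian comparison, with the paper controlling $\tr((\Sigma - I)^2) = \|\Sigma - I\|_{\mathrm{F}}^2$ via a first- \emph{and} second-moment computation together with Chebyshev rather than just Markov (this lets them nail down $m = c\binom{n}{d}^{1/2}$ for a specific constant $c$, but your Markov version already yields the stated $\Omega(n^{d/2})$). Your anticipated obstacle is not one: once $\|\Sigma - I\|_{\mathrm{F}} < 1$ you automatically have $\|\Sigma - I\|_{\mathrm{op}} \leq \|\Sigma - I\|_{\mathrm{F}} < 1$ and hence $\Sigma \succ 0$, so the \cite{DMR20} bound applies directly and no fourth-moment work is needed.
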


After setting up preliminary results in \Cref{subsec:lb-prelims}, we will prove \Cref{thm:lb-ptf} via establishing a total variation distance bound over an appropriate pair of distributions in \Cref{subsec:lb-proof}. 

\subsection{Useful Preliminaries}
\label{subsec:lb-prelims}

We first recall a recent upper bound on the total variation distance between $n$-dimensional Gaussians due to Devroye, Mehrabian and Reddad \cite{DMR20}.

\begin{theorem} 
[Theorem~1.1 of \cite{DMR20}]  \label{thm:DMR20}
Let $\Sigma_1,\Sigma_2$ be two $m \times m$ positive definite covariance matrices and let $\lambda_1,\dots,\lambda_m$ be the eigenvalues of $\Sigma_1^{-1} \Sigma_2 - I_m$. Then
\[
\dtv\pbra{N(0^m,\Sigma_1),N(0^m,\Sigma_2)} \leq {\frac 3 2} \cdot \min\left\{1,\sqrt{\lambda_1^2 + \cdots + \lambda_m^2}\right\} 
\]
\end{theorem}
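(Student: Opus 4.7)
The plan is to reduce to a diagonal coordinate-wise product situation via the invariance of total variation under invertible linear changes of variables, and then control the chi-squared divergence and apply the Cauchy--Schwarz bound $\dtv(P,Q) \leq \tfrac{1}{2}\sqrt{\chi^2(Q\,\|\,P)}$.

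First, the symmetric matrix $B := \Sigma_1^{-1/2}\Sigma_2\Sigma_1^{-1/2}$ is positive definite and is similar to $\Sigma_1^{-1}\Sigma_2$ (via conjugation by $\Sigma_1^{1/2}$), so the eigenvalues $1+\lambda_1, \ldots, 1+\lambda_m$ of $B$ coincide with those of $\Sigma_1^{-1}\Sigma_2$ and are all positive. Diagonalize $B = U D U^{\top}$ orthogonally and apply the invertible linear change of variables $x \mapsto U^{\top}\Sigma_1^{-1/2} x$. This maps $N(0,\Sigma_1)$ to $N(0,I_m)$ and $N(0,\Sigma_2)$ to $N(0,D)$ with $D = \mathrm{diag}(1+\lambda_1,\ldots,1+\lambda_m)$, so it suffices to bound $\dtv(N(0,I_m), N(0, D))$.

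Next, if $\sum_i \lambda_i^2 \geq 4/9$ the desired inequality is trivial since $\dtv \leq 1 \leq \tfrac{3}{2}\sqrt{\sum_i \lambda_i^2}$. So assume $\sum_i \lambda_i^2 < 4/9$; this forces $|\lambda_i| < 2/3$ for every $i$, which will make every integral below convergent. A direct Gaussian moment-generating-function computation, reducing to the one-dimensional identity $\mathbb{E}_{x \sim N(0,1)}[e^{tx^2}] = (1-2t)^{-1/2}$ applied coordinatewise with $t_i := \lambda_i/(1+\lambda_i) < 1/2$, gives
$$1 + \chi^2\bigl(N(0, D)\, \big\|\, N(0,I_m)\bigr) \;=\; \mathbb{E}_{x \sim N(0, I_m)}\!\left[\left(\frac{p_D(x)}{p_{I_m}(x)}\right)^2\right] \;=\; \prod_{i=1}^{m} (1 - \lambda_i^2)^{-1/2}.$$

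To close the argument, it suffices to verify that $\prod_i (1 - \lambda_i^2)^{-1/2} - 1 \leq 9\sum_i \lambda_i^2$ whenever $\sum_i \lambda_i^2 \leq 4/9$, since then Cauchy--Schwarz gives $\dtv \leq \tfrac{1}{2}\sqrt{\chi^2} \leq \tfrac{3}{2}\sqrt{\sum_i \lambda_i^2}$. This is a routine calculus estimate: by convexity one may dominate $-\tfrac{1}{2}\log(1-t)$ by an explicit linear function on $[0, 4/9]$, which yields $\prod_i (1-\lambda_i^2)^{-1/2} \leq e^{c\sum_i \lambda_i^2}$ for an explicit $c < 1$, and then $e^x - 1$ is bounded linearly on $[0, 4c/9]$ using monotonicity of $(e^x-1)/x$. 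The only obstacle is purely arithmetic: this approach in fact gives a much smaller constant than $\tfrac{3}{2}$ (closer to $0.6$) in the interesting regime, so the threading of explicit constants is the only care required, not any new idea.
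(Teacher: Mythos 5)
Your proposal is correct, but note that the paper itself does not prove this statement at all: it is imported verbatim as Theorem~1.1 of \cite{DMR20}, so there is no in-paper argument to compare against. Your blind derivation is a legitimate self-contained proof. The reduction via $B=\Sigma_1^{-1/2}\Sigma_2\Sigma_1^{-1/2}$ is right (conjugating $\Sigma_1^{-1}\Sigma_2$ by $\Sigma_1^{1/2}$ gives exactly $B$, so the $\lambda_i$ are real and $1+\lambda_i>0$), total variation is invariant under the invertible map $x\mapsto U^{\top}\Sigma_1^{-1/2}x$, and the split at $\sum_i\lambda_i^2\geq 4/9$ correctly disposes of the trivial regime while guaranteeing $|\lambda_i|<2/3$ in the remaining one --- which is exactly what makes the coordinatewise MGF identity applicable (you need $\lambda_i<1$ so that $t_i=\lambda_i/(1+\lambda_i)<1/2$, and $\lambda_i>-1$ from positive definiteness; both hold). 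The identity $1+\chi^2\bigl(N(0,D)\,\|\,N(0,I_m)\bigr)=\prod_i(1-\lambda_i^2)^{-1/2}$ checks out, and the closing calculus estimate is not only routine but comfortably true: on $[0,4/9]$ one has $-\tfrac12\log(1-t)\leq \tfrac98\log(9/5)\,t$ with $\tfrac98\log(9/5)\approx 0.66$, whence $\chi^2\leq e^{c s}-1\leq 0.78\,s$ for $s=\sum_i\lambda_i^2\leq 4/9$, far below the $9s$ you need, so $\dtv\leq\tfrac12\sqrt{\chi^2}\leq\tfrac32\sqrt{s}$ with room to spare (your remark that the method actually yields a constant well below $\tfrac32$ is accurate). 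Compared with the original argument in \cite{DMR20}, your chi-squared/Cauchy--Schwarz route is arguably more elementary and entirely sufficient for the non-sharp constant quoted here; the only care required, as you say, is arithmetic, and you have threaded it correctly.
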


\begin{figure}

\centering

\fbox{\parbox{0.75\textwidth}{
\begin{enumerate}
	\item Draw the vector $\pbra{\wh{\bp}(S)}_{S \in {[n]\choose d}} =: \wh{\bp} \sim N(0, 1)^{{n\choose d}}$.
	\item Set \[\bp(x) := \sum_{{\substack{S\sse[n]\\|S|=d}}} \wh{\bp}(S)x_S \qquad \text{where}\qquad x_S := \prod_{i\in S}x_i,\]
	and output the degree-$d$ PTF $\boldf(x) := \mathbf{1}\cbra{\bp(x) \geq 0}$.  
\end{enumerate}
}}

\caption{A draw of $\boldf$ from the distribution $\calF_d$}
\label{fig:ptf-dist}

\end{figure}

Next, we introduce a distribution $\calF_d$ (cf. \Cref{fig:ptf-dist}) over multilinear degree-$d$ PTFs over the Boolean hypercube. The following lemma shows that a random PTF $\boldf$ drawn from $\calF_d $ is close to being balanced with high probability.

\begin{lemma} \label{lemma:balanced}
For $\boldf\sim\calF_d$ as described in \Cref{fig:ptf-dist}, we have with probability at least $1-o(1)$ that 
\[\vol(\boldf) \in \left[\frac{1}{2} - o(1), \frac{1}{2} + o(1)\right].\]
\end{lemma}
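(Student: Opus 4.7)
The plan is to compute the first two moments of $\vol(\boldf)$ over the random draw $\boldf \sim \calF_d$ and then conclude with Chebyshev's inequality. For the first moment, first observe that for any fixed $x \in \bn$ the random variable $\bp(x) = \sum_{|S|=d}\wh{\bp}(S)\,x_S$ is a centered Gaussian (the $\wh{\bp}(S)$ are i.i.d.\ $N(0,1)$ and $x_S \in \{\pm 1\}$) with variance $\binom{n}{d}$. Hence $\Prx_{\boldf}[\bp(x) \geq 0] = 1/2$ by symmetry, and so $\Ex_{\boldf}[\vol(\boldf)] = \Ex_{\bx\sim\calD_u}\sbra{\Prx_{\boldf}[\bp(\bx)\geq 0]} = 1/2$.

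For the second moment, the plan is to exploit the fact that for any pair $x,y\in\bn$ the vector $(\bp(x),\bp(y))$ is jointly Gaussian with variance $\binom{n}{d}$ in each coordinate and with correlation
\[
\rho(x,y) \;:=\; \frac{1}{\binom{n}{d}}\sum_{|S|=d} x_S\, y_S.
\]
Sheppard's formula for the bivariate-Gaussian orthant probability then gives $\Prx_{\boldf}\sbra{\bp(x)\geq 0,\;\bp(y)\geq 0} = 1/4 + \arcsin(\rho(x,y))/(2\pi)$, which after averaging over $\bx,\by\sim\calD_u$ and subtracting $(1/2)^2$ yields
\[
\Varx_{\boldf}[\vol(\boldf)] \;=\; \frac{1}{2\pi}\,\Ex_{\bx,\by\sim\calD_u}\sbra{\arcsin(\rho(\bx,\by))}.
\]

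It remains to show the right-hand side is $o(1)$. The key calculation is that $\Ex_{\bx,\by}\sbra{\rho(\bx,\by)^2} = 1/\binom{n}{d} = O(n^{-d})$: this is immediate from Parseval since the monomials $\{\chi_S\}_{|S|=d}$ are orthonormal under $\calD_u$ and $\bx,\by$ are independent, so only the diagonal $S=T$ terms in the double sum survive. I would then split the expectation of $\arcsin(\rho)$ according to whether $|\rho|\leq 1/2$, using $|\arcsin(t)|\leq 2|t|$ on that range, $|\arcsin(t)|\leq \pi/2$ elsewhere, Cauchy--Schwarz to pass from $\Ex[|\rho|]$ to $\sqrt{\Ex[\rho^2]}$, and Markov's inequality to bound $\Pr[|\rho|>1/2]\leq 4\,\Ex[\rho^2]$. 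This gives
\[
\abs{\Ex\sbra{\arcsin(\rho(\bx,\by))}} \;\leq\; 2\sqrt{\Ex[\rho^2]} + 2\pi\,\Ex[\rho^2] \;=\; O(n^{-d/2}),
\]
and applying Chebyshev's inequality with any threshold $t = \omega(n^{-d/4})$, for instance $t = n^{-d/8}$, then gives $\Prx_{\boldf}\sbra{|\vol(\boldf)-1/2|\geq t} = o(1)$, which is the claim.

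The only delicate point, which I anticipate as the main obstacle, is handling the nonlinearity of $\arcsin$ near $\pm 1$. However, the second-moment bound on $\rho$ already forces $|\rho|$ to rarely exceed $1/2$, so the crude split-range argument above suffices and no refined (e.g.\ hypercontractivity-based) tail bound on $\rho(\bx,\by)$ is required.
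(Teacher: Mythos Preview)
Your proposal is correct and follows essentially the same approach as the paper: both compute $\Ex[\vol(\boldf)]=1/2$ by symmetry, invoke Sheppard's formula for the bivariate Gaussian orthant probability to express the second moment in terms of the correlation $\rho(\bx,\by)=\binom{n}{d}^{-1}\sum_{|S|=d}\bx_S\by_S$, use the elementary bound $\Ex[\rho^2]=\binom{n}{d}^{-1}$ together with a split-range argument on $\arcsin$ (the paper uses the equivalent $\arccos$ form and a Chebyshev cutoff at $\binom{n}{d}^{-1/3}$), and finish with Chebyshev on $\vol(\boldf)$. Your choice of threshold in the split argument in fact yields a slightly sharper variance bound $\Var[\vol(\boldf)]=O(n^{-d/2})$ versus the paper's $O(\binom{n}{d}^{-1/3})$, but this makes no difference for the lemma as stated.
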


The proof of \Cref{lemma:balanced} will make use of the following well-known fact characterizing the probability that two vectors are both satisfying assignments of a random origin-centered LTF that has Gaussian coordinates. The fact follows immediately from spherical symmetry of the standard Gaussian distribution, and is sometimes known as ``Sheppard's formula'' (see Section 11.1 of \cite{ODonnell2014}).

\begin{fact} \label{fact:gaussian}
Let $u,v$ be unit vectors in $\R^m$.
We have
\[
\Prx_{\bg \sim N(0,I_m)}[\bg \cdot u \geq 0 \text{~and~}
\bg \cdot v \geq 0]
=
{\frac 1 2} - {\frac {\arccos(u \cdot v)}{2 \pi}}.
\]
\end{fact}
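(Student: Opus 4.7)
\medskip

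\noindent\textbf{Proof proposal for \Cref{fact:gaussian}.}
The plan is to reduce to a two-dimensional calculation using the rotational invariance of $N(0, I_m)$, and then to evaluate the resulting quadrant probability in polar coordinates where the angular variable is uniformly distributed on $[0, 2\pi)$.

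First, I would observe that the pair $(\bg \cdot u, \bg \cdot v) \in \R^2$ is itself a centered Gaussian random vector, since it is a linear image of $\bg$. Because $\|u\|_2 = \|v\|_2 = 1$, each coordinate $\bg \cdot u$ and $\bg \cdot v$ has variance $1$, and a direct calculation using $\Ex[\bg \bg^{\mathsf T}] = I_m$ gives covariance $\Ex[(\bg \cdot u)(\bg \cdot v)] = u^{\mathsf T} I_m v = u \cdot v$. Hence the event whose probability we want depends on $u$ and $v$ only through the scalar $\rho := u \cdot v$. Writing $\theta := \arccos(\rho) \in [0, \pi]$, and using the rotational invariance of the standard Gaussian in $\R^2$, I may assume without loss of generality that we are working in $\R^2$ with $u = (1, 0)$ and $v = (\cos\theta, \sin\theta)$, and that $\bg = (\bg_1, \bg_2) \sim N(0, I_2)$.

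Next, I would pass to polar coordinates, writing $\bg_1 = \bR \cos \bTheta$ and $\bg_2 = \bR \sin \bTheta$, where $\bTheta$ is uniform on $[0, 2\pi)$ and independent of $\bR$ (by the standard fact that $N(0, I_2)$ is rotationally invariant). In these coordinates,
\begin{align*}
\bg \cdot u &= \bR \cos \bTheta, \\
\bg \cdot v &= \bR (\cos\bTheta \cos\theta + \sin\bTheta \sin\theta) = \bR \cos(\bTheta - \theta).
\end{align*}
Since $\bR \geq 0$ almost surely, the event $\{\bg \cdot u \geq 0\} \cap \{\bg \cdot v \geq 0\}$ is equivalent to
\[
\cos \bTheta \geq 0 \quad \text{and} \quad \cos(\bTheta - \theta) \geq 0,
\]
i.e.~$\bTheta \in [-\pi/2, \pi/2] \cap [\theta - \pi/2, \theta + \pi/2] \pmod{2\pi}$.

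Finally, since $\theta \in [0, \pi]$, the intersection on the right-hand side is the interval $[\theta - \pi/2, \pi/2]$, which has length $\pi - \theta$. Because $\bTheta$ is uniform on a circle of total measure $2\pi$, the desired probability equals
\[
\frac{\pi - \theta}{2\pi} = \frac{1}{2} - \frac{\theta}{2\pi} = \frac{1}{2} - \frac{\arccos(u \cdot v)}{2\pi},
\]
as claimed. The only step that requires any care is the reduction to two dimensions when $u$ and $v$ are parallel or antiparallel, but in those cases the formula is immediate ($\theta = 0$ gives $1/2$, and $\theta = \pi$ gives $0$), so there is no real obstacle; the entire argument is an exercise in rotational symmetry combined with the polar decomposition of the $2$-dimensional Gaussian.
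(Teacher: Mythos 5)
Your proposal is correct, and it is essentially the same argument the paper has in mind: the paper states this as Sheppard's formula and attributes it to spherical symmetry of the Gaussian (citing O'Donnell, Section 11.1) without writing out details, and your reduction to two dimensions followed by the polar-coordinate computation with the uniform angle is exactly that standard symmetry argument, including the correct handling of the degenerate cases $u\cdot v=\pm 1$.
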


We now turn to the proof of \Cref{lemma:balanced}.

\begin{proofof}{\Cref{lemma:balanced}}
If $d$ is odd then $\bp(-x)=-\bp(x)$ (i.e. $\bp$ is an odd function); since for any fixed $x$ we have that $\bp(x)=0$ with probability 0, it follows that $\vol(\boldf)=1/2$ with probability $1$ for odd $d$.  So in the rest of the argument we may suppose $d$ is even (though the argument we give will work for all $d \geq 1$, even or odd).

We first observe that $\E[\vol(\boldf)]=1/2$: this is because
\begin{align*}
\E[\vol(\boldf)]&=
\Ex_{\bp}\left[
\Prx_{\bx \sim \bn}[\bp(\bx)\geq 0]\right]
=
\Ex_{\bx}\left[
\Prx_{\bp}\left[
\bp(\bx) \geq 0
\right]\right] = 1/2,
\end{align*}
where the last equality is because for each $x \in (\R^n \setminus 0^n)$ we have $\Prx_{\bp}[\bp(x) \geq 0]=1/2.$

Next, we upper bound the variance of $\vol(\boldf)$. We have
\[
\Var[\vol(\boldf)] =
\Ex[\vol(\boldf)^2] - \Ex[\vol(\boldf)]^2
=
\Ex[\vol(\boldf)^2] - {\frac 1 4},
\]
so our goal is to show that $\Ex[\vol(\boldf)^2]$ is $1/4 + o(1)$.
\begin{align}
\Ex[\vol(\boldf)^2] &=
\Ex_{\bp}\left[
\Prx_{\bx \sim \bn}[\bp(\bx)\geq 0]^2
\right]\nonumber \\
&=
\Ex_{\bp}\left[
\Prx_{\bx,\by \sim \bn}[\bp(\bx)\geq 0 \text{~and~}\bp(\by)\geq 0]
\right]\nonumber \\
&=
\Ex_{\bx,\by \sim \bn}\left[
\Prx_{\bp}[\bp(\bx) \geq 0 \text{~and~}\bp(\by)\geq0]
\right]\nonumber \\
&=
\Ex_{\bx,\by \sim \bn}\left[
\Prx_{\bp}\left[\sum_{S \in {[n] \choose d}} \widehat{\bp}(S) \bx_S \geq 0 \text{~and~}
\sum_{S \in {[n] \choose d}} \widehat{\bp}(S) \by_S \geq 0\right]
\right]\nonumber \\
&=
\Ex_{\bx,\by \sim \bn}\left[
\Prx_{\widehat{\bp} \sim N(0,1)^{[n] \choose d}}
\left[
\widehat{\bp} \cdot \Phi(\bx) \geq 0 \text{~and~}
\widehat{\bp} \cdot \Phi(\by) \geq 0
\right]
\right] \label{eq:rutabaga}
\end{align}
where we define $\Phi: \bn\to\bits^{n\choose d}$ as
\[\Phi(x) = \pbra{x_S}_{S\in{[n]\choose d}}.\]
We write $\Phi'(x)$ to denote ${\Phi(x)}/{\sqrt{{n \choose d}}}$, so $\Phi'(x)$ is $\Phi(x)$ normalized to be a unit vector in $\R^{{n \choose d}}.$ By \Cref{fact:gaussian} we have that
\begin{align}
(\ref{eq:rutabaga}) &=
\Ex_{\bx,\by \sim \bn}\left[
{\frac 1 2} - {\frac {\arccos(\Phi'(x) \cdot \Phi'(y))}{2 \pi}}
\right]. \label{eq:blue-pumpkin}
\end{align}

Let
\[
h(x,y) := \Phi'(x) \cdot \Phi'(y) = {\frac {\sum_{S \in {[n] \choose d}} x_S y_S}{{n \choose d}}}.
\]
We see that $h(x,y)$ is a degree-$2d$ homogeneous polynomial with ${n \choose d}$ Fourier coefficients each of which is ${n \choose d}^{-1}$, so $\E_{\bx,\by \sim \bn}[h(\bx,\by)]=0$ and {$\Var[h]=\|h\|_2^2={n \choose d} \cdot {n \choose d}^{-2} = {n \choose d}^{-1}$}.
It follows from Chebyshev's inequality that
\[
\Pr\left[h(x,y) > t / {n \choose d}^{1/2}\right] \leq 1/t^2,
\]
so in particular we have that $\Pr_{\bx,\by \sim \bn}[\Phi'(\bx) \cdot \Phi'(\by) > {n \choose d}^{-1/3}] \leq {n \choose d}^{-1/3}.$
Since ${\frac 1 2} - {\frac {\arccos(\Phi'(x) \cdot \Phi'(y))}{2 \pi}}$ is always at most $1/2$, we get that 
\begin{align*}
(\ref{eq:blue-pumpkin}) &\leq 
\left({\frac 1 2} - {\frac {\arccos({n \choose d}^{-1/3})}{2\pi}}\right) 
+
{\frac 1 2} \cdot \Prx_{\bx,\by \sim \bn}\left[\Phi'(\bx) \cdot \Phi'(\by) > {n \choose d}^{-1/3}\right]\nonumber\\
&\leq \left({\frac 1 2} - {\frac {\arccos({n \choose d}^{-1/3})}{2\pi}}\right) 
+ {\frac 1 2}\cdot {n \choose d}^{-1/3}\nonumber\\
&\leq {\frac 1 2} - {\frac {\pi/2 - 2 {n \choose d}^{-1/3}}{2 \pi}} + {\frac 1 2}\cdot {n \choose d}^{-1/3}\nonumber\\
&\leq {\frac 1 4} + C {n \choose d}^{-1/3}
\end{align*}
for an absolute constant $C>0$.
So $\Var[\vol(\boldf)] \leq C{n \choose d}^{-1/3}$, and hence Chebyshev's inequality tells us that $|\vol(\boldf) - {\frac 1 2}| \geq {n \choose d}^{-1/12}$ with probability at most $O({n \choose d}^{-1/6})=o(1),$ as claimed.
\end{proofof}

\subsection{Proof of \Cref{thm:lb-ptf}}
\label{subsec:lb-proof}

Given \Cref{lemma:balanced}, to prove \Cref{thm:lb-ptf}, it suffices to prove an $\Omega(n^{d/2})$ lower bound on the sample complexity of any algorithm $A$ that outputs ``un-truncated'' with probability at least $9/10$ if it is given samples from $\calD_u$  and outputs ``truncated'' with probability at least $91/100$ if it is given samples from $\calD_u|_{\boldf^{-1}(1)}$ where $\boldf \sim \calF_d$. We do this in the rest of this section and start by introducing the ``un-truncated'' and ``truncated'' distributions that we will show are hard to distinguish.

We will consider two different distributions ${\cal D}_1$ and ${\cal D}_2$, each of which is a distribution over sequences of $m = c {n \choose d}^{1/2}$ many points in $\bn$ where $c$ is an absolute constant that we will set later. These distributions over sequences are defined as follows:

\begin{enumerate}

\item A draw of $\overline{\bx}=(\bx^{(1)},\dots,\bx^{(m)})$ from ${\cal D}_1$  is obtained by having each $\bx^{(i)} \in \R^n$ be drawn independently and uniformly from $\calD_u$.

\item A draw of $\overline{\bx}=(\bx^{(1)},\dots,\bx^{(m)})$ from ${\cal D}_2$ is obtained by drawing (once and for all) a function $\boldf \sim \calF_d$ as described in \Cref{fig:ptf-dist} and having each $\bx^{(i)}$ be distributed independently according to $\calD_u|_{\boldf}$, i.e. each $\bx^{(i)}$ is drawn from $\calD_u$ conditioned on  satisfying $\bp(x^{(i)}) \geq 0$.

\end{enumerate}

We further define two more distributions ${\cal D}'_1$ and ${\cal D}'_2$, each of which is a distribution over sequences of $3m$ points in $\bn \cup \{0^n\}$:

\begin{enumerate}

\item A draw of $\overline{\bx}=(\bx^{(1)},\dots,\bx^{(3m)})$ from ${\cal D}'_1$ is obtained as follows: $\bx^{(i)}$ is taken to be $\bb_i \cdot \bu^{(i)}$ where each $\bb_i$ is an independent uniform draw from $\zo$ and each $\bu^{(i)}$ is an independent draw from $\calD_u$.

\item A draw of $\overline{\bz}=(\bz{(1)},\dots,\bz^{(3m)})$ from ${\cal D}'_2$ is obtained by drawing (once and for all) a function $\boldf \sim \calF_d$ as described in \Cref{fig:ptf-dist} and taking $\bz^{(i)}$ to be $\boldf(\bu^{(i)})\cdot \bu^{(i)} $, where each $\bu^{(i)}$ is an independent draw from $\calD_u$.

\end{enumerate}

The following claim shows that distinguishing $\calD_1'$ and $\calD_2'$ is no harder than distinguishing $\calD_1$ and $\calD_2$. 

\begin{claim} \label{claim:lb-reduction}
	Suppose there exists an algorithm $A$ that successfully distinguishes between ${\cal D}_1$ and ${\cal D}_2$ (i.e. it outputs ``un-truncated'' with probability at least 9/10 when run on a draw from ${\cal D}_1$ and outputs ``truncated'' with probability at least 91/100 when run on a draw from ${\cal D}_2$).
	
Then there exists an algorithm $A'$ which successfully distinguishes between ${\cal D}'_1$ and ${\cal D}'_2$ (i.e. it outputs ``un-truncated'' with probability at least 89/100 when run on a draw from ${\cal D}'_1$ and outputs ``truncated'' with probability at least 90/100 when run on a draw from ${\cal D}'_2$).
\end{claim}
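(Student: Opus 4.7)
The plan is to construct $A'$ by a simple ``filter and forward'' reduction.  Given a draw $\overline{\bx} = (\bx^{(1)}, \dots, \bx^{(3m)})$ from either $\calD_1'$ or $\calD_2'$, the algorithm $A'$ removes every coordinate equal to $0^n$, and if at least $m$ non-zero coordinates remain, feeds the first $m$ of them into $A$ and returns whatever $A$ outputs; if fewer than $m$ non-zero coordinates remain, $A'$ outputs ``un-truncated'' by default.  The key observation is a distributional identity: under either $\calD_1'$ or $\calD_2'$, conditioning on the identities of the non-zero indices produces non-zero samples that are i.i.d. and match the corresponding distribution $\calD_1$ or $\calD_2$ on each non-zero coordinate.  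Specifically, under $\calD_1'$ each $\bx^{(i)}$ is non-zero with probability $1/2$ and, conditioned on being non-zero, is uniformly distributed on $\bn$; under $\calD_2'$ (for a fixed $\boldf$) each $\bx^{(i)}$ is non-zero with probability $\vol(\boldf)$ and, conditioned on being non-zero, is distributed as $\calD_u|_{\boldf^{-1}(1)}$.

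Next I would argue that the filtering step leaves at least $m$ non-zero samples except with very small probability.  Under $\calD_1'$ the expected number of non-zero samples is exactly $3m/2$, so by a standard Chernoff bound the probability that fewer than $m$ survive is $\exp(-\Omega(m)) = o(1)$.  Under $\calD_2'$, invoke \Cref{lemma:balanced}: with probability $1 - o(1)$ over $\boldf \sim \calF_d$ we have $\vol(\boldf) \in [1/2 - o(1), 1/2 + o(1)]$, so conditional on such a ``typical'' $\boldf$ the expected number of non-zero samples is $(3/2 \pm o(1)) m$ and another Chernoff bound gives the analogous $o(1)$ tail bound.  In both cases, for $n$ sufficiently large, these failure probabilities are at most $1/100$.

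On the good event that at least $m$ non-zero samples survive, the distributional identity above guarantees that the $m$ inputs handed to $A$ are exactly distributed as a draw from $\calD_1$ (in the $\calD_1'$ case) or, conditional on $\boldf$, as a draw from $\calD_2$ restricted to this particular $\boldf$ (in the $\calD_2'$ case).  By hypothesis $A$ then outputs ``un-truncated'' with probability at least $9/10$ in the former case and ``truncated'' with probability at least $91/100$ in the latter.  Combining the hypothesis success probabilities of $A$ with the $o(1)$ failure probabilities of the Chernoff bound and of \Cref{lemma:balanced} via a union bound yields success probabilities of at least $9/10 - 1/100 \geq 89/100$ under $\calD_1'$ and at least $91/100 - o(1) \geq 90/100$ under $\calD_2'$, matching the claimed guarantee for $A'$.

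The main obstacle is purely bookkeeping: one must carefully state and justify the distributional identity that conditioning a draw from $\calD_2'$ on the event ``$\bx^{(i)}_1,\dots,\bx^{(i_k)}$ are precisely the non-zero coordinates'' yields, for a fixed $\boldf$, independent draws from $\calD_u|_{\boldf^{-1}(1)}$ on those coordinates.  This follows from the product structure of the underlying sampling process and the fact that each $\bx^{(i)} = \boldf(\bu^{(i)}) \cdot \bu^{(i)}$ depends only on the corresponding $\bu^{(i)}$; once this is in hand the remainder is a routine Chernoff-plus-union-bound argument.
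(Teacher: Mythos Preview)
Your proposal is correct and follows essentially the same route as the paper's proof: build $A'$ by filtering out the $0^n$ entries and feeding the first $m$ survivors to $A$, then combine $A$'s guarantee with an $o(1)$ bound on the chance that fewer than $m$ non-zero entries appear (via \Cref{lemma:balanced} and a tail bound). Your write-up is in fact slightly more explicit than the paper's---you spell out the rejection-sampling distributional identity and the Chernoff step---and you avoid the (harmless but unnecessary) $O(m/2^n)$ correction the paper inserts in the ${\cal D}_1'$ case, which does not actually arise since $0^n\notin\{-1,1\}^n$.
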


\begin{proof}
Let $A'$ be the algorithm that simply takes the first $m$ non-$0^n$ points in its $3m$-point input sequence and uses them as input to $A$.  We may suppose that $A'$ fails if there are fewer than $m$ non-$0^n$ points in the $3m$-point sample: It is easy to see that whether the input to $A'$ is drawn from ${\cal D}'_1$ or ${\cal D}'_2$, the probability of failure is at most $o(1)$ and hence is negligible; this uses \Cref{lemma:balanced}, i.e.~the fact that in the no-case, with very high probability $\boldf$ is roughly balanced. 

We then have that 
\begin{itemize}
	\item If the input sequence is a draw from ${\cal D}'_1$ then the input that $A'$ gives to $A$ is distributed (up to statistical distance $O(m/2^n)$) according to ${\cal D}_1$. This statistical distance comes because $A'$ never passes the $0^n$-vector along as an input to $A$, whereas under the distribution ${\cal D}_1$, each of the $m$ points has a $1/2^n$ chance of being $0^n$; and
	\item If the input is a draw from ${\cal D}'_2$ then the input sequence that $A'$ gives to $A$ is distributed (up to statistical distance $o(1)$) according to ${\cal D}_2$. This statistical distance comes from the $o(1)$ chance that $\boldf$ has $\vol[\boldf] \notin [1/2 - o(1),1/2 + o(1)]$, plus the fact that even if $\vol[\boldf] \in [1/2 - o(1),1/2 + o(1)]$, algorithm  $A'$ never passes along the 0-vector as an input to $A$.
\end{itemize}
This completes the proof of the claim. 
\end{proof}

Given \Cref{claim:lb-reduction}, to prove \Cref{thm:lb-ptf} it suffices to show that any algorithm $A'$ with the performance guarantee described above must have $3m = \Omega(n^{d/2})$.  This is a consequence of the following statement:
\begin{equation}
\label{eq:dtvprimes}
\text{If $m = c {n \choose d}^{1/2}$, then~}\dtv({\cal D}'_1,{\cal D}'_2) \leq 0.01.
\end{equation}
The rest of the proof establishes (\ref{eq:dtvprimes}).

Consider a coupling of the two distributions ${\cal D}'_1$ and ${\cal D}'_2$ as follows. A draw from the joint coupled distribution of $({\cal D}'_1,{\cal D}'_2)$ is generated in the following way:

\begin{itemize}

\item Let $(\bb_1,\dots,\bb_{3m})$ be a uniform random string from $\zo^{3m}$, let $\boldf \sim \calF_d$, and let $\bu^{(1)},\dots,\bu^{(3m)}$ be $3m$ independent draws from $\calD_u$.
\item The draw from the joint coupled distribution is $(\overline{\bx},\overline{\bz})$ where
\[
\text{for each $i =1,\dots,3m$,~~~}\bx^{(i)} = \bb_i \bu^{(i)} \quad \text{and}\quad
 \bz^{(i)}=\boldf(\bu^{(i)}) \bu^{(i)}.
\]
\end{itemize}
It is easily verified that this is indeed a valid coupling of ${\cal D}'_1$ and ${\cal D}'_2$.
Writing $\overline{\bu}$ to denote $(\bu^{(1)},\dots,\bu^{(m)})$, we have that
\begin{align}
\dtv({\cal D}'_1,{\cal D}'_2) &= \dtv((\bx^1,\dots,\bx^{3m}),(\bz^1,\dots,\bz^{3m})) \nonumber\\
&=\dtv\left((\bb_1\bu^{(1)},\dots,\bb_{3m}\bu^{(3m)}),
\left(
\boldf(\bu^{(1)}) \bu^{(1)},
\dots,
\boldf(\bu^{(3m)}) \bu^{(3m)}
 \right)\right) \nonumber\\
 &\leq \Ex_{\overline{\bu}} \left[\dtv\left((\bb_1\bu^{(1)},\dots,\bb_{3m}\bu^{(3m)}),
\left(
\boldf(\bu^{(1)}) \bu^{(1)},
\dots,
\boldf(\bu^{(3m)}) \bu^{(3m)}
 \right)\right)\right] \label{eq:gourd}\\
 &= \Ex_{\overline{\bu}}\left[\dtv\left((\bb_1,\dots,\bb_{3m}),
 \left(
\boldf(\bu^{(1)}),
\dots,
\boldf(\bu^{(3m)})
\right)\right)\right],\label{eq:pumpkin}
\end{align}
where $\bb=(\bb_1,\dots,\bb_{3m})$ is uniform over $\zo^{3m}$; it should be noted that in \Cref{eq:gourd} the randomness in the two random variables whose variation distance is being considered is only over $\bb_1,\dots,\bb_m$ and $\bp$.

It remains to show that if $m = c {n \choose d}^{1/2}$ then $(\ref{eq:pumpkin}) \leq 0.01;$ this is equivalent to showing that
\begin{equation} \label{eq:squash}
\Ex_{\overline{\bu}}\left[\dtv\left((\bb_1,\dots,\bb_{3m}),
\pbra{\mathbf{1}\cbra{\wh{\bp}\cdot \Phi(\bu^{i}) \geq 0} }_{i=1}^{3m}
 \right)
 \right] \leq 0.01,
\end{equation}
and this is what we establish in the rest of the proof. (Recall that $\boldf(x) = \mathbf{1}\cbra{\wh{\bp}\cdot\Phi(x) \geq 0}$.)

For $S \in {[n] \choose d}$ we write $e_S$ to denote the canonical basis vector in $\R^{{n \choose d}}$ that has a 1 in the $S$-th position and 0's in all other positions.
We observe that for $\widehat{\bp} \sim N(0,1)^{{n \choose d}}$, for any fixed $S \in {[n] \choose d}$ we have that 
\[
\Pr\left[\widehat{\bp} \cdot e_S\geq 0\right]=
\Pr\left[\widehat{\bp}(S)\geq 0\right]=1/2.
\]
Let $S_1,\dots,S_{{n \choose d}}$ be some fixed ordering, say lexicographic, of the ${n \choose d}$ elements of ${[n] \choose d}$.
Since the random variables $(\widehat{\bp}(S_i))_{i=1,\dots,3m}$ are independent, we have that the joint distribution of $(\bb_1,\dots,\bb_{3m})$ is identical to the joint distribution of $(\sign(\widehat{\bp} \cdot e_{S_1}),\dots,
\sign(\widehat{\bp} \cdot e_{S_{3m}}))$, namely both distributions are uniform over $\zo^{3m}.$ Thus we have that
\begin{align}
(\ref{eq:squash}) 
 &= \Ex_{\overline{\bu}}\left[\dtv
\left(
\pbra{\mathbf{1}\cbra{\wh{\bp}\cdot e_{S_i}} \geq 0}_{i=1}^{3m},
\pbra{\mathbf{1}\cbra{\wh{\bp}\cdot \Phi(\bu^{i}) \geq 0}}_{i=1}^{3m}
\right)
\right]
\label{eq:rigatoni}
 \end{align}
where as in the proof of \Cref{lemma:balanced}, we define 
\[\Phi: \R^n\to\R^{n\choose d} \qquad\text{as}\qquad \Phi(x) = \pbra{x_S}_{S\in {[n]\choose d}}.\]

In the rest of the argument, we show that the right hand side of \Cref{eq:rigatoni} is small by arguing that for most outcomes $(u^{(1)},\dots,u^{(3m)})$ of $(\bu^{(1)},\dots,\bu^{(3m)})$, the covariance structures of the two $3m$-dimensional Gaussians $(\widehat{\bp}\cdot e_{S_1},
\dots, \widehat{\bp} \cdot e_{S_{3m}})$ and $(\widehat{\bp} \cdot \Phi(u^{(1)}),\dots,
\widehat{\bp} \cdot \Phi(u^{(3m)}))$ are similar enough that it is possible to bound the variation distance between those two Gaussians, which implies a bound on the variation distance between the two vector-valued random variables on the right hand side of \Cref{eq:rigatoni}.

\begin{definition}
\label{def:good}
Fix a specific outcome $(u^{(1)},\dots,u^{(3m)}) \in (\bn)^{3m}$ of $(\bu^{(1)},\dots,\bu^{(3m)}) \sim (\bn)^{3m}$.
Let $\Sigma_2$ be the $3m \times 3m$ matrix whose $(i,j)$ entry is $\Phi'(u^{(i)}) \cdot \Phi'(u^{(j)})$, let $I_{3m}$ be the $3m \times 3m$ identity matrix, and let $A=\Sigma_2 - I_{3m}.$ We say that $(u^{(1)},\dots,u^{(3m)})$ is \emph{bad} if $\tr(A^2) \geq {\frac 1 {90000}}$; otherwise we say that $(u^{(1)},\dots,u^{(3m)})$ is \emph{good}.
\end{definition}
Intuitively, the tuple $(u^{(1)},\dots,u^{(m)})$ is bad if the $3m$ vectors $\Phi(u^{(1)}),\dots,\Phi(u^{3m})$ are not such that ``most pairs are close to orthogonal to each other'' (recall that each vector $\Phi(u^{(i)})$ is ${n \choose d}$-dimensional). As we now show, this is very unlikely to happen for $(\bu^{(1)},\dots,\bu^{(3m)}) \sim (\bn)^{3m}$.

\begin{claim} \label{claim:good}
For $m = c{n \choose d}^{1/2}$, we have
$\Pr[(\bu^{(1)},\dots,\bu^{(3m)})$ is bad$]\leq 0.005.$
\end{claim}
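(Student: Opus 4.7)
The plan is to reduce the claim to a straightforward second-moment calculation plus Markov's inequality, reusing the variance computation already performed inside the proof of \Cref{lemma:balanced}. First I would observe that since each $\Phi(u^{(i)}) \in \bits^{{n \choose d}}$, the normalization gives $\|\Phi'(u^{(i)})\|_2 = 1$, so $\Sigma_2$ has all $1$'s on the diagonal. Consequently $A = \Sigma_2 - I_{3m}$ is a symmetric matrix with zeros on the diagonal and off-diagonal entries
\[
A_{ij} = \Phi'(u^{(i)}) \cdot \Phi'(u^{(j)}) = h(u^{(i)},u^{(j)}), \qquad i \neq j,
\]
where $h(x,y) = {n \choose d}^{-1}\sum_{S \in {[n]\choose d}} x_S y_S$ is exactly the function analyzed in the proof of \Cref{lemma:balanced}. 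Since $A$ is symmetric with zero diagonal,
\[
\tr(A^2) = \sum_{i,j=1}^{3m} A_{ij}^2 = \sum_{i \neq j} h(\bu^{(i)}, \bu^{(j)})^2.
\]

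Next I would compute the expectation of this quantity. For $i \neq j$ the pair $(\bu^{(i)},\bu^{(j)})$ is a pair of independent uniform draws from $\bn$, and we already established in the proof of \Cref{lemma:balanced} that $\E_{\bx,\by \sim \bn}[h(\bx,\by)] = 0$ and $\E_{\bx,\by \sim \bn}[h(\bx,\by)^2] = \|h\|_2^2 = {n \choose d}^{-1}$. Therefore
\[
\E\bigl[\tr(A^2)\bigr] = 3m(3m-1)\cdot {n \choose d}^{-1} \leq \frac{9m^2}{{n \choose d}} = 9c^2,
\]
using $m = c{n \choose d}^{1/2}$.

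Finally, Markov's inequality applied to the nonnegative random variable $\tr(A^2)$ gives
\[
\Prx_{\overline{\bu}}\bigl[\tr(A^2) \geq 1/90000\bigr] \leq 90000 \cdot 9c^2 = 810000\, c^2,
\]
which is at most $0.005$ provided $c$ is a sufficiently small absolute constant (any $c \leq 1/\sqrt{1.62 \cdot 10^8}$ suffices). Choosing such a $c$ once and for all in the definition of $m$ completes the proof. There is no serious obstacle here — the only thing to be careful about is that the zero diagonal of $A$ is essential (so that $\tr(A^2)$ really is a sum of ${O(m^2)}$ off-diagonal squared terms with known variance $1/{n \choose d}$), and that the constant $c$ must be chosen small enough to absorb the $90000 \cdot 9$ blowup from Markov's inequality without enlarging $m$ beyond $O(n^{d/2})$.
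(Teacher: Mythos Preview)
Your proposal is correct and takes a genuinely simpler route than the paper. Both the paper and you compute the first moment identically, obtaining $\E[\tr(A^2)] = 3m(3m-1)/{n\choose d} \le 9c^2$. At this point you apply Markov's inequality directly to the nonnegative random variable $\tr(A^2)$, which immediately gives $\Pr[\tr(A^2)\ge 1/90000]\le 810000\,c^2$, and you then absorb the large constant by shrinking $c$. The paper instead proceeds to bound $\Var[\tr(A^2)]$: it expands $\tr(A^2)$ as a degree-$4d$ multilinear polynomial in the $3mn$ Boolean coordinates, invokes the combinatorial \Cref{lem:linearization} to count pairs $(S,T)$ with a prescribed symmetric difference, and shows $\Var[\tr(A^2)]=O_d(n^{-d})$, after which Chebyshev's inequality gives the much stronger bound $\Pr[\text{bad}]=O_d(n^{-d})$.

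What each approach buys: your argument is shorter and avoids the linearization lemma entirely, at the price of requiring a somewhat smaller absolute constant $c$ (roughly a factor of $10$ smaller than the paper's choice); since the theorem only claims an $\Omega(n^{d/2})$ lower bound, this is immaterial. The paper's Chebyshev route yields a polynomially decaying bound on the bad-event probability, which is stronger than needed for the claim as stated but would be useful if one wanted, say, a high-probability version of the lower bound or a union bound over many instances. For the purposes of proving \Cref{thm:lb-ptf} as written, your Markov argument is entirely sufficient.
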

\begin{proof}
The idea is to apply Chebyshev's inequality to the random variable $\tr(A(\bu^{(1)},\dots,\bu^{(3m)}))$.  In more detail, we observe that $\tr(A(\bu^{(1)},\dots,\bu^{(3m)})^2)$ is a polynomial in the $3mn$ uniform random Boolean variables $\bu^{(r)}_s$.
Let's start by working out what this polynomial is (for notational ease, let's write $q=q(u)$ for this polynomial).
Since $\Phi'(u^{(i)})$ is a unit vector for all $i$ for every outcome $u^{(i)}$ of $\bu^{(i)}$, the diagonal entries of $A$ are zero. For $i \neq j$, the entry $A_{ij}$ is
\begin{equation} \label{eq:Aij}
A_{ij}={\frac {\Phi(u^{(i)}) \cdot \Phi(u^{(j)})}{{n \choose d}}}={\frac {\sum_{S \in {[n] \choose d}} u^{(i)}_S u^{(j)}_S}{{n \choose d}}},
\end{equation}
so the diagonal element $(A^2)_{ii}$ of $A^2$ is
\begin{align} 
(A^2)_{ii} &= \sum_{j=1}^{3m} A_{ij}A_{ji}= \sum_{j \in [3m] \setminus \{i\}} A_{ij}^2 \tag{using symmetry of $A$ and the fact that $A_{ii}=0$} \nonumber \\
&=\sum_{j \in [3m] \setminus \{i\}} {\frac 1 {{n \choose d}^2}} \sum_{S,T \in {{[n] \choose d}}} u^{(i)}_{S \,\triangle\, T} u^{(j)}_{S \,\triangle\, T},
\label{eq:Asquaredii}
\end{align}
and 
\begin{align} 
q(u)=\tr(A^2)&= \sum_{i=1}^{3m}  \sum_{j \in [3m] \setminus \{i\}}{\frac 1 {{n \choose d}^2}} \sum_{S,T \in {{[n] \choose d}}} u^{(i)}_{S \,\triangle\, T} u^{(j)}_{S \,\triangle\, T}.
\label{eq:traceAsquared}
\end{align}
From (\ref{eq:traceAsquared}) we see that $\tr(A^2)=q$ is a polynomial of degree at most $4d$ (since $|S \,\triangle\, T|$ is always at most $2d$), and that the constant term of $q$ is 
\[
\E[q]=\widehat{q}(\emptyset)=3m(3m-1) \cdot {\frac 1 {{n \choose d}^2}} \cdot {n \choose d}, 
\]
which is at most $9c^2$ recalling that $m=c {n \choose d}^{1/2}.$ We choose the constant $c$ so that $9c^2 \leq {\frac 1 {180000}}$, so $\E[q] \leq {\frac 1 {180000}}$; we want to show that for a random $\bu = (\bu^{(i)}_j)$, the probability that $q(\bu)$ exceeds its mean by ${\frac 1 {180000}}$ is at most a constant. To do this, let us bound the variance of $q$, i.e. the sum of squares of its non-constant coefficients.

Note that in \Cref{eq:traceAsquared}, we have that every non-constant coefficient is on a monomial of the form $u^{(i)}_U u^{(j)}_U$ for some $i \neq j$ where $U\sse[n]$ and $|U| =: k \leq 2d.$ Recall from \Cref{lem:linearization},
the number of $(S,T)$ pairs with $|S|,|T| = d$ such that $S\,\triangle\, T = U$ is at most 
\[O_d\pbra{n^{d-\ceil{k/2}}}.\]
So the coefficient of $u^{(i)}_Cu^{(j)}_C$ is
\[
\widehat{q}(i,j,C) \leq {\frac {O_d\pbra{n^{d-\ceil{k/2}}}}{{n \choose d}^2}}.
\]
As there are at most ${n\choose k}$ sets $U\sse[n]$ with $|U| = k$, we have
\begin{align*}
	\Var[q] &\leq 9m^2\sum_{k=1}^{2d} {n\choose k}\cdot \widehat{q}(i,j,C)^2\\
	&\leq 9m^2\sum_{k=1}^{2d} {n\choose k}\cdot \pbra{\frac {O_d\pbra{n^{d-\ceil{k/2}}}}{{n \choose d}^2}}^2\\
	&\leq O_d\pbra{9m^2\sum_{k=1}^{2d} \frac{n^k\cdot n^{2d-k}}{n^{4d}}}\\
	&\leq O_d\pbra{\frac{1}{n^d}},
\end{align*}
where the final inequality relies on our choice of $m := c {n \choose d}^{1/2}$. Putting everything together, we have that
\[\E[q]\leq {\frac 1 {180000}} \qquad\text{and}\qquad \sqrt{\Var[q]} \leq O_d\pbra{\frac{1}{n^{d/2}}}.\]
By Chebyshev's inequality, we get
\[
\Pr\sbra{q \geq {\frac 1 {90000}}} \leq O\pbra{\frac{1}{n^{d}}}\]
which is at most $0.005$ for sufficiently large $n$. In particular, this implies that for $n$ large enough, we have $\Pr[(\bu^{(1)},\dots,\bu^{(3m)})$ is bad$]\leq 0.005$, completing the proof.
\ignore
{
Looking at \Cref{eq:traceAsquared} we see that every non-constant coefficient is on a monomial of the form $u^{(i)}_C u^{(j)}_C$ for some $i \neq j$ and some $|C| \leq 2d.$  {For a fixed pair $i \neq j$ and a fixed $C$ with $|C|=i > 0$, there are ${n-i \choose d-i}$ pairs $(S,T)$ such that $S \,\triangle\, T = C$ (because $S$ and $T$ must agree on all $d-i$ elements outside $C$, and these are chosen from $n-i$ possible elements).} So the coefficient of $u^{(i)}_Cu^{(j)}_C$ is
\[
\widehat{q}(i,j,C) = {\frac {{n-i \choose d-i}}{{n \choose d}^2}} \leq {\frac {(d/n)^i}{{n \choose d}}}
\]
and since there are at most $n^i$ sets $C$ with $|C|=i$, we have
\begin{align}
\Var[q] &\leq 9m^2 \sum_{i=1}^{2d} n^i {\frac {(d/n)^{2i}}{{n \choose d}^2}} = 9c^2 \sum_{i=1}^{2d} {\frac {d^{2i}/n^i}{{n \choose d}}} \\
&\leq d^{5d} \cdot {\frac 1 {n {n \choose d}}}.
\end{align}

So we have $\E[q]\leq {\frac 1 {180000}}$ and $\sqrt{\Var[q]} \leq d^{5d/2} {\frac 1 {\sqrt{n {n \choose d}}}}$, so by Chebyshev's inequality we have $\Pr[q \geq {\frac 1 {90000}}] \leq O({\frac {d^{5d}}{n{n \choose d}}})$, which is at most $0.005$ for sufficiently large $n$.
 So $\Pr[(\bu^{(1)},\dots,\bu^{(3m)})$ is bad$]\leq 0.005.$
 }
\end{proof}

Hence to show that $(\ref{eq:rigatoni}) \leq 0.01$ (and finish the proof of \Cref{thm:lb-ptf}), it suffices to prove the following:

\begin{claim} 
\label{claim:combray}
Fix a good $(u^{(1)},\dots,u^{(3m)}) \in (\bn)^{3m}$. 
Then for $\widehat{\bp} \sim N(0,1)^{n \choose d}$ we have that
\begin{equation} \label{eq:persuasion}
\dtv
\left(
\pbra{\mathbf{1}\cbra{\wh{\bp}\cdot e_{S_i}} \geq 0}_{i=1}^{3m},
\pbra{\mathbf{1}\cbra{\wh{\bp}\cdot \Phi(\bu^{i}) \geq 0} }_{i=1}^{3m}
\right)\leq 0.005.
\end{equation}
\end{claim}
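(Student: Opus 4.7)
The plan is to show that the two Bernoulli vectors on the left-hand side of \eqref{eq:persuasion} are obtained by applying the coordinate-wise sign function to two multivariate Gaussians with nearly identical covariance structure, and then invoke \Cref{thm:DMR20} together with the data processing inequality for total variation distance.

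First I would observe that since $\widehat{\bp} \sim N(0,I_{{n \choose d}})$, for any fixed vectors $v_1,\ldots,v_{3m} \in \R^{{n \choose d}}$ the random vector $(\widehat{\bp}\cdot v_1,\ldots,\widehat{\bp}\cdot v_{3m})$ is a centered Gaussian in $\R^{3m}$ with covariance matrix $(\langle v_i, v_j\rangle)_{i,j}$. Applied to $v_i = e_{S_i}$, this yields the standard Gaussian $N(0^{3m}, I_{3m})$. Applied to $v_i = \Phi'(u^{(i)})$ (where $\Phi'(u) := \Phi(u)/\sqrt{{n \choose d}}$ is the unit-normalized feature vector), it yields $N(0^{3m}, \Sigma_2)$ where $\Sigma_2$ is exactly the matrix in \Cref{def:good}, whose diagonal entries are $1$. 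Since coordinate-wise sign is invariant under positive scaling of each coordinate, replacing $\Phi$ by $\Phi'$ does not affect the sign pattern, so the two Bernoulli vectors in \eqref{eq:persuasion} are the pushforwards of $N(0,I_{3m})$ and $N(0,\Sigma_2)$ under coordinate-wise sign.

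Next I would invoke the data processing inequality: applying the same (measurable) map to two distributions cannot increase their total variation distance. Hence
\[
\dtv\!\left(
\pbra{\mathbf{1}\cbra{\wh{\bp}\cdot e_{S_i}\geq 0}}_{i=1}^{3m},
\pbra{\mathbf{1}\cbra{\wh{\bp}\cdot \Phi(\bu^{(i)})\geq 0}}_{i=1}^{3m}
\right)
\leq
\dtv\!\left(N(0^{3m},I_{3m}),\,N(0^{3m},\Sigma_2)\right).
\]
To bound the right-hand side I would apply \Cref{thm:DMR20} with $\Sigma_1 = I_{3m}$ and $\Sigma_2$ as above. Then $\Sigma_1^{-1}\Sigma_2 - I_{3m} = \Sigma_2 - I_{3m} = A$, exactly the matrix in \Cref{def:good}. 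If $\lambda_1,\ldots,\lambda_{3m}$ are its eigenvalues, then $\sum_i \lambda_i^2 = \tr(A^2)$, and by the goodness hypothesis $\tr(A^2) \leq 1/90000$. Therefore
\[
\dtv\!\left(N(0^{3m},I_{3m}),\,N(0^{3m},\Sigma_2)\right)
\leq \tfrac{3}{2}\sqrt{\tr(A^2)}
\leq \tfrac{3}{2}\cdot\tfrac{1}{300}
= 0.005,
\]
yielding \eqref{eq:persuasion}.

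The only delicate points are bookkeeping ones: checking that $\Sigma_2$ is positive definite (so \Cref{thm:DMR20} is applicable)---which follows either from the fact that for generic $u^{(i)}$ the vectors $\Phi'(u^{(i)})$ are linearly independent when $3m \leq {n \choose d}$, or more simply by a limiting argument from the positive semidefinite case---and verifying that the normalization by $\sqrt{{n \choose d}}$ does not change the Bernoulli sign vector. Neither of these should cause real trouble, so I expect the argument to go through essentially as sketched; the meat of the lower bound has already been done in establishing goodness (\Cref{claim:good}).
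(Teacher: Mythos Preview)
Your proposal is correct and follows essentially the same route as the paper: normalize $\Phi$ to $\Phi'$ (which leaves the sign vector unchanged), recognize the two Bernoulli vectors as pushforwards of $N(0,I_{3m})$ and $N(0,\Sigma_2)$ under coordinate-wise sign, apply the data processing inequality, and then invoke \Cref{thm:DMR20} with $A=\Sigma_2-I_{3m}$ and the goodness bound $\tr(A^2)\le 1/90000$. One small improvement over your bookkeeping remark: positive definiteness of $\Sigma_2$ actually follows directly from goodness, since $A$ is symmetric and $\tr(A^2)<1$ forces every eigenvalue $\lambda_i$ of $A$ to satisfy $|\lambda_i|<1$, whence $\Sigma_2=I_{3m}+A$ has all eigenvalues strictly positive.
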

To prove \Cref{claim:combray} we first observe that (\ref{eq:persuasion}) is equivalent to
\begin{equation}\label{eq:pride-and-prejudice}
\dtv
\left(
\pbra{\mathbf{1}\cbra{\wh{\bp}\cdot e_{S_i}} \geq 0}_{i=1}^{3m},
\pbra{\mathbf{1}\cbra{\wh{\bp}\cdot \Phi'(u^{i}) \geq 0 } }_{i=1}^{3m}
\right) \leq 0.005
\end{equation} 
where (as in the proof of \Cref{lemma:balanced}) we define $\Phi'(u^{(i)}) := \Phi(u^{(i)})/\|\Phi(u^{(i)})\| =
\Phi(u^{(i)})/\sqrt{{n \choose d}}$  to be $\Phi(u^{(i)})$ normalized to unit length. 
We now apply \Cref{thm:DMR20} where we take $k=3m$, $\Sigma_1$ to be the identity matrix $I_{3m},$ and $\Sigma_2$ to be the $3m \times 3m$ matrix whose $(i,j)$ entry is $\Phi'(u^{(i)}) \cdot \Phi'(u^{(j)})$, so a draw from $N(0,\Sigma_1)$ is distributed as $(\widehat{\bp} \cdot e_1,\dots,\widehat{\bp} \cdot e_{3m})$ and a draw from $N(0,\Sigma_2)$ is distributed as
$(\widehat{\bp} \cdot \Phi'(u^{(1)}),\dots,\widehat{\bp} \cdot \Phi'(u^{(3m)})).$  Applying the data processing inequality for total variation distance (see e.g. Proposition~B.1 of \cite{DDOST13}), it follows that the LHS of (\ref{eq:pride-and-prejudice}) is at most $\dtv(N(0^{3m},\Sigma_1),N(0^{3m},\Sigma_2))$; we proceed to upper bound the RHS of \Cref{thm:DMR20}.

Let $A$ denote the matrix $\Sigma_1^{-1}\Sigma_2 - I_{3m}$, so $\lambda_1^2,\dots,\lambda_{3m}^2$ are the eigenvalues of $A^2$  and we have $\sqrt{\lambda_1^2 + \cdots + \lambda_{3m}^2} = \sqrt{\tr(A^2))}.$ Since $A$ is good we have that $\tr(A^2) \leq {\frac 1 {90000}}$, which gives that ${\frac 3 2} \cdot \min\left\{1,\sqrt{\lambda_1^2 + \cdots + \lambda_m^2}\right\} \leq 0.05$ as required, and the proofs of \Cref{claim:combray}, \Cref{eq:squash}, and \Cref{thm:lb-ptf} are complete.

\section*{Acknowledgements}
\label{sec:ty}

A.D. is supported by NSF grants CCF-1910534, CCF-1926872, and CCF-2045128. H.L. is supported by NSF grants CCF-1910534, CCF-1934876, and CCF-2008305. S.N. is supported by NSF grants  IIS-1838154, CCF-2106429, CCF-2211238, CCF-1763970, and CCF-2107187. Part of this work was completed while S.N. was participating in the program on ``Meta Complexity'' at the Simons Institute for the Theory of Computing. R.A.S. is supported by NSF grants  IIS-1838154, CCF-2106429, and CCF-2211238. The authors would like to thank the anonymous STOC reviewers for helpful comments. 

\bibliography{allrefs}
\bibliographystyle{alpha}

\appendix

\section{Testing Truncation via VC Dimension}
\label{appendix:baseline-distinguisher}

In this section, we show that $O_{d}(n^{d}/\eps^2)$ samples are sufficient for distinguishing an unknown distribution $\calP$ over $\R^n$ from $\calP|_{f^{-1}(1)}$ where $f:\R^n\to\zo$ is a degree-$d$ PTF with $\vol(f) \leq 1-\eps$. This follows immediately from the following more general result which allows us to test truncation by an unknown function belonging to a concept class $\calC$ with sample complexity linear in the \emph{VC-dimension} of the class $\calC$ (denoted $\vcdim(\calC)$; we refer the reader to \cite{KearnsVazirani:94} for background on  VC-dimension) {and bounds on the VC-dimension of degree-$d$ PTFs over $\R^n$ (Corollary~9 of~\cite{anthony1995classification})}.  As mentioned in the introduction, the algorithm used to prove \Cref{prop:vc} is not time-efficient. 


\begin{proposition}  \label{prop:vc}
  Let ${\calP}$ be an arbitrary probability distribution over $\R^n$.
  Let $\calC$ be a class of $0/1$-valued functions on $\R^n$ such that 
  each $f\in\calC$ satisfies 
  \[\vol(f) \leq 1-\eps.\]
There exists an algorithm with the following properties:  The algorithm is given access to i.i.d. samples from an unknown distribution $\calD$ with \[\calD \in \cbra{\calP} \cup \cbra{\calP|_{f^{-1}(1)} : f\in\calC}.\] It draws $\Theta\pbra{\vcdim(\calC)/\eps^2}$ samples from $\calD$, and has the following performance guarantee:
  \begin{enumerate}
  	\item If $\calD = \calP$, then the algorithm outputs ``un-truncated'' with probability at least $2/3$; and 
  	\item If $\calD = \calP|_{f^{-1}(1)}$ for some $f\in\calC$, then the algorithm outputs ``truncated'' with probability 1.
  \end{enumerate}  
\end{proposition}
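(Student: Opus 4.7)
The plan is to reduce the problem to a standard uniform convergence argument via VC dimension. Specifically, I would have the algorithm draw $m = \Theta(\vcdim(\calC)/\eps^2)$ i.i.d.\ samples $\bx^{(1)},\ldots,\bx^{(m)}$ from the unknown distribution $\calD$, and then output ``truncated'' if there exists some $f \in \calC$ with $f(\bx^{(i)}) = 1$ for every $i \in [m]$, and ``un-truncated'' otherwise. Computationally this requires a (brute-force) search over $\calC$, but the only thing being claimed is the sample complexity, which is the point of the proposition.

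For correctness in the truncated case (case 2), the argument is essentially immediate: if $\calD = \calP|_{f^{-1}(1)}$ for some $f^* \in \calC$, then by definition every sample from $\calD$ satisfies $f^*(\bx^{(i)}) = 1$, so $f^*$ witnesses the ``truncated'' output with probability $1$. There is no failure probability in this direction, matching what the proposition asks for.

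For the un-truncated case (case 1), I would invoke the standard VC uniform convergence theorem: for $m = \Theta(\vcdim(\calC)/\eps^2)$ i.i.d.\ samples from $\calP$, with probability at least $2/3$, every $f \in \calC$ satisfies
\[
\left| \tfrac{1}{m}\sum_{i=1}^{m} f(\bx^{(i)}) - \vol(f) \right| \;\leq\; \eps/2.
\]
Conditioned on this good event, for every $f \in \calC$ we have $\tfrac{1}{m}\sum_i f(\bx^{(i)}) \leq \vol(f) + \eps/2 \leq 1 - \eps/2 < 1$, so no $f \in \calC$ can have all $m$ samples as satisfying assignments, and the algorithm correctly outputs ``un-truncated.''

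There is no real obstacle here --- the statement is a straightforward consequence of the $\vol(f) \leq 1-\eps$ hypothesis combined with the fact that truncation \emph{forces} the empirical probability of the unknown truncating $f^*$ to be exactly $1$, creating a detectable gap of size $\eps$ from its true $\calP$-probability. The only mild technical choice is the additive slack $\eps/2$ used in the VC bound, which is what fixes the constant inside the $\Theta(\cdot)$. The bound on $\vcdim(\Pptf)$ being $O(n^d)$ (Corollary~9 of \cite{anthony1995classification}) then yields the $O(n^d/\eps^2)$ sample complexity stated in \Cref{obs:generic-upper-bound}.
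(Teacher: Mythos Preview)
Your proposal is correct and essentially identical to the paper's own proof: the same algorithm (output ``truncated'' iff some $f\in\calC$ is consistent with all samples), the same one-line argument for the truncated case, and the same VC uniform convergence bound for the un-truncated case, differing only in the immaterial choice of slack ($\eps/2$ versus the paper's $\eps/4$).
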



\begin{proof}
Consider the following simple algorithm: 
\begin{enumerate}
	\item Draw $T := \Theta\pbra{\frac{\vcdim(\calC)}{\epsilon^2}}$ samples $\{\x{1},\ldots, \x{T}\}$ from $\calD$. 
	\item If there exists a function $f \in \calC$ such that $f(\x{i}) =1$ for all $i \in [T]$, output ``truncated.'' Otherwise, output ``untruncated.''
\end{enumerate}


If $\calD = \calP|_{f^{-1}(1)}$ for some $f\in\calC$, then the algorithm will clearly always output ``truncated.'' It remains to show that when $\calD = \calP$, the algorithm outputs ``truncated'' with probability at most $1/3$. As $\vol(f) \leq 1-\eps$ for all $f\in \calC$, we have that for any fixed $f\in\calC$ the following holds by linearity of expectation: 
\[\Ex_{\x{1},\dots,\x{T}\sim\calP}\sbra{\frac{\#\cbra{i : f(\x{i}) = 1}}{T}} \leq 1-\eps.\]
By a standard application of the VC inequality~\cite{VC71} (see e.g.~Theorem 4.3 of \cite{AnthonyBartlett}) and recalling our choice of $T$, we get that 
\[\Prx_{\x{1},\dots,\x{T}\sim\calP}\sbra{\frac{1}{T}\cdot\sup_{f\in\calC}{\#\cbra{i : f(\x{i}) = 1}} \geq 1-\frac{\eps}{4}} \leq 0.01.\]
In particular, this implies that with probability at least 99/100, no function $f\in\calC$ is consistent with $f(\x{i})$ for $i\in[T]$ when $\calD = \calP$, completing the proof.
\end{proof}

\section{An $\Omega(n^{d/4})$-Sample Lower Bound for Testing Low-Volume PTF Truncation}
\label{appendix:weak-lower-bound}

In this section, we give a simple lower bound of $\Omega(n^{d/4})$-samples to test PTF truncation. In more detail, we show that at least $\Omega(n^{d/4})$-samples are required to distinguish the uniform distribution on the Boolean hypercube $\bn$ from the uniform distribution on $\bn$ truncated by a \emph{low-volume} degree-$d$ PTF; the volume of the truncating PTF is in fact be only $\Theta(n^{d/2}/2^{n})$.

Towards this, we first recall a useful lemma due to Aspnes et al.~\cite{ABF+:94} (whose proof only uses basic linear algebra):
\begin{lemma}[\cite{ABF+:94}, Lemma~2.1]\label{lem:ptf}
  Let $S$ be a set of points in $\{-1,1\}^n$ such that
  $|S| < \sum_{i=0}^{d/2} \binom{n}{i}$ for some even number $0 < d \leq n$. Then there exists a degree-$d$ PTF $f$ whose satisfying assignments are precisely the points in $S$, i.e.~$f(x) = 1$ for all $x \in S$ and $f(x) = 0$ for all $\bx\notin S$.
\end{lemma}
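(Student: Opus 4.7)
My plan is to prove the lemma via a dimension argument in the space of low-degree multilinear polynomials on the Boolean hypercube, combined with a squaring trick to upgrade a polynomial vanishing on $S$ into a degree-$d$ threshold representation.

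First, I would consider the vector space $V$ of real multilinear polynomials on $\{-1,1\}^n$ of degree at most $d/2$; since there is one basis element $\prod_{i \in T} x_i$ for each $T \subseteq [n]$ with $|T| \leq d/2$, we have $\dim V = D := \sum_{i=0}^{d/2}\binom{n}{i}$.  The evaluation map $\Phi : V \to \mathbb{R}^S$ sending $q \mapsto (q(s))_{s \in S}$ is linear, with codomain of dimension $|S| < D$. By rank-nullity, the kernel $K := \ker \Phi$ has dimension at least $D - |S| \geq 1$, so there exists a nonzero polynomial $q \in V$ vanishing on every point of $S$.

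Second, I would take a basis $q_1, \ldots, q_m$ of $K$ (where $m = \dim K \geq 1$) and form the sum-of-squares polynomial
\[
Q(x) := \sum_{i=1}^m q_i(x)^2.
\]
Then $Q$ is a polynomial of degree at most $d$ (since each $q_i$ has degree at most $d/2$), is nonnegative everywhere, and vanishes identically on $S$.  Provided the common zero set of $K$ inside $\{-1,1\}^n$ equals $S$ exactly, $Q$ is strictly positive on $\{-1,1\}^n \setminus S$. Picking any $\varepsilon$ with $0 < \varepsilon < \min_{y \notin S} Q(y)$ and setting $p(x) := \varepsilon - Q(x)$ then gives a degree-$d$ polynomial with $p(s) = \varepsilon > 0$ for all $s \in S$ and $p(y) < 0$ for all $y \in \{-1,1\}^n \setminus S$, so the degree-$d$ PTF $f(x) := \mathbf{1}\{p(x) \geq 0\}$ has satisfying assignments precisely the points of $S$.

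The main obstacle is the second step: verifying that the common zero set of $K$ on the hypercube is exactly $S$ rather than some proper superset. The dimension bound $\dim K \geq D - |S|$ alone guarantees only the existence of some polynomial vanishing on $S$, but a priori every polynomial in $K$ might share an additional common zero $y \notin S$ (equivalently, the evaluation functional $\mathrm{ev}_y$ might lie in $\mathrm{span}_{V^*}\{\mathrm{ev}_s : s \in S\}$). Handling this subtlety is the heart of the argument; one approach is to exploit the extra degree available in the PTF (degree $d$, not $d/2$) by using a genericity argument over the choice of the $q_i$ or by slightly enlarging the ambient polynomial space, so that the sum-of-squares polynomial $Q$ separates $S$ from every individual point of $S^c$.
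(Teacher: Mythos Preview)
The paper does not supply its own proof of this lemma; it simply cites \cite{ABF+:94} and remarks that ``the proof only uses basic linear algebra.''  So there is no in-paper argument to compare against directly.  Your dimension-count followed by sum-of-squares is the natural first attempt, and you have correctly isolated the crucial obstacle: the common zero set of $K$ on $\{-1,1\}^n$ can strictly contain $S$.

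This obstacle is genuine and your proposed fixes do not resolve it.  A concrete instance: take $n=3$, $d=2$ (so $d/2=1$ and $D=\sum_{i\le 1}\binom{3}{i}=4$) and $S=\{(1,1,1),(1,1,-1),(1,-1,1)\}$, so $|S|=3<D$.  The kernel $K$ of the evaluation map on degree-$\le 1$ polynomials is one-dimensional, spanned by $q(x)=1-x_1$; hence your polynomial $Q$ is forced to be a scalar multiple of $(1-x_1)^2=2-2x_1$, whose zero set on the cube is the entire face $\{x:x_1=1\}\supsetneq S$.  No ``generic'' choice of basis helps (there is only one direction in $K$), and ``enlarging the ambient polynomial space'' beyond degree $d/2$ would push the squares above degree $d$.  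In fact \emph{no} sum of squares of degree-$\le 1$ polynomials can vanish exactly on $S$ here: if $\sum_i q_i^2$ vanishes on $S$ then each $q_i$ vanishes on $S$, hence each $q_i\in K$ is a multiple of $1-x_1$, and the sum vanishes on all of $\{x_1=1\}$.  The lemma is nonetheless true in this example --- e.g.\ $p(x)=x_1+\tfrac{1}{2}(1+x_1)(x_2+x_3)$ is a degree-$2$ polynomial with $p\ge 0$ exactly on $S$ --- but producing such a $p$ requires a genuinely different linear-algebraic construction than squaring elements of $K$.  For that you would need to consult the original argument in \cite{ABF+:94}; your proposal as it stands leaves the heart of the proof unfinished.
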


We now present our lower bound and its proof.

\begin{proposition}
  Let ${\cal U}$ be the uniform distribution over $\bn$.
  Fix any constant $d$, and let $\Pptf^{\ast}$ be the class of Boolean functions on $\bn$
  such that each $f\in\Pptf^{\ast}$ is a degree-$d$ PTF $f: \bn\to \{0,1\}$ with 
  \[\vol(f) = {\frac {|f^{-1}(1)|}{2^n}}= \Theta(n^{d/2}/2^{n}).\]
  Let $A$ be any algorithm with the following property: Given sample access to an unknown distribution $\calD$ with 
  \[\calD \in \{\calU\} \cup \cbra{\calU|_{f^{-1}(1)} : f\in\Pptf^{\ast}},\] $A$ has the following performance guarantee:
  \begin{enumerate}
  	\item If $\calD = \calU$, then $A$ outputs ``un-truncated'' with probability $2/3$; and 
  	\item If $\calD = \calU|_{f^{-1}(1)}$ for some $f\in\Pptf^\ast$, then $A$ outputs ``truncated'' with probability $2/3$.
  \end{enumerate}
  Then $A$ must use at least $\Omega(n^{d/4})$ samples from $\calD$.
\end{proposition}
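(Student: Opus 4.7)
The approach is to exhibit a distribution over low-volume degree-$d$ PTFs for which distinguishing the truncated mixture from the uniform distribution $\calU$ reduces to the birthday problem, and then apply a Le Cam-style argument. Concretely, assume for simplicity that $d$ is even (the odd case follows by using a degree-$(d-1)$ PTF, which is also a degree-$d$ PTF, and noting that the resulting sample-complexity bound $\Omega(n^{(d-1)/4})$ changes the lower bound only by a constant factor in the exponent, or equivalently by adjusting the set size $m$ below up to absolute constants). Fix $m := \lfloor \tfrac{1}{2}\binom{n}{d/2} \rfloor = \Theta(n^{d/2})$, let $\bS$ be a uniformly random subset of $\{-1,1\}^n$ of size $m$, and use \Cref{lem:ptf} to obtain a degree-$d$ PTF $\boldf_{\bS}$ whose satisfying set is exactly $\bS$. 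Then $\boldf_{\bS}\in\Pptf^\ast$ since $\vol(\boldf_{\bS}) = m/2^n = \Theta(n^{d/2}/2^n)$.

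The plan is now to compare the following two distributions over $k$-tuples in $\{-1,1\}^n$, where $k$ is the number of samples used by $A$:
\begin{itemize}
\item $\mathcal{E}_1$: draw $k$ i.i.d.\ samples from $\calU$;
\item $\mathcal{E}_2$: draw $\bS$ as above, then draw $k$ i.i.d.\ samples from $\calU\mid_{\bS}$.
\end{itemize}
I will show that for $k \leq c\, n^{d/4}$ with $c$ sufficiently small, $\dtv(\mathcal{E}_1,\mathcal{E}_2) = o(1)$. The key observation is that conditional on the event $E$ that all $k$ samples are pairwise distinct, the two distributions are \emph{identical}. Indeed, $\mathcal{E}_1 \mid E$ is obviously uniform over ordered $k$-tuples of distinct points in $\{-1,1\}^n$; and for $\mathcal{E}_2 \mid E$, for any fixed such tuple $(x_1,\dots,x_k)$,
\[
\Pr_{\mathcal{E}_2}\!\left[\,(\bx^{(1)},\dots,\bx^{(k)}) = (x_1,\dots,x_k)\,\right]
= \frac{\binom{2^n - k}{m-k}}{\binom{2^n}{m}} \cdot m^{-k},
\]
which depends only on $k$ and not on the choice of distinct tuple, so $\mathcal{E}_2 \mid E$ is also uniform over distinct $k$-tuples. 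A standard birthday bound gives $\Pr_{\mathcal{E}_1}[E^c] \leq \binom{k}{2}/2^n$ and $\Pr_{\mathcal{E}_2}[E^c] \leq \binom{k}{2}/m$, both of which are $o(1)$ once $k = o(n^{d/4})$. Combining these facts,
\[
\dtv(\mathcal{E}_1,\mathcal{E}_2) \;\leq\; \Pr_{\mathcal{E}_1}[E^c] + \Pr_{\mathcal{E}_2}[E^c] \;=\; o(1).
\]

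To conclude, suppose toward contradiction that $A$ succeeds with only $k = o(n^{d/4})$ samples. Averaging the algorithm's success guarantee over the random choice of $\boldf := \boldf_{\bS}$, we get $\Pr_{\mathcal{E}_1}[A \text{ outputs ``un-truncated''}] \geq 2/3$ and $\Pr_{\mathcal{E}_2}[A \text{ outputs ``truncated''}] \geq 2/3$, which together force $\dtv(\mathcal{E}_1,\mathcal{E}_2) \geq 1/3$, contradicting the bound above. The main (mild) obstacle is handling the parity of $d$ when invoking \Cref{lem:ptf} and cleanly justifying the ``conditional on no collisions the two distributions coincide'' step; both are routine once set up as above.
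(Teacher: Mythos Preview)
Your proof is correct and follows essentially the same approach as the paper: randomize over a set $\bS$ of $\Theta(n^{d/2})$ points, invoke the Aspnes et al.\ lemma to realize $\bS$ exactly as the $1$-set of a degree-$d$ PTF, and use a birthday argument to show that $k=o(n^{d/4})$ samples from the resulting truncated distribution are statistically indistinguishable from uniform samples. Your write-up is in fact a bit more careful than the paper's (you explicitly verify that the two sample distributions coincide conditioned on distinctness, and you at least flag the parity-of-$d$ issue, which the paper's proof does not address).
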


\begin{proof}
  Consider choosing a degree-$d$ PTF over $\bn$ as follows:
  \begin{enumerate}
  	\item First, draw a collection $\bS$ of $N_{d/2} := \sum_{i=0}^{d/2} {n \choose i} - 1 = \Theta(n^{d/2})$ random points from $\bn$. 
  	\item Let $\boldf$ be a degree-$d$ PTF such that   
  	\[\boldf(x) = \begin{cases}
  		1 & x\in \bS\\
  		0 & x\notin\bS 
  	\end{cases}.\]
  	Note that the existence of such an $\boldf$ is guaranteed by \Cref{lem:ptf}.
  \end{enumerate}
  
  We will now show that no algorithm $A$ that draws $m = cn^{d/4}$ samples, for a sufficiently small absolute constant $c>0$, can distinguish between $\calU$ and $\calU|_{\boldf^{-1}(1)}$ with probability $2/3$, from which the result is immediate. 
When $\calD = \calU$, then note that the $m$ samples are distribution independently and uniformly over $\bn$. On the other hand, when $\calD = \calU|_{\boldf^{-1}(1)}$, then (i) by a standard application of the birthday paradox, since $|\boldf^{-1}(1)| \geq \Omega(n^{d/2})$, the $m$ samples are all distinct except with an arbitrarily small constant probability (depending on $c$); and (ii) conditioned on all the samples being distinct, they are also uniformly distributed over $\bn$ as the support of $\boldf$ (namely $\bS$) was chosen at random. 

It follows from (i) and (ii) that the total variation distance between $\calU$ and $\calU|_{\boldf^{-1}(1)}$ is at most an arbitrarily small constant, as needed to be shown.
\end{proof}

\end{document}